\def\ps@pprintTitle{%
 \let\@oddhead\@empty
 \let\@evenhead\@empty
 \def\@oddfoot{}%
 \let\@evenfoot\@oddfoot}
\newsavebox\CBox
\newcommand\hcancel[2][0.5pt]{%
  \ifmmode\sbox\CBox{$#2$}\else\sbox\CBox{#2}\fi%
  \makebox[0pt][l]{\usebox\CBox}%
  \rule[0.5\ht\CBox-#1/2]{\wd\CBox}{#1}}
\setlist{leftmargin=5.5mm}
\newcommand{\mfN}{\mathfrak{N}}
\newcommand{\Q}{{\mathbb Q}}
\newcommand{\e}{{\varepsilon}}
\newcommand{\N}{{\mathbb N}}
\newcommand{\E}{{\mathbb E}}
\newcommand{\V}{{\mathbb V}}
\newcommand{\Fbb}{{\mathbb F}}
\newcommand{\calA}{{\mathcal A }}
\newcommand{\calF}{{\mathcal F}}
\newcommand{\calS}{{\mathcal S}}
\newcommand{\supp}{\text{supp}}
\newcommand{\dsR}{\mathds{R}}
\newcommand{\R}{\dsR}
\newtheorem{theorem}{Theorem}
\newtheorem{lemma}{Lemma}
\newtheorem{proposition}{Proposition}
\newtheorem{assumption}{Assumption}
\begin{document}

\begin{frontmatter}

\title{\textbf{Mixing LSMC and PDE Methods to Price Bermudan Options}\tnoteref{t1}
\\[1em]
\textit{SIAM J. Financial Mathematics, Forthcoming}
}
\tnotetext[t1]{SJ (RGPIN-2018-05705 and RGPAS-2018-522715) and KJ (RGPIN-2016-05637) would like to thank NSERC for partially funding this work. The authors would like to thank three anonymous referees for helpful comments that ultimately improved the paper. The first version of this paper was posted on SSRN on November 16, 2016, and is available at https://ssrn.com/abstract=2870962. This version: \today}

\author[author1]{David Farahany}
\ead{dfarahany@gmail.com}

\author[author2]{Kenneth Jackson}
\ead{krj@cs.utoronto.ca}
\address[author2] {Department of Computer Science, University of Toronto}

\author[author1]{Sebastian Jaimungal}
\ead{sebastian.jaimungal@utoronto.ca}
\address[author1] {Department of Statistical Sciences, University of Toronto}

\begin{abstract}
We develop a mixed least squares Monte Carlo-partial differential equation (LSMC-PDE) method for pricing Bermudan style options on assets under stochastic volatility. The algorithm is formulated for an arbitrary number of assets and volatility processes and we prove the algorithm converges almost surely for a class of models. We also introduce a multi-level Monte-Carlo/multi-grid method to improve the algorithm's computational complexity. Our numerical examples focus on the single ($2d$) and multi-dimensional ($4d$) Heston models and we compare our hybrid algorithm with classical LSMC approaches. In each case, we find that the hybrid algorithm outperforms standard LSMC in terms of estimating prices and optimal exercise boundaries.
\end{abstract}

\end{frontmatter}

\section{Introduction}

In recent years, mixed Monte Carlo-partial differential equation (MC-PDE) methods for European options have seen an increase in research activity. In the context of stochastic volatility (SV) models with one-way coupling, these methods revolve around simulating the SV process, computing an expectation by solving a PDE conditional on the volatility path, and averaging over paths. The approach has been around for some time as in \cite{HW} and \cite{Lewis} but has seen renewed interest in \cite{Ang:2013zl}, \cite{LLP}, \cite{LP}, \cite{DJM}, \cite{DJS} and \cite{cozmareis}.

In the context of high-dimensional European option pricing problems, under stochastic volatility, finite difference methods cannot be readily applied, and the correlations between the underlying processes often make the system non-affine which rules out Fourier-based quadrature methods. Also, full Monte-Carlo (MC) methods applied to such systems suffer from high variance and computational costs. An alternative is to find a middle ground between the two approaches where one simulates the underlying volatility processes and solves the resulting lower dimensional conditional PDEs, which may often be handled efficiently. This mixed method results in dimension reduction from the PDE perspective and variance reduction from the MC perspective.

As previous research which utilizes this strategy is focused on pricing European style options and addresses the dimension and variance reduction in computing relevant expected values, we analyse the mixed MC-PDE framework for Bermudan style options. The Bermudan context requires dealing with a high dimensional PDE between exercise dates, along with a high dimensional grid for accurately locating the exercise region; the latter being an issue that arises when moving from European to Bermudan options. When pricing Bermudan options the primary object of interest is the optimal stopping policy and the exercise boundaries that it defines. We note that one can always approximate the price of an American style option by considering a Bermudan option with high number of exercise dates as discussed in \cite{Bouchard}.

To deal with the above issues, we develop a hybrid method which mixes the least squares Monte Carlo (LSMC) approaches of \cite{LS} and \cite{TV} with PDE techniques. The essence of our version of a mixed LSMC-PDE algorithm is to
\begin{center}
\begin{minipage}{0.75\textwidth}
\begin{enumerate}
\setlength\itemsep{-0.25em}
\item simulate paths of the underlying SV process,
\item solve the conditional expectation (using a PDE approach) along each path,
\item regress these conditional expectations onto a family of basis functions over the volatility state-space.
\end{enumerate}
\end{minipage}
\end{center}
The algorithm may be viewed as an extension of \cite{TV} and reduces the monitoring of a high dimensional grid by replacing the volatility dimensions with a few regression coefficients. The approach provides variance reduction from the Monte-Carlo perspective, dimension reduction from the PDE perspective and alters the regression problems that are solved at each time step such that they are simpler than in standard LSMC. Our approach has its roots in \cite{LLP} where it is very briefly mentioned, but not analysed. Our contribution is a precise development of the algorithm, proof of convergence, discussion of complexity and complexity reduction methods, followed by a series of numerical examples.

When applied to SV problems, LSMC tends to be fairly inaccurate in determining the optimal exercise boundaries. In the literature, there have been a few direct modifications to LSMC, applied to SV problems, such as in \cite{Ludkovski2}, and \cite{ludkov} that address this issue. There have also been other types of probabilistic approaches such as in \cite{Ait} and \cite{Sircar}. The approach of \cite{Ait} is specific to the Heston model and appears to be non-applicable to models outside the affine class. \cite{Sircar} develops a highly efficient method for multi-scale SV models. The work of \cite{Ludkovski2}, \cite{ludkov} cast LSMC as a classification problem using various experimental designs and regression methods, and, as one of their examples, consider a one dimensional mean-reverting SV model. The approach of \cite{SGBM} also looks promising, although it typically requires a choice of basis functions for which one can compute (or approximate) expectations in closed form, and need to be developed case-by-case. It's also worth noting the work of \cite{particle}, which deals with the related problem of Bermudan option pricing under unobservable SV.

The remainder of this paper is organized as follows. In Section \ref{section:LSMC-PDEalg}, we set up our model and provide the basic mechanics of the algorithm. In Section \ref{section:LSMC-PDEalg} and \ref{section:algpf} we develop theoretical aspects of the algorithm such as formalizing the underlying probability spaces, deriving expressions for the regression coefficients, and showing the algorithm converges almost surely for pure-SV models. Section \ref{section:complexity} shows how Multi-Level Monte Carlo/multi-grids may be incorporated and also gives an overview of the overall algorithm's complexity. In Section \ref{section:NumericalExamples} we apply the algorithm to the Heston and multi-dimensional Heston model and present estimates of prices and optimal exercise boundaries. These results are also compared to finite difference and standard LSMC approaches.

\section{A Hybrid LSMC/PDE algorithm}
\label{section:LSMC-PDEalg}

\subsection{The Model}

We suppose the existence of a probability space $(\Omega,\Fbb,\Q)$ which may accommodate a $d_S + d_v$ dimensional stochastic process $(S_t,v_t) = (S^{(1)}_{t},...,S^{(d_s)}_{t},v^{(1)}_{t},...v^{(d_v)}_{t})_{t\in[0,T]}$ satisfying a system of SDEs with a strong, unique solution. We begin by defining mappings
\begin{alignat*}{5}
\mu_S   :  [0,T] \times   \R^{d_v} && \to \R^{d_S} \hspace{2cm} && \sigma_S  : [0,T] \times \R^{d_v} & \to \R^{d_S} \\
\mu_v   :  [0,T] \times  \R^{d_v}&&         \to \R^{d_v} \hspace{2cm} && \sigma_v  : [0,T] \times \R^{d_v} & \to \R^{d_v} \
\end{alignat*}
and a $d_S + d_v$-dimensional Brownian motion, $W=(W_t^S,W^v_t)_{t\in[0,T]}$, with correlation matrix, $\rho$. The process $(S_t,v_t)_{t\in[0,T]}$ is assumed to satisfy the following system of SDEs
\begin{alignat*}{6}
&dS_t = S_t \odot && \mu_S(t,v_t) && \ dt  \ +  S_t \odot  && \sigma_S(t,v_t)  \odot \  && dW^{S}_t  \\
&dv_t =           && \mu_v(t,v_t) && \ dt  \ +             && \sigma_v(t,v_t)   \odot \ && dW^{v}_t   \
\end{alignat*}
where for $x,y \in \R^m$ and  $x \odot y := \left(x_1 \  y_1, \ldots, x_m \ y_m\right)$ is element-wise product. As we shall see, our approach allows for an arbitrary (as long as it is positive semi-definite) stock-volatility correlation structure.

The above SDE classifies $(S_t,v_t)$ as a pure SV model, examples of which have been developed in \cite{Stein}, \cite{Heston},  \cite{fastMR}, and \cite{4o2Heston} among others. The work in this paper may also be extended easily to multi-factor pure SV models such as in \cite{dHeston}. While the algorithm we discuss in this paper may conceivably be applied to SV models with a non-linear local volatility (LV) component, our derivations, numerical examples, and proof of convergence pertain to only pure SV models.  Another property of $(S_t,v_t)$ is that it exhibits `one-way coupling': the SDE for $v_t$ may be simulated independently of $S_t$. We shall make this notion more precise in Section \ref{sec:derivcondpde}.

\subsection{Bermudan Option Pricing}

Let $\{t_0,...,t_M\} \subset [0,T]$ be an ordered set of exercise dates with $\Delta t_k = t_{k+1} - t_{k}$ and $h_{t_i}(S) : \R^{d_S} \to \R$ be our exercise function at each date. We often suppress the subscript $k$ in $\Delta t_k$ and $h_{t_k}$ to simplify notation when the context is clear. We also suppose the risk free rate is a constant, $r > 0$.

Valuing a Bermudan option requires developing an algorithm for evaluating
$V_t = \sup_{\tau \in T_t} \E^{\Q}\left[ \left.e^{-r  \tau}h_{\tau}(S_{\tau} )\right. \mid \calF_t \right]$ where $\calF_t = \calF^{S}_t  \vee \calF^{v}_t $ and $T_t$ is the set of $\mathcal F$-stopping times taking values in $\{t_k: k\in\{1,\dots,M\} \text{ and } t_k > t\}$. By the Markov property, $V_t$ depends only on $(t,S_t,v_t)$, and we can write $V_t=V(t,S_t,v_t)$ for some function $V:\mathds R_+\times \mathds R^{d_S}\times \R^{d_v}\mapsto \mathds R$.

At time $T = t_M$ we have $V(t_M,S_{t_M},v_{t_M}) = h_{t_M}(S_{t_M})$. We then define a new function, $C(t,S,v)$, denoted as the continuation value, at times $t_k$ by
\begin{equation*}
C(t_k,S,v) = e^{-r \Delta t_k} \;\E^{\Q}\left[ \left.V(t_{k+1},S_{t_{k+1}},v_{t_{k+1}}  )\right. \mid \left.S_{t_k} = S, v_{t_k} = v\right. \right].
\end{equation*}
By the discrete dynamic programming principle (DPP), for $k < M$ , we may express $V(t_k,S_{t_k},v_{t_k})$ as the maximum of the continuation value and the immediate exercise value at $t_k$:
\begin{equation*}
V_{t_{k}} = \max\left( h_{t_k}(S_{t_k}) \;, \;C(t_k,S_{t_k},v_{t_k}) \right)\;.
\end{equation*}
From this point on, for notational simplicity, we condense notation and replace our $t_k$ subscripts with simply $k$. We also replace arguments depending on time by subscripts.

\subsection{Algorithm Overview}

We now describe a hybrid-method for computing $V_k(S,v)$ which is based on the \cite{TV} approach, but uses conditional PDEs to incorporate dimensional and variance reduction. We begin by giving an intuitive explanation and provide a formal, pseudo-code based, description in \ref{app:algstate}.

We simulate $N$ paths of $v$ starting from the initial value $v_0$. Each path of $v_t$ over $[t_k,t_{k+1}]$ is represented as $[v]_{k}^{k+1}$. Given a product set $\mathcal{S} \subset \R^{d_S}$, we compute $V_k$ over the domain $\mathcal{S} \times \R^{d_v}$. The set $\calS$ is the domain of the conditional expectations that we compute; in practice, it is the grid for our numerical PDE solver. We suppose the discretized form of $\calS$ has $N_{s}$ points in each dimension so that there are $N^{d_S}_s$ points in total. Given the value $V_{k+1}(S_{k+1},v_{k+1})$ of the option at time $t_{k+1}$ we proceed to compute the continuation value at $t_k$. The algorithm begins at time $t_M = T$ where $V_M(S_M,v_M) = h_M(S_M)$.

\subsubsection{Solving along $S$ to obtain the pre-surface}\label{sec:step1}
For each simulation path $j\in\mfN$ ($\mfN=\{1,\dots,N\}$), we compute (beginning with $k=M-1$)
\begin{equation}
C_k^j(s_i):=e^{-r\Delta t}\;\E^{\Q} \left[ \; \left.V_{k+1}(S_{k+1},v_{k+1}) \; \right.| \;S_{k} = s_i\,, \;[v^j]_{k}^{k+1}  \; \right]
\label{eqn:condexp}
\end{equation}
for all $s_i \in \mathcal{S}$. These may be computed simultaneously over $\mathcal{S}$ for each path using a numerical PDE solver. The details of this step may be found in Section \ref{sec:derivcondpde}, below.

\begin{figure}
\centering
  \centering
  \includegraphics[width=\textwidth]{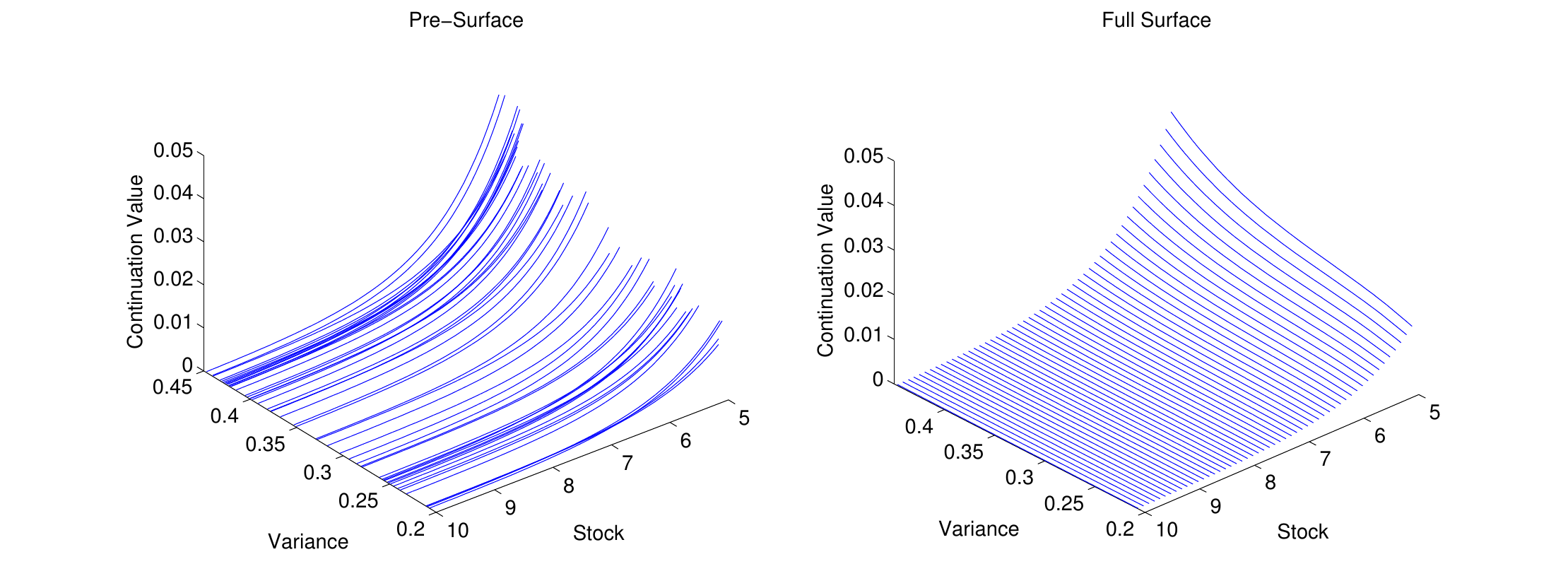}
  \caption{The pre and completed continuation surface. The pre-surface is generated by solving a PDE along each variance path. It is smooth along the $S$ axis, and noisy across the $v$-axis. The completed surface is generated by regressing across the $v$-axis.}
  \label{fig:test1}
\end{figure}

\subsubsection{Regress across $v$ to obtain the completed surface}\label{sec:step2}

For each $s_j \in \mathcal{S}$, from the previous step, we have $N$ realizations of the continuation value along each volatility path, i.e., $\{C_k^j(s_i)\}_{j\in\mfN}$.

Next, apply least-squares regression to project this onto a family  $\{\phi_m(\cdot)\}_{m=1}^{d_B}$ of linearly independent basis functions over our volatility space. This results in a vector of coefficients $a(s_i)$ of length $d_B$, and provides the continuation value at $S_{t_{k}} = s_i$ for any point in the volatility space as follows:
\begin{equation*}
C_{k}(s_i,v) = \sum_{m=1}^{d_B} a_{m,k}(s_i)\phi_m(v) = a_{k}(s_i)\cdot \phi(v).
\end{equation*}

\subsubsection{Obtaining the Option Price}

The price of the option is then given by $V_k(s_i,v) = \max( h_{k}(s_i), C_k(s_i,v) )$. These steps are repeated from \ref{sec:step1}  and \ref{sec:step2}  for all times $t_k$ where $k = M-1,...,1$.


\subsubsection{A Direct Estimate on the Time Zero Price}

Since at time zero, there is only a single value for $v_0$, we obtain an estimate for our time-zero prices by
\begin{equation*}
V_{d,0}(s_i,v_0) = \max\left(h(s_i) \;,\; \tfrac{1}{N}\sum_{j=1}^{N}e^{-r\Delta t}\;\E^{\Q}\left[ \left. V_1(S_{1},v_1) \; \right.| \;  S_{0} = s_i\;,\; [v^j]_{0}^{1} \;   \right] \;\right)\,
\end{equation*}
which is often biased high. Following \cite{SGBM}, we call this the direct estimator.

\subsubsection{A Lower Estimate on the Time Zero Price}

Given our estimated regression coefficients, we obtain a sub-optimal exercise policy $\tau(t,S,v)$ defined on $\{t_1,...,t_{M-1} \} \times \calS \times \R^{d_v}$. Thus, we may define a lower estimate via the expectation
\begin{equation}
\E^{\Q} \left[ \; e^{-r \tau} h( S_{\tau} ) \mid  S_0, v_0 \; \right].
\label{eqn:lest}
\end{equation}
In traditional LSMC, one simulates a new independent set of paths $(S_t,v_t)$ to approximate (\ref{eqn:lest}). In the class of models we study, simulating both $S_t$ and $v_t$ undermines the variance reduction obtained by the algorithm and we instead use a hybrid approach.

To this end, we denote the $t_k$ holding and exercise regions by $\Gamma_k$ and $\Gamma^c_k$, respectively. We then simulate $N$ new independent paths of $v_t$ on $[0,T]$, compute
\begin{equation}
\E^{\Q}\left[ \; e^{-r \tau } h( S_{\tau})  \mid  S_0, [v^j]_0^T   \; \right]
\label{eqn:lestcond}
\end{equation}
via a PDE approach for $j\in\mfN$, and take the average. To compute (\ref{eqn:lestcond}), for each $j\in \mfN$, first set $V^j_M(S,v) = h_M(S)$. Next, compute
\begin{equation*}
U^j_{M-1}(S) = e^{-r\Delta t}\,  \E^{\Q} \left[   V^j_M(S_M,v_M) \mid S_{t_{M-1}} = S , [v^j]_{M-1}^M  \right]
\end{equation*}
via a PDE method for all $S \in \calS$. The option price at time $t = t_{M-1}$ is then given by
\begin{equation*}
V^j_{M-1}(S) = U^{j}_{M-1}(S) \cdot I_{(S,v^j_{M-1}) \in \Gamma_{M-1}} +  h_{M-1}(S) \cdot I_{(S,v^j_{M-1}) \in \Gamma^{c}_{M-1}}
\end{equation*}
After repeating this procedure for times $k = M-2,...,1$ we obtain the lower estimate
\begin{equation}
V_{l,0}(S) = \frac{1}{N} \sum_{j=1}^{N} e^{-r \Delta t}\,  \E^{\Q} \left[ \;   V^j_1(S_1,v_1)  \mid S_{t_{1}} = S , [v^j]_{0}^1 \;   \right]
\label{eqn:lowestim}
\end{equation}
for all $S \in \calS$.

\subsection{Numerical Computation of Conditional Expectations}
\label{sec:derivcondpde}

In this section we provide the details on how to numerically compute expressions of the form
\begin{equation*}
\E^{\Q} \left[ \; \left. h(S_{t_{n+1}},v_{t_{n+1}}) \; \right.| \;S_{t_n} = S \,, \;[v]_{t_n}^{t_{n+1}}  \; \right]
\end{equation*}
using numerical solutions of PDEs. Our derivation of the conditional PDE follows the methodology of \cite{DJS}, and our numerical method to solve the conditional PDEs is Fourier Space Time-Stepping (FST) as developed in \cite{JJS}.

\subsubsection{Change of Variables}

We begin by converting our asset price SDE into log-coordinates $X_t:= \log S_t$ yielding the system
\begin{alignat*}{3}
dX_t =&& (\mu_S(t,v_t) - \frac{1}{2}\sigma_S(t,v_t)^2 ) \ && dt + \sigma_S(t,v_t) \odot dW^S_t \\
dv_t =&&     \mu_v(t,v_t) \ && dt + \sigma_v(t,v_t) \odot dW^v_t \
\end{alignat*}
where $\sigma_S(t,v)^2 := \sigma_S(t,v)\odot\sigma_S(t,v)$. Next, applying the Cholesky decomposition to the matrix $\rho = [\rho_{ij}]$, we find an upper triangular matrix $A = [a_{ij}]$ and Brownian motion $\widetilde{W} := (\widetilde{W}^S,\widetilde{W}^v)$ such that $W= A \cdot \widetilde{W}$ and $\widetilde{W}$ has independent components. Next, we decompose $A$ in block-form as
\begin{equation*}
A=
\left[
\begin{array}{c|c}
A_{SS} & A_{Sv} \\
\hline
A_{vS} & A_{vv}
\end{array}
\right]
\end{equation*}
where
\begin{itemize}
\setlength\itemsep{-0.25em}
\item $[A_{SS}]_{i,j} = A_{i,j}$  for $1\leq i \leq d_{S}$,\ $ 1\leq j \leq d_{S}$
\item $[A_{Sv}]_{i,j} = A_{i,j+d_S}$   for $1\leq i \leq d_{S}$,\ $1\leq j \leq  d_v$
\item $[A_{vS}]_{i,j} = A_{i+d_S,j}$  for $ 1\leq i \leq d_v$,\  $1\leq j \leq d_{S}$
\item $[A_{vv}]_{i,j} = A_{i +d_S, j+d_S}$  for $1 \leq i \leq  d_v$, \ $ 1\leq j \leq d_{v} $.
\end{itemize}

We then re-write the system as:
\begin{alignat*}{6}
dX_t =&& (\mu_S(t,v_t) - \tfrac{1}{2}\sigma_S(t,v_t)^2 ) \ dt && + \sum_{j=1}^{d_S} \sigma_S(t,v_t) \odot [A_{SS}]_j \odot d\widetilde{W}_t^{S,j} && + \sum_{j=1}^{d_v} \sigma_S(t,v_t) \odot  [A_{Sv}]_j \odot d\widetilde{W}_t^{v,j}  \\
dv_t = && \mu_v(t,v_t) \ dt && \ &&+  \sum_{j=1}^{d_v} \sigma_v(t,v_t) \odot [A_{vv}]_j \odot d\widetilde{W}_t^{v,j}    \
\end{alignat*}
where $[A_{\alpha\beta}]_i$ is the $i$th column of $A_{\alpha \beta}$.

In this form, $v_t$ may be simulated first, then inserted into $X_t$. To this end, define two new processes $Y_t$, $Z_t$ on $[t_n,t_{n+1}]$
\begin{align*}
Y_t &= X_{t_n} + \int_{t_n}^t  (\mu_S(t,v_s) - \frac{1}{2}\sigma_S(t,v_t)^2 ) \ ds + \sum_{j=1}^{d_S} \int_{t_n}^t  \sigma_S(s,v_s) \odot [A_{SS}]_j \odot d\widetilde{W}_s^{S,j}\\
Z_t &=  \sum_{j=1}^{d_v}  \int_{t_n}^t \sigma_S(t,v_s) \odot [A_{Sv}]_j \odot d\widetilde{W}_s^{v,j}  \
\end{align*}
and note that $X_{t_n} = Y_{t_n}$ and $X_t=Y_t+Z_t$. Next, define the function $g(X,v) := h(\exp(X),v)$ so that
\begin{align*}
\E^{\Q}\left[ h(S_{n+1},v_{n+1} ) \ \mid \ S_n = S, [v]_n^{n+1}  \right] &= \E^{\Q}\left[  g(X_{n+1},v_{n+1}) \ \mid \ X_n = x, [v]_n^{n+1}  \right] \\
&= \E^{\Q}\left[  g(Y_{n+1} + Z_{n+1},v_{n+1}) \ \mid \ Y_{n} = y, [v]_n^{n+1}   \right].  \
\end{align*}

\subsubsection{Conditional PDE and Fourier Solution}

Finally, we re-write our original expectation in terms of the function $u(t_n,y)$
\begin{equation*}
u(t_n,y) = \E^{\Q}\left[ \ g(Y_{t_{n+1}} + Z_{t_{n+1} } ,v_{t_{n+1}}) \ \mid \ Y_{t_n} = y , \ [v]_{t_n}^{t_{n+1}}  \ \right]
\end{equation*}
treating $[v]_{t_n}^{t_{n+1}}$ as a deterministic path for $y \in \log \calS$, $t \in [t_n,t_{n+1}]$. By the Feynman-Kac theorem, the function $u(t_n,y)$ may be written as the solution of the following PDE
\begin{equation}
\label{eq:mainpde}
\left\{
\begin{split}
\partial_t u(t,y)+  \sum_{i=1}^{d_S} \ a_i(t) \partial_{y_i} u(t,y)
+\tfrac{1}{2}  \sum_{i=1}^{d_S} \sum_{j=1}^{d_S} \ b_{i,j}(t) \  \partial_{y_i,y_j}^2 u(t,y)
&= 0\,,
\\[0.25em]
u(t_{n+1},y)  &= g(y + Z_{t_{n+1}},v_{t_{n+1} })\,
\end{split}
\right.
\end{equation}
where $a(t) = \mu_S(t,v_t) - \frac{1}{2}\sigma_S(t,v_t)^2$ and $b_{i,j}(t) = \sum_{j=i}^{d_S} [\sigma_S(t,v_t)]_{j}^2  [A_{SS}]^2_{i,j} $.

Applying the Fourier transform to (\ref{eq:mainpde}) over $\R^{d_S}$ we obtain
\begin{equation}
\left\{
\begin{split}
\partial_t \widehat{u}(t,\omega) +  \left( i \sum_{j=1}^{d_S} \ a_j(t)  \omega_j -  \tfrac{1}{2}\sum_{i=1}^{d_S} \sum_{j=1}^{d_S} \ b_{i,j}(t) \omega_i\omega_j  \right) \widehat{u}(t,y)
&= 0\,,  \\[0.25em]
\widehat{u}(t_{n+1},\omega) &= e^{i \omega \cdot Z_{t_{n+1} } } \widehat{g}(\omega,v_{t_{n+1}})\,,
\end{split}
\right.
\end{equation}
where $\omega \cdot Z_{t_{n+1} }$ denotes the inner product on $\R^{d_S}$. For each $\omega$ in Fourier space, this ODE may be solved analytically to obtain the following
\begin{align*}
\widehat{u}(t,\omega) &= \widehat{u}(t_{n+1},\omega) \cdot \exp\left(  i\sum_{i=1}^{d_S} \omega_i\int_{t}^{t_{n+1}} a_i(s) ds  - \frac{1}{2} \sum_{i=1}^{d_S}\sum_{j=1}^{d_S} \omega_i \omega_j \int_t^{t_{n+1}} b_{i,j}(s) ds   \right)\\
&= \widehat{g}(\omega,v_{t_{n+1} } ) \cdot \exp\left( i \omega \cdot Z_{t_{n+1}} + i\sum_{i=1}^{d_S} \omega_i\int_{t}^{t_{n+1}} a_i(s) ds  - \frac{1}{2} \sum_{i=1}^{d_S}\sum_{j=1}^{d_S} \omega_i \omega_j \int_t^{t_{n+1}} b_{i,j}(s) ds   \right). \
\end{align*}
\subsubsection{Numerical Approximation}
Defining the characteristic exponent
\begin{equation*}
\Psi(\omega,t_{n},t_{n+1} )  = i \omega \cdot Z_{t_{n+1}} + i\sum_{i=1}^{d_S} \omega_i\int_{t_n}^{t_{n+1}} a_i(s) ds  - \frac{1}{2} \sum_{i=1}^{d_S}\sum_{j=1}^{d_S} \omega_i \omega_j \int_{t_n}^{t_{n+1}} b_{i,j}(s) ds
\end{equation*}
or more compactly $\Psi_{n,n+1} := \Psi(\omega,t_{n},t_{n+1} )$ and using the FST's discretization methodology with fast Fourier transforms (FFT), we have the following recursion
\begin{equation*}
u_n = \text{FFT}^{-1}_{d_S}\left[  \text{FFT}_{d_S}\left[ g_{n+1}  \right] \exp\left( \Psi_{n,n+1}  \right)    \right]
\end{equation*}
where $g_n, u_n$ are the discretizations of $g(t_n, y), u(t_n,y)$, respectively, and $\text{FFT}_{d_S}$ denotes the $d_S$-dimensional Fast-Fourier Transform. A change of variables back to $S$-space then provides an numerical approximation of the original expectation.

\subsection{Discussion of Algorithm}

We refer the reader to \ref{app:algstate} for a pseudo-code based formal description.



In Section \ref{section:NumericalExamples}, we show that the algorithm accurately, for a given computational budget, determines the time-zero value surface and optimal exercise regions. This may be attributed to the presence of our PDE grid, $\calS$. When the conditional PDEs are solved along each path, we obtain our pre-surface as described in Subsection \ref{sec:step1}. At this point one has two choices: global or local regression. Our regression approach can be viewed as a special type of dimension reduced, local regression which is tailored to the presence of $\calS$ and is equivalent to local regression onto $N_s^{d_S}$ carefully chosen regions. If $N_s = 128$ and $d_S = d_v = 2$, we are regressing onto $16,384$ families of basis functions at the cost of inverting a single matrix of size $d_B \times d_B$ where $d_B$ is about $10$, and a single matrix-multiplication for each $S \in \calS$; the latter step introducing non-trivial costs as $d_S$ increases. The fact that we must only invert a single regression matrix, independent of $S \in \calS$, at each time step will be made more clear in Section \ref{section:algpf}. Also, $C_n(s_j,v)$ is typically simple to fit as a function of $v$ and seems to resist the Basis Selection Problem.

Working with $\calS$ has other advantages as well. In comparison to standard approaches to LSMC, there is a fundamental shift in how we compare the continuation value to the exercise value and locate the exercise boundary. At time $n$, when setting the value of $V_n$ for each $s_i \in \calS$ we have
\begin{equation*}
V^N_n(s_i,v) = \max \left( h_n(s_i), C^{N}_n\left(s_i,v\right) \right)\,,
\label{eqn:compare}
\end{equation*}
and note that $h_n(s_i)$ is a deterministic constant as opposed to a function of a random variable. Thus, we have reduced the problem of locating the boundary from a global problem over $(S,v)$-space to a sequence of lower dimensional problems which are simpler in nature and exhibit less noise. Also, compared to LSMC, instead of an estimate of $V_n$ at a single point, $(S_0,v_0) \in \R^{d_S + d_v}$, one obtains a semi-global solution, i.e. a value for all $S \in \calS$ and $v_0 \in \R^{d_v}$, owing to the PDE aspect of the algorithm. Obtaining a solution for all $S \in \calS$ allows us to compute sensitivities with respect to $S$ with little extra computation.

Generally speaking, deriving the conditional PDE is non-trivial and to the best of our knowledge there are two approaches in the literature: the drift discretization method in \cite{LLP}, \cite{cozmareis}, \cite{DJM} and the conditionally-affine decomposition of \cite{DJS}. The drift-discretization approach is fairly general and may be applied to a wide class of models, including those with a non-linear LV component, i.e. stochastic LV models. The resulting PDEs may then be solved using a finite difference type approach as in \cite{LLP}. We follow the conditionally-affine approach as it applies to our pure-SV model and avoids time stepping error and drift discretization.

Finally, from the PDE perspective, we begin with a deterministic PDE problem and, using one-way coupling, we convert it to a lower-dimensional stochastic PDE (SPDE) problem. The SPDEs we solve are essentially Black-Scholes equations where the coefficients and terminal condition depend on the random simulated process, $v$. The multi-period aspect of the problem is then handled using regression as outlined above. In light of the connection to SPDEs, in Section \ref{section:complexity}, we develop a generic Multi-Level Monte-Carlo (MLMC) scheme which completes the algorithm.

\section{Theoretical Aspects of the Algorithm}
\label{section:algpf}

In this section we provide a more theoretical perspective on the mechanics of the algorithm. We begin in Subsection \ref{subsection:condexptnsamplingspace}, where we introduce notions that are needed for the convergence result. In Section \ref{subsection:condexptnsamplingspace} we derive expressions for the estimated regression coefficients. Subsection \ref{subsection:truncationscheme} introduces a truncation scheme to ensure the algorithm is well-defined and Subsection \ref{subsection:asconvergence} shows the algorithm convergence almost surely under certain extra continuity conditions.

\subsection{Notation}

We denote Euclidean norms of elements $x \in \R^n$ or $ x \in \mathds{M}_{n\times m}(\R)$ by $|x|$. Given any function $h: \R^n \to \R$, we write $||h||_{\infty} := \sup_{x \in \R^n} |h(x)|$ and $\supp \ h  :=  \text{cl}\{ x \in \R^n \ | \   |h(x)| > 0  \}  $. Letting $X$ be an open subset of $\R^{n}$ we define $C_0(X)$ to be the set of continuous functions on $f: X \to \R$ that vanish at infinity. By vanish at infinity, we mean that for every $\e > 0$, the set $\{ x \in X \ | \ |f(x)| > \e\}$ is compact. We also let $C_c(X)$ be the set of compactly supported, continuous functions on $X$.

\subsection{Conditional Expectations and the Sampling Space}
\label{subsection:condexptnsamplingspace}
\subsubsection{Filtrations and Conditional Probabilities }

We recall the existence of a probability space $(\Omega,\Fbb,\Q)$ which accommodates the process $(S_t,v_t)$ satisfying a system of SDEs with a strong, unique solution. In this section, we make rigorous the notion of conditioning on a volatility path.

Let $\calF^{v}_{s,t}=\sigma( v_u)_{u\in[s,t]}$, $\calF^S_{s,t}=\sigma( S_u)_{u\in[s,t]}$ and $\calF^{W^v}_{s,t}=\sigma(W^v_u)_{u \in[s,t]}$ i.e., the natural filtrations generated by $v,S$ and $W^v$, respectively. To extend this notation, we sometimes write $\calF^{Z}_t := \calF^{Z}_{0,t}$ for some process $Z$. Given $ t_n \in [0,T]$, we define a new class of (conditional) probability measures $\Q_{t_n,S}$ via $\Q_{t_n,S}(B) = \Q( B \ | \ S_{t_n} = S)$ for $B \in \calF^S_{t_n,T} \vee \calF^v_{t_n,T}$.

\subsubsection{Conditioning on $[v]$}
In this Section, we make concise the notion of conditioning on a volatility path $[v]$.

For a realization of $v_t$ and $W^v_t$ on $[t_n,t_{n+1}]$, which we denote as $[v]_{t_n}^{t_{n+1}}$, there exists a finite-dimensional statistic of the path,
\begin{equation*}
\Lambda_n : C_0([t_n,t_{n+1}])^{d_v + d_{W^v}} \rightarrow \R^{d_{\Lambda}},
\end{equation*}
such that the following Markovian-like relation holds
\begin{align}
&\E^{\Q}[ \ h(S_{t_{n+1}}, v_{t_{n+1}}  ) \ | \ S_{t_n} = S \ , \  \calF^{W^v}_{t_{n} ,t_{n+1} } \   ]  \nonumber \\
&\quad= \E^{\Q}[ \  h(S_{t_{n+1}}, v_{t_{n+1}}   ) \ | \ S_{t_n} = S \ , \ \Lambda_n([v]^{t_{n+1}}_{t_n}) , \ v_{t_{n+1}} \ ]. \
\label{eqn:markov}
\end{align}
This can be seen from the solution to the derived conditional PDE in Section \ref{sec:derivcondpde} which depend only on the vectors $\left[\int_{t_n}^{t_{n+1}} a_i(s) \ ds\right]_i$, $\left[\int_{t_{n}}^{t_{n+1}} b_{i,j}(s) \ ds \right]_{i,j} $, and $Z_{t_{n+1}}$. In our algorithm, when computing regression coefficients, we encounter expectations of the form
\begin{equation*}
\E\left[ \, \left.\phi(v_{t_n} ) h(S_{t_{n+1}},v_{t_{n+1}}) \, \right| \ S_{t_n} = S \ , \ \calF^{W^v}_{t_{n},t_{n+1} } \, \right]
\end{equation*}
where $\phi: \R^{d_v} \to \R$ which depend on the statistic $\Theta_n([v]_n^{n+1}) := (v_{t_n}, \Lambda_n([v]_n^{n+1}), v_{t_{n+1}}). $ To condense notation, we again replace $\Theta_n([v]_n^{n+1})$ with $[v]_{t_{n}}^{t_{n+1}}$ and simply write
\begin{equation*}
  \E^{\Q}\left[ \left. \  \phi(v_{t_n})  h( \ S_{t_{n+1}},   v_{t_{n+1}} \  ) \ \right| \ S_{t_n} = S \ , \  [v]^{{t_{n+1} }}_{t_n}   \ \right]:=
  \E^{\Q}\left[ \  \left. \phi(v_{t_n})  h(S_{t_{n+1}}, v_{t_{n+1}} ) \ \right| \ S_{t_n} = S \ , \Theta_n([v]_n^{n+1})  \ \right].
\end{equation*}

\subsubsection{Re-expressing the Conditional Expectation }

Equation (\ref{eqn:markov}) gives rise to the mappings $G_{f,t_n,S}$, defined by
\begin{equation}
\begin{split}
G_{f,t_n,S} & : \R^{d_{\Theta}} \to \R \,, \\
G_{f,t_n,S}( \theta ) & = \E^{\Q} \left.[ \ f(S_{t_{n+1} }, v_{t_{n+1}},v_{n} ) \ | \ S_{t_n} = S, \  \Theta_n = \theta \ \right.] \
\end{split}
\label{eqn:Gfn}
\end{equation}
and the conditional probability measures $\Q_{t_n,S,\theta}$ defined via
\begin{equation*}
\Q_{t_n,S,\theta}(B) : = \Q( \ B \mid S_{t_n} = S, \  \Theta_n = \theta  )  \
\end{equation*}
for $B \in \sigma( S_{t_{n+1}} ) \vee \sigma(v_{t_{n+1}}) \vee \sigma(v_{t_n})$. Letting $\widetilde{\Q}_{ \Theta_n}$ denote the distribution of $\Theta_n$ on $\R^{d_{\Theta}}$ our conditional expectation may be written as
\begin{equation*}
\E^{\Q}\left.[ f( S_{t_{n+1}}, v_{t_{n+1}}, v_{t_{n} }   ) \mid S_{t_n} = S, \Theta_n = \theta \right.] \  =   \int_{\Omega} f(S_{t_{n+1} }(\omega), v_{t_{n+1}}, v_{t_n}  ) \ d\Q_{t_n , S,\theta} (\omega)
\end{equation*}
so that we have the following relation
\begin{align}
& \E^{\Q}\left.[ \ f(S_{t_{n+1}},  v_{t_{n+1}} , v_{t_n}) \mid S_{t_n} = S \right.]  \nonumber \\
&\quad= \int_{\R^{d_{\Theta}} } \int_{\Omega} f(S_{t_{n+1} }(\omega), v_{t_{n+1}},v_{t_n}  ) \ d\Q_{t_n , S,\theta} (\omega ) \ d\widetilde{\Q}_{\Theta_n} (\theta) . \
\label{eqn:mrelation}
\end{align}

\subsubsection{Inherited Sampling Probability Space}
\label{sec:SamplingSpace}

A consequence of one-way coupling is the ability to simulate paths of $v_t$ independently of $S_t$. We now make sense of the notion of an iid collection of sample paths of $v_t$. Since we only realize $v_t$ through the statistics $\Theta_n$, we only describe how to generate iid copies of $\Theta_n$.

Denote the ordered subset $\{0 = t_0,...,t_n,t_{n+1},...,t_N = T \} \subset [0,T] $, and let $\{ [t_n,t_{n+1}] \}_{n=0}^{N-1}$ be the corresponding intervals. Given a path $v_t$ on $[0,T]$, define the $d_{\Theta} \times N$ dimensional matrix
\begin{equation*}
\Theta([v]) = \left.[ \ \Theta_0( [v]^{1}_0) \ , \ldots , \ \Theta_{N-1}( [v]^{N}_{N-1})  \right.].
\end{equation*}
This random matrix induces a measure $\widetilde{\Q}_{\Theta}$ on $\R^{Nd_{\Theta}}$. Given $\widetilde{\Q}_{\Theta}$, we introduce a new probability space $( \Omega', \calF',\Q')$ equipped with a collection of independent random matrices
\begin{equation*}
\left.\{ \Theta([v^j] ) \right.\}_{j=1}^{\infty}\,,
\end{equation*}
such that each $\Theta([ v^j ]) $ has distribution $\widetilde{\Q}_{\Theta}$ on $\R^{Nd_{\Theta} }$. This construction follows from Kolmogorov's Extension Theorem applied to measures on $\R^{Nd_{\Theta} } $ (for a proof on $\R$ see \cite{Durrett}, for more general spaces see \cite{AB}). It then follows that each column $\Theta_n([v^j]_n^{n+1} )$ has distribution $\widetilde{\Q}_{\Theta_n}$ on $\R^{d_{\Theta} }$. Although the process $S_t$ is not defined on $\Omega'$, we can still compute relevant expectations involving this process using $G_{f,t_n,S}(\Theta)$ as defined in (\ref{eqn:Gfn}). We also recall that a random variable on $\Omega'$ corresponds to an equivalence class of mappings $X: \Omega' \to \R$ that agree $\Q'$-a.s.

\subsubsection{Computing Limits}

When taking limits in our algorithm, we consider expressions of the form
\begin{equation*}
\lim_{N\to\infty} \frac{1}{N}\sum_{j=1}^N \E^{\Q}\left.[ \ f(S_{n+1},v_{t_{n+1} },v_{t_n }) \ \mid \ S_{t_n} = S, [v^j]_n^{n+1}  \right],
\end{equation*}
where $f$ is bounded, which converge to $\E^{\Q'}\left.[    G_{f,t_n,S}\left.( \Theta\left.([v]_{t_n}^{t_{n+1}} \right.)   \right.)    \right.]$ a.s. under $\Q'$ by the Strong Law of Large Numbers (SLLN). For our purposes, however, we require convergence to
\begin{equation*}
\E^{\Q}\left[ \ f(S_{t_{n+1}},v_{t_{n+1}},v_{t_n} ) \ \mid \ S_{t_n} = S \right].
\end{equation*}
To establish the equivalence between these expressions:
\begin{align}
\E^{\Q'}\left.[    G_{f,t_n,S}\left.( \Theta\left.([v]_{t_n}^{t_{n+1}} \right.)   \right.)    \right.] & = \int_{\Omega'} G_{f,t_n,S}(\Theta_n([v(\omega')]_n^{n+1}  )   ) \  d\Q'(\omega') \nonumber\\
&= \int_{\R^{d_{\Theta}}} G_{f,t_n,S}(\theta) \ d\widetilde{\Q}_{\Theta_n}(\theta) \nonumber \\
& = \int_{\R^{d_{\Theta}}} \E^{\Q}\left.[ f( S_{t_{n+1}}, v_{t_{n+1}}, v_{t_n}   ) \mid S_{t_n} = S, \Theta_n = \theta \right.] \ d \widetilde{\Q}_{\Theta_n}(\theta) \nonumber   \\
& = \int_{\R^{d_{\Theta}}}   \int_{\Omega} f(S_{t_{n+1} }(\omega), v_{t_{n+1}}, v_{t_n}  ) \ d\Q_{t_n , S,\theta,n} (\omega)  \ d \widetilde{\Q}_{\Theta_n}(\theta) \nonumber  \\
& = \E^{\Q}\left.[ f(S_{t_{n+1}}, v_{t_{n+1}} , v_{t_n}  ) \mid S_{t_n} = S \right.] \label{eqn:measequiv} \
\end{align}
where the second equality follows from $\Theta([v]_{t_n}^{t_{n+1}}) $ being $ \widetilde{\Q}_{\Theta_n}$ distributed.

\subsection{Continuation Functions}
\label{subsection:continuationfunctions}

We remind the reader that for each $n$, the random variables $\{[v^j]_n^{n+1}\}_{j=1}^{\infty}$ are defined on the space $\Omega'$ and are iid (see the discussion in Section \ref{sec:SamplingSpace}). For notational convenience, we suppose the risk free rate is $0$. As our algorithm is based on the \cite{TV} approach to LSMC, many of our expressions are similar.

\subsubsection{Idealized Continuation Functions}

For each $n$, we consider a family of \textit{idealized continuation functions}, $C_{n}$, which are constructed by means of backwards induction. We begin by writing $C_M \equiv 0$ and $C_n(S,v) = a_n(S)\cdot \phi(v)$ for $n < M$ where $a_n(S)$ results from regressing the random variable
\begin{equation*}
 \E^{\Q}\left[ \left.\max(h_{n+1}(S_{n+1}),C_{n+1}(S_{n+1},v_{n+1}) )\right. \mid \left.S_{n} = S, \ v_n  \right.  \right] \ \ \text{onto the basis} \ \  \{\phi_m(\cdot)\}_{m=1}^{d_B}
\end{equation*}
for each $S \in \calS$.
The coefficient vector $a_n(S)$ is the vector that minimizes the mapping $H_{n,S}: \R^{d_B} \to \R$ defined by
\begin{equation}
H_{n,S}(a) = \E^{\Q}\left[ \left( \E^{\Q}\left[ \left.f_{n+1}(S_{n+1},v_{n+1})\right. \mid \left. S_n = S, v_n\right. \right]  - a \cdot \phi(v_n)  \right)^2 \mid  \left.S_n = S\right. \right]
\label{eqn:functionH}
\end{equation}
where
\begin{equation}
\begin{split}
f_{n}(S,v) & = \max( h_n(S), C_{n}(S,v) ) \ \text{ for } \ n < M\;, \text{ and }\,  \\
f_{M}(S,v) & = h_M(S)\;. \
\end{split}
\label{eqn:functionf}
\end{equation}
To minimize $H$, we obtain the first order conditions and obtain the normal equations, resulting in the coefficients
\begin{equation*}
a_n(S) = A_n^{-1}  \cdot   \E^{\Q}\left[  \left.\phi(v_n) f_{n+1}(S_{n+1},v_{n+1})\right. \mid \left. S_{n} = S\right.   \right]
\end{equation*}
where $A_n = \E^{\Q}[ \phi(v_{t_n}) \phi(v_{t_n})^{\intercal} ] $.

\subsubsection{Almost-Idealized Continuation Functions}

Next, for a fixed $t_n$, we define a new type of continuation value, called the \textit{almost-idealized continuation functions}, $\widetilde{C}_n^{N}(S,v) = \widetilde{a}_n^N(S) \cdot \phi_n(v)$. These random variable are obtained by running the dynamic programming algorithm with the idealized continuation value at all times $k = M,...,n+1$. At time step $n$ we then estimate $\widetilde{a}^{N}_n(S)$ using our $N$ paths of $v_t$ and future idealized continuation values. This gives us the following regression coefficients:
\begin{equation*}
\widetilde{a}^{N}_n(S) = \big[ A^{N}_n \big]^{-1}   \frac{1}{N} \sum_{j=1}^{N}   \ \phi(v^j_n) \cdot \E^{\Q}\left[\left.f_{n+1}(S_{n+1},v_{n+1}) \right. \mid \left. S_{n} = S, [v^j]_n^{n+1} \right. \right]
\end{equation*}
where $A^{N}_n = \frac{1}{N} \sum_{j=1}^{N} \phi(v_n)\phi(v_n  )^T $. Note that  $f_{n+1}$ involves the idealized continuation value at time $n+1$.

\subsubsection{Estimated Continuation Functions}

The \textit{estimated continuation functions} are the continuation functions produced from our algorithm: ${C}_n^{N}(S,v) = {a}_n^N(S) \cdot \phi_n(v)$. The regression coefficients are given by
\begin{equation}
a^{N}_n(S) =  \big[ A^{N}_n\big]^{-1}   \cdot   \frac{1}{N} \sum_{j=1}^{N}   \  \phi(v^j_n) \cdot  \E^{\Q}\left[\left. f_{n+1}^{N}(S_{n+1},v^j_{n+1} )\right.  \mid \left. S_{n} = S, [v^j]_n^{n+1} \right. \right] \
\label{eqn:estcoef}
\end{equation}
where
\begin{align*}
f^{N}_{n}(S,v) & = \max( h_n(S), C^{N}_{n}(S,v) ) \ \text{ for } \ n < M\,, \text{ and } \\
f^{N}_{M}(S,v) & = h_M(S). \
\end{align*}

\subsection{Truncation Scheme}
\label{subsection:truncationscheme}
We now state the following truncation scheme for our least-squares regression. It ensures that the coefficients produced by the algorithm are well defined and converge in a sense to be described later on.

\begin{assumption}[Truncation Conditions]
\
\label{assump:trunc}
\begin{enumerate}
\item The basis functions $\{\phi_m\}_{m=1}^{d_B}$ are bounded and supported on a compact rectangle.
\label{item:cond1}
\item The inverse of the regression matrix satisfies $\sup_{n,N} ||[A^N_n]^{-1}||_{\infty,\Q'} < \infty$ provided the inverse is defined.
\label{item:cond2}
\item For each $i=1,...,M$, the exercise values $h_{t_i}(\cdot)$ are bounded with compact support in $\calS \subset \R^{d_S}$.
\label{item:cond3}
\end{enumerate}
\end{assumption}

Condition (\ref{item:cond1}) may be imposed by limiting the support of $\{\phi_l\}$ on a bounded domain as they are typically smooth. By making $\supp \ \phi \subset \R^{d_v}$ to be a very large rectangle, the value function is essentially unaffected.

Condition (\ref{item:cond2}) is imposed by replacing $[A^N_n]^{-1}$ with $[A^N_n]^{-1} I^{n,N}_{R}$ where $I^{n,N}_{R}$ is the indicator of the event that $[A^N_n]^{-1}$ is uniformly bounded by some constant $R$. If $R > |A^{-1}_n|$ then we have $[A^N_n]^{-1} I^{n,N}_{R} \to A^{-1}_n$ $\Q'$-a.s. Again, by making $R$ a very large constant, this has essentially no effect on the values obtained by the algorithm.

Condition (\ref{item:cond3}) on the functions $h$ are always satisfied in practice as numerically solving a PDE involves truncation of $h$'s domain.

\begin{lemma} Given the truncation conditions, the functions $H_{n,S}$ defined in (\ref{eqn:functionH}) are finite valued for all $n = 1,...,M-1$.
\label{Hlemma}
\end{lemma}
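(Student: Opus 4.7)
The strategy is a backwards induction on $n$, but strengthening the conclusion: I will show that for each $n \in \{1,\ldots,M\}$, the function $f_n(S,v)$ defined in \eqref{eqn:functionf} is uniformly bounded on $\calS \times \R^{d_v}$. Finiteness of $H_{n,S}$ for $n \leq M-1$ will then be essentially automatic, since the integrand in $H_{n,S}$ is a quadratic expression in bounded quantities.

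For the base case $n = M$, $f_M = h_M$ is bounded by truncation condition \ref{item:cond3}. For the inductive step, assume $f_{n+1}$ is bounded by some constant $K_{n+1}$. The minimizer of $H_{n,S}$ is characterized by the normal equations, giving
\begin{equation*}
a_n(S) = A_n^{-1}\, \E^{\Q}\!\left[\phi(v_n)\, f_{n+1}(S_{n+1}, v_{n+1}) \mid S_n = S\right],
\end{equation*}
where invertibility of $A_n = \E^{\Q}[\phi(v_n)\phi(v_n)^\intercal]$ follows from the stated linear independence of $\{\phi_m\}_{m=1}^{d_B}$ in $L^2$ with respect to the law of $v_n$. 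Since $\|\phi\|_\infty < \infty$ by truncation condition \ref{item:cond1} and $|f_{n+1}| \leq K_{n+1}$, the conditional expectation above has norm bounded uniformly in $S$, so that $|a_n(S)|$ is uniformly bounded. It follows that $|C_n(S,v)| = |a_n(S) \cdot \phi(v)|$ is uniformly bounded, and therefore $f_n = \max(h_n, C_n)$ is bounded via truncation condition \ref{item:cond3}.

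With the boundedness of $f_{n+1}$ in hand, finiteness of $H_{n,S}$ is immediate. Setting $Y_n := \E^{\Q}[f_{n+1}(S_{n+1}, v_{n+1}) \mid S_n = S, v_n]$, we have $|Y_n| \leq K_{n+1}$ by the inductive hypothesis and $|a \cdot \phi(v_n)| \leq |a|\sqrt{d_B}\,\|\phi\|_\infty$ for any fixed $a \in \R^{d_B}$. The integrand of
\begin{equation*}
H_{n,S}(a) = \E^{\Q}\!\left[ (Y_n - a \cdot \phi(v_n))^2 \mid S_n = S \right]
\end{equation*}
is therefore uniformly bounded in $\omega$, whence $H_{n,S}(a) < \infty$ for every $a$.

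The main point requiring care is the invertibility and boundedness of $A_n^{-1}$: truncation condition \ref{item:cond2} only provides a uniform bound on the empirical $[A_n^N]^{-1}$, not directly on $A_n^{-1}$. This can be handled by invoking the linear independence of $\{\phi_m\}$ in $L^2(\widetilde{\Q}_{v_n})$ (which makes $A_n$ positive definite), or alternatively by noting that $A_n^N \to A_n$ almost surely by the SLLN applied to the bounded matrix-valued variables $\phi(v_n^j)\phi(v_n^j)^\intercal$, so that $\|A_n^{-1}\|$ inherits any uniform bound enforced on $\|[A_n^N]^{-1}\|$ in the limit.
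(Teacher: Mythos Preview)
The paper omits the proof entirely, stating only that it is ``omitted due to its simplicity.'' Your backwards-induction argument---propagating uniform boundedness of $f_n$ from $n=M$ down to $n=1$ and reading off finiteness of $H_{n,S}$ from boundedness of $f_{n+1}$---is correct and is the natural way to fill in the details. One small presentational point: the logical order within the inductive step is really (i) $f_{n+1}$ bounded $\Rightarrow$ $H_{n,S}$ finite and quadratic in $a$, (ii) $A_n$ invertible $\Rightarrow$ $a_n(S)$ well-defined via the normal equations and uniformly bounded, (iii) hence $f_n$ bounded; your write-up inverts (i) and (ii), which is harmless but slightly obscures that the minimizer's existence is itself part of what is being established. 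Your closing remark on $A_n^{-1}$ is well placed: the paper uses $A_n^{-1}$ in its definition of $a_n(S)$ without comment, so invertibility is tacitly assumed, and your two justifications (linear independence in $L^2$, or passing the uniform bound on $[A_n^N]^{-1}$ to the limit via the SLLN) are both adequate.
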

The proof is omitted due to its simplicity. The next lemma establishes a useful relationship between the idealized, almost idealized and estimated coefficients.
\begin{lemma}
Let $n \in \{1,...,M-2\}$, $S \in \calS$. There exists a constant, $c$, which depends on our truncation conditions, such that
\begin{equation*}
|a_n^{N}(S) - a_n(S)| \leq c \cdot \beta^{N}_n(S) + \delta^{N}_n(S)
\end{equation*}
where
\begin{equation*}
\beta^{N}_n(S) = \frac{1}{N} \sum_{j=1}^{N}  \E^{\Q}\left[ \; \left. | a_{n+1}^{N}(S_{n+1}) - a_{n+1}(S_{n+1})    |    \right.  \mid  \left. S_{n} = S, [v^j]^{n+1}_n \right. \; \right] .
\end{equation*}
and
\begin{equation}
\delta^{N}_n(S) = | \widetilde{a}^{N}_n(S) - a_n(S)|
\label{eqn:deltadef}
\end{equation}
\label{lem:identity}
\end{lemma}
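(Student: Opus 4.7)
The plan is to introduce the almost-idealized coefficient $\widetilde{a}^{N}_n(S)$ as an intermediate term via the triangle inequality,
$$|a_n^{N}(S) - a_n(S)| \le |a_n^{N}(S) - \widetilde{a}_n^{N}(S)| + |\widetilde{a}_n^{N}(S) - a_n(S)|,$$
and observe that the second summand is exactly $\delta_n^{N}(S)$ by \eqref{eqn:deltadef}. It therefore suffices to show that $|a_n^{N}(S) - \widetilde{a}_n^{N}(S)| \le c\,\beta_n^{N}(S)$ for a constant $c$ built only out of the quantities supplied by Assumption \ref{assump:trunc}.

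The key algebraic observation is that $a_n^{N}(S)$ and $\widetilde{a}_n^{N}(S)$ share both the matrix factor $[A^{N}_n]^{-1}$ and the empirical weights $\phi(v^j_n)$; they differ only in whether the inner conditional expectation uses $f^{N}_{n+1}$ or $f_{n+1}$. Subtracting the two formulas gives
$$a_n^{N}(S,\omega') - \widetilde{a}_n^{N}(S,\omega') = \bigl[A^{N}_n(\omega')\bigr]^{-1} \cdot \frac{1}{N}\sum_{j=1}^{N} \phi(v^j_n(\omega')) \cdot E^{N}_j(S,\omega'),$$
with $E^{N}_j(S,\omega') := \E^{\Q}\!\left[\,f^{N}_{n+1}(S_{n+1},v^j_{n+1}(\omega'),\omega') - f_{n+1}(S_{n+1},v^j_{n+1}(\omega'))\,\mid\, S_n = S,\,[v^j(\omega')]_n^{n+1}\right]$. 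Using the operator bound $\|[A^{N}_n]^{-1}\| \le R$ from condition \ref{item:cond2}, the uniform bound $\|\phi_m\|_\infty \le K$ from condition \ref{item:cond1}, and Jensen's inequality to move the absolute value inside the conditional expectation, I would obtain
$$|a_n^{N}(S) - \widetilde{a}_n^{N}(S)| \le R K \cdot \frac{1}{N}\sum_{j=1}^{N} \E^{\Q}\!\left[\,\bigl|f^{N}_{n+1}(S_{n+1},v^j_{n+1}) - f_{n+1}(S_{n+1},v^j_{n+1})\bigr|\,\mid\, S_n = S,\,[v^j]_n^{n+1}\right].$$

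The final step propagates the difference from $f$ back to $a$. Since $\max(h(S'),\cdot)$ is $1$-Lipschitz, $|f^{N}_{n+1}(S',v') - f_{n+1}(S',v')| \le |C^{N}_{n+1}(S',v') - C_{n+1}(S',v')|$, and the linear structure $C_{n+1} = a_{n+1}(S')\cdot\phi(v')$ combined with the basis bound yields $|C^{N}_{n+1}(S',v') - C_{n+1}(S',v')| \le K\,|a^{N}_{n+1}(S') - a_{n+1}(S')|$. Substituting this estimate into the previous display produces exactly $c\,\beta_n^{N}(S)$ with $c = R K^2$, completing the bound.

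The only thing demanding a little care is the finiteness of all conditional expectations involved (so that Jensen's step and the manipulations are legitimate); this follows inductively from Lemma \ref{Hlemma} together with conditions \ref{item:cond1} and \ref{item:cond3}, which make each $f^{N}_{n+1}$ and $f_{n+1}$ bounded. I do not anticipate a genuine obstacle: the lemma is essentially an algebraic identity exploiting the shared structure of $a_n^{N}$ and $\widetilde{a}_n^{N}$, combined with the $1$-Lipschitz property of the maximum and the uniform bounds supplied by the truncation scheme.
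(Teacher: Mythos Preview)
Your proposal is correct and follows essentially the same route as the paper: triangle inequality via $\widetilde{a}^N_n$, the shared $[A^N_n]^{-1}$ and $\phi(v^j_n)$ factors, the $1$-Lipschitz property of $\max(h,\cdot)$ (the paper states it as $|\max(a,b)-\max(a,c)|\le |b-c|$), and the bound $|C^N_{n+1}-C_{n+1}|\le |\phi|\,|a^N_{n+1}-a_{n+1}|$. The only cosmetic difference is that the paper does not name explicit constants $R,K$ or invoke Jensen by name, simply writing $c',c''$ ``depending on our truncation conditions.''
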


\begin{proof}
Given $n \in \{1,...,M-1\}$ and $S \in \mathcal{S}$ we have
\begin{equation*}
|a^{N}_n(S) - a_n(S)  | \leq | a^{N}_n(S) - \widetilde{a}^{N}_n(S)   |  + | \widetilde{a}_n^{N}(S)   - a_n(S) |.
\end{equation*}
After simplifying, we find
\begin{align}
|a^{N}_n(S) - \widetilde{a}^{N}_n(S)| &  \nonumber \\
\leq |[A^{N}_n]^{-1}_{R_A} |  \cdot & \frac{1}{N}  \sum_{j=1}^{N}   \E^{\Q}\left[ \left. | f_{n+1}^{N}(S_{n+1},v^j_{n+1}) - f_{n+1}(S_{n+1},v^j_{n+1}) |  \right. \mid \left. S_{n} = S, [v^j]_{n}^{n+1} \right. \right]   \cdot | \phi(v^j_n) | \nonumber \\
\leq c'  \cdot  \frac{1}{N}  &\sum_{j=1}^{N}   \E^{\Q}\left[ \left. | f_{n+1}^{N}(S_{n+1},v^j_{n+1}) - f_{n+1}(S_{n+1},v^j_{n+1}) |  \right. \mid \left. S_{n} = S, [v^j]_{n}^{n+1} \right. \right] \label{eqn:betasum} \
\end{align}
We then focus on the difference within the expectation. Using the inequality $|\max(a,b) - \max(a,c)| \leq |b - c|$ we find
\begin{align}
& \E^{\Q}\left[ \left. | f_{n+1}^{N}(S_{n+1},v^j_{n+1}) - f_{n+1}(S_{n+1},v^j_{n+1}) | \right. \mid \left. S_{n} = S, [v^j]^{n+1}_{n} \right.  \right]    \nonumber \\
& = \E^{\Q}\left[ \left.| \max( h_{n+1}(S_{n+1}), C_{n+1}^{N}(S_{n+1},v^j_{n+1}) ) - \max( h_{n+1}(S_{n+1}), C_{n+1}(S_{n+1},v^j_{n+1}) ) | \right.  \mid \left. S_{n} = S, [v^j]^{n+1}_{n} \right. \right] \nonumber    \\
&  \leq \E^{\Q}\left[ \left. | C^{N}_{n+1}(S_{n+1},v^j_{n+1})  - C_{n+1}(S_{n+1},v^j_{n+1})  |\right.  \mid \left. S_{n} = S, [v^j]_n^{n+1}\right. \right]    \nonumber \\
& \leq   \E^{\Q}\left[ \left. |  a^{N}_{n+1}(S_{n+1})- a_{n+1}(S_{n+1})| \right. \mid  \left. S_{n} = S, [v^j]_n^{n+1} \right. \right] \cdot  |\phi(v^j_{n+1}) | \nonumber \\
& \leq c'' \cdot  \E^{\Q}\left[ \left. |  a^{N}_{n+1}(S_{n+1})- a_{n+1}(S_{n+1})|    \right. \mid  \left. S_{n} = S, [v^j]^{n+1}_n\right. \right]  \label{eqn:finalterm}  \
\end{align}
where $c', c''$ depend on our truncation conditions. Substituting (\ref{eqn:finalterm}) into (\ref{eqn:betasum}), we obtain the result.
 \end{proof}
\subsection{Almost-Sure Convergence Under Pure SV Models}
\label{subsection:asconvergence}
In this section we prove that the coefficients also converge almost surely for a class of models with certain separability conditions that are satisfied by our SV model.

\begin{assumption}[Separability Conditions]
\
Let $n \in \{1,...,M-1\}$
	\begin{enumerate}
	\item The process $S_t$, for all $t\in[0,T]$, takes values in
	\[
	\calS = \left\{ (S^{(1)},...,S^{(d_S)}) \in \R^{d_S} \ | \ S^{(i)} > 0, \forall i = 1,...,d_S     \right\}, \qquad \text{a.s.}
	\]
	\label{item:pdegrid}
	\item If $S_n = S$ almost surely, i.e., the process begins at $S$ at time $t_n$, then $S_{n+1} = S \odot R_n  = (S_n^{(1)}R_n^{(1)},..., S_n^{(d_s)}R_n^{(d_S)} )$ where $R_n$ does not depend on the value of $S_n$. $R_n$ takes values in $\calS$ a.s.
	\label{item:return}
	\item The exercise function, $h$, is continuous with compact support in $\calS$. The basis functions $\phi$ are compactly supported and continuous on $\R^{d_v}$.
	\label{item:ctny}
\end{enumerate}
	\label{assump:separabilityconditions}
\end{assumption}

Condition (\ref{item:pdegrid}) limits our analysis to assets which take only positive values such as equities and foreign exchange rates.

Condition (\ref{item:return}) allows us to separate our future asset price as a product of its current price and return. The assumption that $R_n$ take values in $\calS$ implies that they are finite valued a.s. As a result, letting $\e>0$, there exists tuples $\{(r^{(i)}_{l},r^{(i)}_{h})\}_{i=1}^{d_S}$, where $r^{(i)}_l > 0$ such that
\begin{equation}
\Omega_{r_l,r_h}^{\e} = \left\{ \  r^{(i)}_{l} \leq R^{(i)}_n \leq r^{(i)}_{h} \mid \forall \ n \in \{1,\ldots,M-1\} , \ \forall \ i \in \{1,\dots,d_S\} \  \right\} \subset \Omega
\label{eqn:posquad1}
\end{equation}
satisfies $\Q(\Omega_{r_l,r_h}^{\e}) > 1 - \e.$ We also write
\begin{equation*}
E^{\e}_{r_l,r_h} = \left \{  R \in \R^{d_S}  \mid  r^{(i)}_l \leq R^{(i)} \leq r^{(i)}_h, \forall \ i \in \left\{1,\ldots,d_S\right\} \   \right\}.
\label{eqn:posquad2}
\end{equation*}
Given $E_{r_l,r_h}^{\e}$, we may find an open set $U^{\e}$ such that $E_{r_l,r_h}^{\e} \subset U^{\e}$ and $ \overline{U^{\e}} \subsetneq \calS$. By Urysohn's Lemma/Tietze's Extension theorem (see \cite{Munkres}), there exists a map $\eta_{E} : \calS \to \R$ such that $\eta_{E} = 1$ on $E^{\e}_{r_l,r_h}$, $\eta_{E} = 0$ on $ \calS \setminus U^{\e}$ and $||\eta_E||_{\infty} \leq 1$, i.e. a bump function supported on $E$. In most cases, our notation will suppress dependence on $r_l, r_h, \e$.

Condition (\ref{item:ctny}) allows us to apply the Stone-Weierstrass (SW) theorem which underlies the `separation technique' that will be demonstrated in upcoming Lemmas. We apply the version of SW for functions on unbounded domains that vanish at infinity (See \cite{Folland}). Suppose we are given the payoff function for a call option, i.e. $g : \R \to \R$ where $g(x) = (x-K)_+$. To modify $g$ such that it falls within our assumption, we first truncate its support to obtain a function $f(x) = (x-K)_+ I_{(0,R_1)}(x)$ where $R_1$ is large number. Finally, we continuously extend $f$ on $\R$ such that $f = 0$ on $(R_2,\infty)$ where $R_2 > R_1$. A similar construction may be done for a put option payoff near 0, and payoffs on higher dimensional domains.

Under these assumptions, we have the following main result.

\begin{theorem}
Let $n \in \left\{1,\ldots,M-1\right\}$ and $S \in \calS$ be fixed. Then we have
\begin{equation*}
\lim_{{N}\to \infty} | a_n^{N}(S) - a_n(S) | = 0
\end{equation*}
$\Q'$-almost surely
\label{thm:as}
\end{theorem}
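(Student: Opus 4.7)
The proof plan is to proceed by backward induction on $n$, from $n=M-1$ down to $n=1$, applying Lemma \ref{lem:identity} at each step so that it suffices to show $\beta_n^{N}(S)\to 0$ and $\delta_n^{N}(S)\to 0$ $\Q'$-a.s.

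For the base case $n=M-1$, the idealized and estimated continuation values at $t_M$ agree (both equal $h_M(S_M)$), so $f_M^{N}=f_M=h_M$ is deterministic and independent of $N$. Consequently $\beta_{M-1}^{N}\equiv 0$, and only $\delta_{M-1}^{N}(S)=|\widetilde a_{M-1}^{N}(S)-a_{M-1}(S)|$ needs to be handled. By the SLLN applied on the sampling space $\Omega'$, combined with the equivalence (\ref{eqn:measequiv}), the empirical Gram matrix $A_{M-1}^{N}$ converges to $A_{M-1}$ and the empirical numerator $\tfrac{1}{N}\sum_j \phi(v_{M-1}^j)\,\E^{\Q}[h_M(S_M)\mid S_{M-1}=S,[v^j]_{M-1}^{M}]$ converges to $\E^{\Q}[\phi(v_{M-1})h_M(S_M)\mid S_{M-1}=S]$, both $\Q'$-a.s. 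The Truncation Conditions (bounded $\phi$, uniformly bounded $[A^{N}_{M-1}]^{-1}$, bounded compactly supported $h_M$) ensure integrability and that the quotient behaves well, giving $\delta_{M-1}^{N}(S)\to 0$ a.s. The inductive step treats $\delta_n^{N}$ by exactly the same SLLN argument, since the almost-idealized coefficient at level $n$ uses the idealized $f_{n+1}$, which does not depend on $N$.

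The substantive obstacle is showing $\beta_n^{N}(S)\to 0$ a.s. under the inductive hypothesis that $a_{n+1}^{N}(S')\to a_{n+1}(S')$ a.s.\ for each $S'\in\calS$. The plan is to use Separability Condition \ref{item:return} to write $S_{n+1}=S\odot R_n$, where $R_n$ is independent of the value $S_n=S$. Fix $\e>0$ and restrict to the event $\Omega_{r_l,r_h}^{\e}$; on the complement (of $\Q$-probability at most $\e$), $\beta_n^{N}$ contributes at most $O(\e)$ by the uniform bound on $a_{n+1}^{N}$ and $a_{n+1}$ that follows from the Truncation Conditions. On $\Omega_{r_l,r_h}^{\e}$ we then need to control
\begin{equation*}
\frac{1}{N}\sum_{j=1}^{N}\E^{\Q}\!\left[\,|a_{n+1}^{N}(S\odot R_n)-a_{n+1}(S\odot R_n)|\,\eta_E(S\odot R_n)\,\bigm|\,[v^j]_n^{n+1}\right],
\end{equation*}
where $\eta_E$ is the Urysohn bump supported on a compact neighborhood of $E_{r_l,r_h}^{\e}$ inside $\calS$.

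The hard part is that the index $N$ inside the conditional expectation is tied to the outer index $N$, so one cannot apply SLLN to a fixed integrand. To handle this, I would use Stone--Weierstrass (via the continuity and compact support provided by Separability Condition \ref{item:ctny}) to approximate $a_{n+1}(\cdot)\eta_E(\cdot)$, on the compact image $\{S\odot r:r\in E_{r_l,r_h}^{\e}\}$, by a finite linear combination of compactly supported continuous test functions; these test functions form a countable dense family on which the SLLN on $\Omega'$ applies simultaneously outside a single $\Q'$-null set. A standard $3\e$-diagonalization then splits $|a_{n+1}^{N}-a_{n+1}|$ into (i) the SW approximation error of $a_{n+1}$, (ii) the error $|a_{n+1}^{N}-a_{n+1}|$ evaluated on the dense set, which is controlled by the pointwise inductive hypothesis together with the uniform bound on $a_{n+1}^{N}$ (giving dominated convergence for the $R_n$-average), and (iii) the SLLN error of the empirical conditional average against the fixed SW test function. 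Passing $N\to\infty$, then letting the SW tolerance and $\e$ shrink, yields $\beta_n^{N}(S)\to 0$ a.s., completing the induction.
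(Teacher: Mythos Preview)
Your decomposition via Lemma~\ref{lem:identity} and your treatment of $\delta_n^{N}(S)$ for \emph{fixed} $S$ by the SLLN are correct, and you have correctly identified the crux: in $\beta_n^{N}(S)$ the index $N$ appears both in the integrand $|a_{n+1}^{N}(S_{n+1})-a_{n+1}(S_{n+1})|$ and in the empirical average over paths. However, the resolution you propose has a genuine gap. Your inductive hypothesis is \emph{pointwise}: for each fixed $S'\in\calS$ there is a $\Q'$-null set outside of which $a_{n+1}^{N}(S',\omega')\to a_{n+1}(S')$. These null sets depend on $S'$, and there are uncountably many $S'$. Approximating the \emph{idealized} $a_{n+1}$ by Stone--Weierstrass does not control $|a_{n+1}^{N}-a_{n+1}|$ at random $S_{n+1}$; in your step (ii), knowing convergence on a countable dense set of $S'$-values does not upgrade to convergence of the $R_n$-average unless you also have equicontinuity of the random functions $s\mapsto a_{n+1}^{N}(s,\omega')$ uniformly in $N$, which you have not established. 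The uniform bound on $a_{n+1}^{N}$ from the Truncation Conditions gives domination but not the pointwise-a.e.\ (in $s$, for fixed $\omega'$) convergence that dominated convergence requires.

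The paper closes this gap by carrying a \emph{stronger} inductive hypothesis than pointwise convergence: a \emph{separating estimate} of the form
\[
|a_{n+1}^{N}(S,\omega')-a_{n+1}(S)|\;\le\;c^{N}_{n+1}(\omega')\,\e+\delta_{n+1}^{N}(S,\omega')+\sum_{i}\gamma_i^{N}(\omega')\,\psi_i(S),
\]
where the $\psi_i\in C_c(\calS)$ are \emph{fixed} functions independent of $N$ and $\omega'$, and the scalar random variables $\gamma_i^{N}(\omega')$ either vanish or stay bounded as $N\to\infty$ (Lemmas~\ref{lemma:mm1}, \ref{lemma:mm2}, Proposition~\ref{prop:main}). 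The point is that the $S$-dependence is factored into finitely many deterministic functions. When this bound is substituted into $\beta_n^{N}$, the resulting terms are empirical averages of $\E^{\Q}[\psi_i(S_{n+1})\mid S_n=S,[v^j]_n^{n+1}]$, to which a further Stone--Weierstrass separation (Lemma~\ref{lemma:geom}, \ref{lemma:finalapprox}) and then the ordinary SLLN apply. The same issue forces a separating estimate for $\delta_{n+1}^{N}$ as well (Lemma~\ref{lemma:deltaest}), since $\delta_{n+1}^{N}(S_{n+1})$ appears inside $\beta_n^{N}$ once you unfold the recursion; your remark that $\delta_n^{N}$ is handled ``by exactly the same SLLN argument'' is only sufficient at the terminal level $n$, not for the propagation. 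In short, the missing idea is to strengthen the induction from ``$a_{n+1}^{N}(S)\to a_{n+1}(S)$ for each $S$'' to an explicit bound in which the $S$- and $\omega'$-dependencies are decoupled.
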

Theorem \ref{thm:as} tells us that for almost every choice of sequence of paths, our coefficients based on these paths converge to the true idealized coefficient. It then follows, since $C^N_n(S,v) = a^N_n(S) \cdot \phi(v)$, that the continuation values, option prices, and optimal exercise boundaries also converge almost surely (up to our basis assumption's bias).

The proof borrows ideas from \cite{TV} and \cite{Clement} and takes the following steps
\begin{enumerate}
\item \underline{Lemma \ref{lemma:geom}}. Carry out a geometric construction that allows us to approximately separate functions $h$ of $S_{t_n}$ that are continuous and compactly supported in $\calS$. The function that provides the approximate separation is denoted as $\psi$.
\item \underline{Lemma \ref{lemma:finalapprox}}. Use the geometric construction to show the explicit relationship between $h$ and the separating functions, $\psi$ and thus demonstrate what is referred to as a separating estimate.
\item \underline{Lemma \ref{lemma:mm1}}. Prove the theorem for $n = M-1$ and also obtain an almost-sure separating estimate for $|a^N_{M-1}(S) - a_{M-1}(S)|$.
\item \underline{Lemma \ref{lemma:mm2}}. Prove the theorem for $n = M-2$ and also obtain an almost-sure separating estimate for $|a^N_{M-2}(S) - a_{M-2}(S)|$. The separating estimate for $n = M-2$ involves the function $\delta^N_{M-2}(S)$ as defined in (\ref{eqn:deltadef}).
\item \underline{Lemma \ref{lemma:deltaest}}. Develop an almost-sure separating estimate for $\delta^N_n$ for all $n \in \{1,\ldots,M-2\}$.
\item \underline{Proposition \ref{prop:main} }. Prove the theorem for $n = \{1,\ldots,M-3\}$ using Lemma \ref{lemma:mm2} and lemma \ref{lemma:deltaest}. Also obtain an almost-sure separating estimate for $|a^N_{n}(S) - a_{n}(S)|$ which is used during the induction.
\end{enumerate}

\begin{lemma}
Let $\e > 0$ and $h: \calS \to \R$ be continuous and compactly supported in $\calS$. Let $\widetilde{h}: \calS \times \calS \to \R$ be defined via
\begin{equation*}
\widetilde{h}(S,R ) = h(S \odot R) \cdot \eta_{E}(R)
\end{equation*}
where $E$ results from (\ref{eqn:posquad1}). There exists a map $\psi : \calS \times \calS \to \R$ of the form
\begin{equation*}
\psi(S,R) = \sum_{i=1}^k \psi_{i,S}(S)\psi_{i,R}(R)
\end{equation*}
such that $\psi_{i,S}, \psi_{i,R} \in C_c(\calS)$  and $|| \widetilde{h} - \psi ||_{\infty} < \e$
\label{lemma:geom}
\end{lemma}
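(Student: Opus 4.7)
The plan is to apply the Stone--Weierstrass theorem for locally compact Hausdorff spaces (the $C_0$ version, as in \cite{Folland}) on $\calS \times \calS$ to the subalgebra
\[
\mathcal{A} = \left\{ \sum_{i=1}^k f_i(S)\,g_i(R) : k \in \N,\ f_i, g_i \in C_c(\calS) \right\},
\]
and use it to uniformly approximate $\widetilde{h}$. The candidate $\psi$ will automatically be of the required form since $\mathcal{A} \subset C_c(\calS \times \calS) \subset C_0(\calS \times \calS)$.

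The main preliminary step is to verify that $\widetilde{h} \in C_c(\calS \times \calS)$. Continuity is immediate from the continuity of $h$, $\eta_E$, and the componentwise product $\odot$. For compact support, recall that $\eta_E$ is supported in $\overline{U^{\e}}$, a compact subset of $\calS$; hence every component $R^{(i)}$ of any $R \in \supp \eta_E$ is pinned between positive constants. Since $h$ has compact support $K_h \subset \calS$, the condition $\widetilde{h}(S,R) \neq 0$ forces both $R \in \overline{U^{\e}}$ and $S \odot R \in K_h$; dividing componentwise by $R^{(i)}$ (bounded below away from zero) then confines $S$ to a compact subset of $\calS$. So $\supp \widetilde{h}$ is compact in $\calS \times \calS$.

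Next I would verify the two Stone--Weierstrass hypotheses for $\mathcal{A}$. Closure under multiplication follows by expanding $\left(\sum_i f_i \otimes g_i\right)\left(\sum_j \tilde{f}_j \otimes \tilde{g}_j\right) = \sum_{i,j} (f_i \tilde{f}_j)\otimes(g_i \tilde{g}_j)$ and using that $C_c(\calS)$ is closed under products. To see $\mathcal{A}$ separates points, given $(S,R) \neq (S',R')$ at least one coordinate differs; a compactly supported Urysohn bump $f$ on $\calS$ distinguishing that coordinate, tensored with any $g \in C_c(\calS)$ that is nonzero at the relevant second argument, gives an element of $\mathcal{A}$ separating the two points. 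To see $\mathcal{A}$ vanishes nowhere, for each $(S,R)$ choose $f, g \in C_c(\calS)$ with $f(S), g(R) > 0$. Stone--Weierstrass then yields density of $\mathcal{A}$ in $C_0(\calS \times \calS)$, so one can pick $\psi = \sum_i \psi_{i,S}(S)\,\psi_{i,R}(R) \in \mathcal{A}$ with $\|\widetilde{h} - \psi\|_\infty < \e$. The only substantive obstacle is the compact-support verification for $\widetilde{h}$, which is what legitimizes treating it as an element of $C_0(\calS \times \calS)$; it relies crucially on $\eta_E$ being supported strictly inside $\calS$, bounded away from the coordinate hyperplanes.
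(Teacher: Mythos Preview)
Your proposal is correct and follows essentially the same approach as the paper: both verify $\widetilde{h}\in C_c(\calS\times\calS)$ and then apply the $C_0$ version of Stone--Weierstrass to the algebra $\mathcal{A}=\{\sum_i f_i(S)g_i(R):f_i,g_i\in C_c(\calS)\}$, checking that it separates points and vanishes nowhere. Your write-up is in fact more explicit than the paper's on the compact-support step (the paper merely invokes ``properties of the mapping $i(S,R)=S\odot R$ and compact support of $\eta_E$ and $h$'') and on closure of $\mathcal{A}$ under products, but the underlying argument is identical.
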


\begin{proof}

It follows from the properties of the mapping $i(S,R) = S \odot R$ and compact support of $\eta_E$ and $h$, that $\widetilde{h}$ is compactly supported in $\calS \times \calS$. We also have that $\widetilde{h}$ is continuous on $\calS \times \calS$.

To construct $\psi$, we begin by defining the algebra of functions
\begin{equation*}
\calA := \left\{ \sum_{i=1}^k \psi_{i,S}(S) \psi_{i,R}(R) \ | \ \psi_{i,S}, \psi_{i,R} \in C_c(\calS), \ k \in \N    \right\}
\end{equation*}
and show $\calA$ is dense in $C_0(\calS\times\calS)$ under the uniform metric.

Given two distinct points $\{(S_i,R_i)\}_{i=1}^2$, without loss of $  $generality, we assume $S_1 \neq S_2$. We now find bump functions $\eta_{S,1}, \eta_{S,2} \in C_c(\calS)$ that separate $S_1, S_2$, i.e., $\eta_{S,1}(S_1) = 1, \eta_{S,1}(S_2) = 0$, $\eta_{S,2}(S_1) = 0, \eta_{S,2}(S_2) = 1$, along with a bump function $\eta_{R} \in C_c(\calS)$ that is supported on $R_1$ and $R_2$. Letting $\psi_{1}(S,R) = \eta_{S,1}(S) \cdot \eta_{R}(R)$ and $\psi_2(S,R) = \eta_{S,2}(S) \cdot \eta_{R}(R)$, we have that $\calA$ separates points. Since $\calA$ contains bump functions supported at each point $(S,R) \in \calS\times\calS$, it vanishes nowhere.
\end{proof}

\begin{lemma}
Let $\e>0$ and $h : \calS \to \R $ be continuous and supported in $\calS$. Let $\widetilde{h}$ be as in the statement of Lemma \ref{lemma:geom} and $\psi(S,R)$ be a separable $\e$-approximation of $\widetilde{h}$. There exists a random variable $F(S)$ and function $G(S) : \calS \to \R$ such that
\begin{align*}
\E^{\Q}\left[ \phi(v_n) h(S_{n+1}) \mid S_n = S, [v]_n^{n+1} \right] &= F(S) +  \phi(v_n)\sum_{i=1}^k \psi_{S,i}(S)  \E^{\Q}\left[ \psi_{R,i}(R_n) \mid [v]_n^{n+1}  \right] \\
 \ \E^{\Q}\left[ \phi(v_n) h(S_{n+1})  \mid S_n = S \right] \  &= G(S) + \sum_{i=1}^k \psi_{S,i}(S) \E^{\Q}\left[ \  \phi(v_n) \psi_{R,i}(R_n)  \right]
\end{align*}
where $||F(S)||_{\infty,\Q'} \leq c \ \e, \  |G(S)| \leq c' \ \e$ for all $S \in \calS$ where $c,c'$ depend only on our truncation and separability conditions.
\label{lemma:finalapprox}
\end{lemma}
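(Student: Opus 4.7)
The plan is to manipulate both sides of the asserted identities using the separability hypothesis (Assumption~\ref{assump:separabilityconditions}, Condition~\ref{item:return}) together with the uniform $\e$-approximation $\|\widetilde{h}-\psi\|_\infty<\e$ supplied by Lemma~\ref{lemma:geom}. The strategy is to insert the bump function $\eta_E$ to restrict to the high-probability region $\{R_n\in E\}$ where $\widetilde{h}$ and $h\circ i$ coincide, then swap $\widetilde{h}$ with $\psi$, and finally absorb the resulting residuals into $F$ and $G$.

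For the conditional identity, I would first pull $\phi(v_n)$ out of the inner expectation, since it is $\sigma([v]_n^{n+1})$-measurable, and invoke separability to rewrite the remaining conditional expectation as $\phi(v_n)\,\E^{\Q}[h(S\odot R_n)\mid [v]_n^{n+1}]$; here I use that the distribution of $R_n$ is independent of the value of $S_n$. Next, decompose
\[
h(S\odot R_n) \;=\; \widetilde{h}(S,R_n) \;+\; h(S\odot R_n)\bigl(1-\eta_E(R_n)\bigr)
\]
and further split $\widetilde{h}(S,R_n)=\psi(S,R_n)+r_\psi(S,R_n)$ with $|r_\psi|\leq \e$. Because $\psi$ is separable, the leading term contributes exactly $\phi(v_n)\sum_{i=1}^k \psi_{S,i}(S)\,\E^{\Q}[\psi_{R,i}(R_n)\mid [v]_n^{n+1}]$, and the two remaining conditional expectations are collected into $F(S,\omega')$. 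Bounding $|F|$ uses (a) $\|\phi\|_\infty<\infty$ from Truncation Condition~\ref{item:cond1}; (b) the pointwise bound $|r_\psi|\le\e$; (c) $\|h\|_\infty<\infty$ from Truncation Condition~\ref{item:cond3}; and (d) the probability estimate $\Q(R_n\in E)>1-\e$ from (\ref{eqn:posquad1}). The identity for $G(S)$ follows the same template but with unconditional expectations, so $R_n$ and $v_n$ need not be separated inside the expectation; the bound on $|G|$ drops out via Fubini and the same four ingredients.

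The delicate point I expect to have to work around is the residual $\phi(v_n)\,\E^{\Q}[h(S\odot R_n)(1-\eta_E(R_n))\mid [v]_n^{n+1}]$ inside $F$: the unconditional estimate $\Q(R_n\notin E)<\e$ does not by itself force the conditional probability $\Q(R_n\notin E\mid [v]_n^{n+1})$ to be uniformly small in $\omega'$. One remedy is to enlarge $E$ (shrink $r_l$, inflate $r_h$) so that conditional tail probabilities are also controlled under the boundedness supplied by Assumption~\ref{assump:trunc}, giving a genuine pointwise bound; alternatively, and more in line with how the estimate is ultimately used in the downstream Lemmas~\ref{lemma:mm1}--\ref{lemma:deltaest}, one can interpret the bound on $F$ in an $L^1(\Q')$ sense via Fubini, which is precisely the setting in which the SLLN is invoked. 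All remaining steps are routine linearity-of-expectation and independence manipulations and I would not grind through them.
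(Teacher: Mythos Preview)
Your approach mirrors the paper's proof exactly: decompose via $\eta_E$, swap $\widetilde{h}$ for $\psi$, and collect the two residuals into $F=F_1+F_2$. The paper simply asserts that $|F_i(S,\omega')|<\e$ (up to constants) for all $\omega'$ without addressing the point you flag about the conditional tail $\Q(R_n\notin E\mid [v]_n^{n+1})$; in that sense your write-up is more careful than the original, and your $L^1(\Q')$ reading is consistent with how the estimate is actually consumed in Lemmas~\ref{lemma:mm1}--\ref{lemma:deltaest}.
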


\begin{proof}
We will only show the first relation as they are similar.
\begin{align*}
&  \hspace*{-1em} \E^{\Q} \left[ \phi(v_n) h(S_{n+1}) \mid S_n = S, [v]_n^{n+1} \right]  \\
&  =  \ \E^{\Q}[ \phi(v_n)  h(S \odot R_n) (1-\eta_E(R_n) )   \ | \ S_n = S, [v]_n^{n+1} ] + \E^{\Q}\left[ \phi(v_n)  h(S \odot R_n) \eta_{E}(R_n)    \mid S_n = S, [v]_n^{n+1} \right]    \\
 & =  \   F_1(S) +  \E^{\Q}\left[ \phi(v_n)  \widetilde{h}(S \odot R_n) \mid S_n = S, [v]_n^{n+1} \right]    \\
& =  \   F_1(S) +  \E^{\Q}\left[ \phi(v_n)  (  \widetilde{h}(S \odot R_n) - \psi(S,R_n) )  \mid S_n = S, [v]_n^{n+1} \right]  + \E^{\Q}\left[ \phi(v_n)  \psi(S,R_n) \mid S_n = S, [v]_n^{n+1} \right]       \\
& =  \ F_1(S)     +  F_2(S) + \E^{\Q}\left[ \phi(v_n)  \psi(S,R_n) \mid S_n = S, [v]_n^{n+1} \right]   \\
& =  \ F_1(S)     +  F_2(S) +  \E^{\Q}\left[ \phi(v_n)  \psi(S,R_n) \mid S_n = S, [v]_n^{n+1} \right]   \\
& =  \ F_1(S)     +  F_2(S) +  \phi(v_n)\sum_{i=1}^k \psi_{S,i}(S)  \E^{\Q}\left[ \psi_{R,i}(R_n) \mid [v]_n^{n+1}  \right] \
\end{align*}
It then follows that $||F_i(S)||_{\infty,\Q'} < c_i \ \e$ for all $S \in \calS$ and $i = 1,2$, where $c_i$ depend on our truncation and separability conditions. Finally, we set $F(S) = F_1(S) + F_2(S)$.
\end{proof}

\begin{lemma}
Let $\e>0$. We have
\label{lemma:sepest}
\begin{equation*}
|a^{N}_{M-1}(S) - a_{M-1}(S)| \leq c_{M-1} \  \e +  \sum_{i_{M-1} = 1}^{k_{M-1} } \alpha^{i_{M-1},N}_{M-1,M} \ \psi_{S,i_{M-1}}(S) \\
\label{eqn:mclaim}
\end{equation*}
a.s., where $\alpha^{i_{M-1},N}_{M-1,M}$ are random variables that depend on $\{ [v^j]^{M}_{M-1} \}_{j=1}^{\infty}$ and $\lim_{N\to \infty} \alpha^{i_{M-1},N}_{M-1,M}= 0$ a.s., $\psi_{S,i_{M-1}}$ satisfy the conditions of Lemma \ref{lemma:finalapprox} and $c_{M-1}$ depends on our truncation and separability conditions.
 \label{lemma:mm1}
 \end{lemma}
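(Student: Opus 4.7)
The plan is to compare $a^N_{M-1}(S)$ and $a_{M-1}(S)$ term by term after applying the separating approximation from Lemma \ref{lemma:finalapprox} to both expressions, exploiting the fact that at $n=M-1$ the estimated and almost-idealized coefficients coincide (since $f_M = h_M$ involves no future continuation value), so we only need to analyse a single Monte-Carlo average rather than deal with the recursive $\delta^N_n$ error.

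First I would write out the two coefficient vectors explicitly. For the estimated one, use the identity $\phi(v^j_{M-1}(\omega'))\,\E^{\Q}[h_M(S_M)\mid S_{M-1}=S,[v^j]^M_{M-1}] = \E^{\Q}[\phi(v^j_{M-1})h_M(S_M)\mid S_{M-1}=S,[v^j]^M_{M-1}]$ (valid since $\phi(v^j_{M-1})$ is measurable with respect to the conditioning $\sigma$-algebra), and apply the first conclusion of Lemma \ref{lemma:finalapprox} path-by-path with $h=h_M$. This yields the representation
\begin{equation*}
a^N_{M-1}(S,\omega') = [A^N_{M-1}(\omega')]^{-1}\left(\tfrac{1}{N}\sum_{j=1}^N F(S,\omega'_j) + \sum_{i=1}^{k_{M-1}}\psi_{S,i}(S)\,\tfrac{1}{N}\sum_{j=1}^N \phi(v^j_{M-1})\,\E^{\Q}[\psi_{R,i}(R_{M-1})\mid [v^j]^M_{M-1}]\right),
\end{equation*}
where $\|F(\cdot,\omega'_j)\|_\infty\leq c\,\e$. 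For the idealized coefficient I would use the second conclusion of the same lemma to obtain
\begin{equation*}
a_{M-1}(S) = A^{-1}_{M-1}\left(G(S) + \sum_{i=1}^{k_{M-1}}\psi_{S,i}(S)\,\E^{\Q}[\phi(v_{M-1})\psi_{R,i}(R_{M-1})]\right),\qquad \|G\|_\infty\leq c'\e.
\end{equation*}

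Subtracting these two expressions, I would collect the pieces involving $F$ and $G$ into an error term whose norm is bounded by $c_{M-1}\,\e$ uniformly in $S$ and $\omega'$, using the uniform bounds on $[A^N_{M-1}]^{-1}$ and $A^{-1}_{M-1}$ from Assumption \ref{assump:trunc} together with the $\e$-bounds on $F,G$. What remains is a sum over $i=1,\dots,k_{M-1}$ of $\psi_{S,i}(S)$ multiplied by
\begin{equation*}
\alpha^{i,N}_{M-1,M}(\omega') := [A^N_{M-1}(\omega')]^{-1}\tfrac{1}{N}\sum_{j=1}^N \phi(v^j_{M-1})\,\E^{\Q}[\psi_{R,i}(R_{M-1})\mid [v^j]^M_{M-1}] - A^{-1}_{M-1}\E^{\Q}[\phi(v_{M-1})\psi_{R,i}(R_{M-1})],
\end{equation*}
which depends only on the volatility paths and not on $S$. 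This is exactly the form required by the lemma.

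Finally I would prove the almost-sure convergence $\alpha^{i,N}_{M-1,M}\to 0$. The random matrix $A^N_{M-1}$ is a sample mean of iid bounded random matrices $\phi(v^j_{M-1})\phi(v^j_{M-1})^\intercal$, so $A^N_{M-1}\to A_{M-1}$ $\Q'$-a.s.\ by the SLLN; combined with the uniform bound on $[A^N_{M-1}]^{-1}$ from Assumption \ref{assump:trunc}, the inverse converges a.s.\ as well. The scalar (vector-valued) averages $\tfrac{1}{N}\sum_j \phi(v^j_{M-1})\E^{\Q}[\psi_{R,i}(R_{M-1})\mid [v^j]^M_{M-1}]$ are also sample means of iid bounded random variables, so by the SLLN and (\ref{eqn:measequiv}) they converge to $\E^{\Q}[\phi(v_{M-1})\psi_{R,i}(R_{M-1})]$ a.s. Putting these pieces together by the continuous mapping of matrix multiplication, each $\alpha^{i,N}_{M-1,M}\to 0$ on a set of full measure; intersecting over the finitely many indices $i=1,\dots,k_{M-1}$ gives the common set $\Omega'_{M-1}$ with $\Q'(\Omega'_{M-1})=1$. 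No step is particularly delicate here since the recursion has not yet started; the main points requiring care are the bookkeeping of the uniform bound $c_{M-1}\,\e$ (it must not depend on $N$, $S$, or $\omega'$, which follows from the $\|\cdot\|_\infty$ bounds in Lemma \ref{lemma:finalapprox} together with the truncation conditions) and verifying that the separating representation is consistent for both the empirical and the theoretical side, so that the constant $c_{M-1}\,\e$ cleanly absorbs all $F$- and $G$-contributions.
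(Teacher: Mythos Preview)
Your proposal is correct and follows essentially the same route as the paper: write out both coefficient vectors, apply Lemma~\ref{lemma:finalapprox} to replace $h_M$ by its separable $\e$-approximation on each side, absorb the $F$/$G$ remainders into a uniform $c_{M-1}\,\e$ term via the truncation bounds, and define $\alpha^{i_{M-1},N}_{M-1,M}$ as the (norm of the) difference between the empirical and theoretical separable pieces, whose vanishing follows from the SLLN applied to the iid summands together with $[A^N_{M-1}]^{-1}\to A_{M-1}^{-1}$. The only cosmetic discrepancy is that the paper takes $\alpha^{i_{M-1},N}_{M-1,M}$ to be the \emph{norm} of the vector you wrote (so that the displayed inequality is between scalars), which you would obtain immediately upon applying the triangle inequality after subtraction.
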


\begin{proof}

\

Given $\e > 0$ we have
\begin{align*}
|a^{N}_{M-1}(S) - a_{M-1}(S)| &=| \ [A^{N}_{M-1}]^{-1} \frac{1}{N} \sum_{j=1}^{N} \phi(v^j_{M-1})\E^{\Q}[ h(S_M) \ | \ [v^j]_{M-1}^M, S_{M-1}=S  ] \\
& \ \ \ \ - A_{M-1}^{-1}  \E^{\Q}[ h(S_M) \  \phi(v_{M-1}) \ | \ S_{M-1}=S  ] \  | .\
\end{align*}
By Lemma \ref{lemma:finalapprox} we can find a constant $c$ and a separable $\e$-approximation, $ \psi_{M-1}(S,R)$ such that
\begin{align*}
|a^{N}_{M-1}(S) - a_{M-1}(S)| & \leq | \  \left[A^{N}_{M-1}\right]^{-1} \frac{1}{N} \sum_{j=1}^{N} \phi(v^j_{M-1}) \sum_{i_{M-1} = 1}^{k_M} \psi_{S,{i_{M-1}  }}(S) \E^{\Q}\left[  \psi_{R,i_{M-1}  }(R_{M-1})  \mid [v^j]_{M-1}^M \ \right]  \\
& \ \ \ - A_{M-1}^{-1} \sum_{i_{M -1}=1}^{k_M} \psi_{S,{i_{M-1}}} (S)\E^{\Q}\left[ \phi(v_{M-1}) \psi_{R,i_{M-1}  }(R_{M-1}) \ \right] \ | + c \ \e \\
& \leq  \sum_{i_{M -1 } =1}^{k_{M-1}  } |\psi_{S,i_{M -1} }(S)| \cdot \alpha^{i_{M-1},N}_{M-1,M} +  c \ \e \
\end{align*}
where the last line follows from interchanging the summations, $c$ depends on our truncation and separability conditions and
\begin{align*}
\alpha^{i_{M -1} ,N}_{M-1,M} &= | \  [A^{N}_{M-1}]^{-1} \frac{1}{N} \sum_{j=1}^{N} \phi(v^j_{M-1})\E^{\Q}\left[ \psi_{R,i_{M-1}  }(R_{M-1})  \mid [v^j]_{M-1}^M \ \right] \\
& \ \ \ \   - A_{M-1}^{-1} \E^{\Q}\left[  \phi(v_{M-1}) \psi_{R,i_{M-1}  }(R_{M-1}) \  \right] \ | .
\label{eqn:alphamm1}
\end{align*}
By the SLLN, $\lim_{N\to\infty} \alpha^{i_{M-1},N}_{M-1,M} = 0 $ a.s. for all $i_{M-1} \in \{1,...,k_M\}$.
\end{proof}

\begin{lemma}
Let $\e > 0$. We have
\begin{align*}
|a^{N}_{M-2}(S) - a_{M-2}(S)| & \leq c^N_{M-2} \ \e +  \sum_{i_{M-1} = 1}^{k_{M-1} } \alpha^{i_{M-1},N}_{M-1,M}\sum_{i_{M-2}}^{k_{M-2}} \alpha^{i_{M-2},i_{M-1},N}_{M-2,M-1}  \psi_{S,i_{M-2},i_{M-1}}(S) \\
& + \delta^N_{M-2}(S)\
\end{align*}
a.s. where $ \alpha^{i_{M-1},N}_{M-2,M-1}$ are random variables that depend on $\{ [v^j]_{M-2}^{M-1}\}_{j=1}^{\infty}$ and satisfy
\begin{equation*}
\limsup_{N\to\infty} \alpha^{i_{M-2},N}_{M-2,M-1} < \infty.
\end{equation*}
Also, $\limsup_{N\to\infty} c^N_{M-2} < \infty $ with a bound depending only on our truncation and separability conditions and $\psi_{S,i_{M-1},i_{M-2}}$ satisfy the conditions of Lemma \ref{lemma:finalapprox}. Lastly, $\delta^N_{M-2}$ is as defined in (\ref{eqn:deltadef}).
\label{lemma:mm2}
\end{lemma}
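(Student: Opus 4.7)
The plan is to start from Lemma \ref{lem:identity}, which gives
$|a^N_{M-2}(S) - a_{M-2}(S)| \le c\,\beta^N_{M-2}(S) + \delta^N_{M-2}(S)$. Since the $\delta^N_{M-2}$ term appears explicitly on the right-hand side of the target inequality, the entire task reduces to massaging $\beta^N_{M-2}(S)$ into the required doubly indexed separated form plus an $\e$-remainder with bounded limsup. I would work throughout on the full-measure set $\Omega'_{M-1}$ supplied by Lemma \ref{lemma:mm1}.

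First, I would substitute the separating bound of Lemma \ref{lemma:mm1} into the integrand defining $\beta^N_{M-2}$. The weights $\alpha^{i_{M-1},N}_{M-1,M}(\omega')$ are functions only of $\omega'\in\Omega'$, hence constants with respect to the inner $\Q$-expectation on $\Omega$, and may be factored out; the function $|\psi_{S,i_{M-1}}|$ remains continuous and compactly supported in $\calS$. This yields
\[
\beta^N_{M-2}(S) \;\le\; c_{M-1}\,\e \;+\; \sum_{i_{M-1}}\alpha^{i_{M-1},N}_{M-1,M}(\omega')\cdot\frac1N\sum_{j=1}^N \E^{\Q}\!\bigl[\,|\psi_{S,i_{M-1}}(S_{M-1})|\,\bigm|\,S_{M-2}=S,\,[v^j]_{M-2}^{M-1}\,\bigr].
\]
Next, for each fixed $i_{M-1}$, I would apply the geometric separation argument of Lemmas \ref{lemma:geom}--\ref{lemma:finalapprox} to the function $|\psi_{S,i_{M-1}}|$ in the role of $h$ (the version of Lemma \ref{lemma:finalapprox} without the $\phi(v_n)$ prefactor is obtained by the same Stone--Weierstrass argument on $\calS\times\calS$), producing a separable $\e$-approximation whose $S$-factors I label $\psi_{S,i_{M-2},i_{M-1}}$ and whose $R$-factors I label $\psi_{R,i_{M-2},i_{M-1}}$, together with a remainder $F_{i_{M-1}}(S,\omega')$ that is uniformly bounded by $c'\,\e$.

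Averaging over $j$ and setting
\[
\alpha^{i_{M-2},i_{M-1},N}_{M-2,M-1}(\omega') \;:=\; \frac1N\sum_{j=1}^N \E^{\Q}\!\bigl[\,\psi_{R,i_{M-2},i_{M-1}}(R_{M-2})\,\bigm|\,[v^j]_{M-2}^{M-1}\,\bigr],
\]
the SLLN applied to the iid sequence $\{[v^j]_{M-2}^{M-1}\}_{j\ge 1}$ produces a set $\Omega''_{M-2}$ of full $\Q'$-measure on which this average converges to $\E^{\Q}[\psi_{R,i_{M-2},i_{M-1}}(R_{M-2})]$; in particular $\limsup_N \alpha^{i_{M-2},i_{M-1},N}_{M-2,M-1}<\infty$ on $\Omega'_{M-2}:=\Omega'_{M-1}\cap\Omega''_{M-2}$. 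I would then collect the $\e$-contributions into
$c^N_{M-2}(\omega')\,\e \;=\; c\,c_{M-1}\,\e + c\sum_{i_{M-1}}\alpha^{i_{M-1},N}_{M-1,M}(\omega')\,\|F_{i_{M-1}}\|_\infty,$
and invoke $\alpha^{i_{M-1},N}_{M-1,M}\to 0$ on $\Omega'_{M-1}$ (from Lemma \ref{lemma:mm1}) to conclude $\limsup_N c^N_{M-2}(\omega')\le c\,c_{M-1}$, which depends only on the truncation and separability constants. Assembling the pieces and invoking Lemma \ref{lem:identity} gives the stated inequality.

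The main obstacle is the bookkeeping forced by applying the separation construction a second time: the ``inner'' function being separated is itself a product of $\psi$'s produced by the previous separation, which is what generates the nested sums $\sum_{i_{M-1}}\sum_{i_{M-2}}$ and the qualitatively different asymptotics of the two $\alpha$ families (the inner $\alpha^{i_{M-1},N}_{M-1,M}$ vanishes, whereas the outer $\alpha^{i_{M-2},i_{M-1},N}_{M-2,M-1}$ is only a.s.\ bounded, because it is a sample average rather than a sample error). Arranging the decomposition so that the $\e$-remainder absorbs all of the ``vanishing-times-small'' cross terms while the non-$\e$ part retains an explicitly separable structure suitable for the next backward induction step is the delicate bit.
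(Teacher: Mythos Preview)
Your proposal is correct and follows essentially the same approach as the paper: apply Lemma~\ref{lem:identity}, substitute the separating bound from Lemma~\ref{lemma:mm1} into $\beta^N_{M-2}$, then apply the Stone--Weierstrass separation (Lemmas~\ref{lemma:geom}--\ref{lemma:finalapprox}) to each $|\psi_{S,i_{M-1}}|$ and invoke the SLLN for the resulting path averages to define $\alpha^{i_{M-2},i_{M-1},N}_{M-2,M-1}$. The only cosmetic difference is that the paper additionally intersects $\Omega'_{M-2}$ with the full-measure set $\Omega'^{\delta}_{M-2}$ on which $\delta^N_{M-2}\to 0$, which is not strictly required by the lemma as stated but is used downstream.
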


\begin{proof}
Given $\e > 0$, and using Lemmas \ref{lem:identity} and \ref{lemma:mm1} we have a.s.
\begin{align*}
|a^{N}_{M-2}(S) - a_{M-2}(S) | &\leq \widetilde{c} \frac{1}{N} \sum_{j=1}^{N}  \E^{\Q}\left[  \left. |a^{N}_{M-1}(S_{M-1}) - a_{M-1}(S_{M-1}) |   \right.  \mid  \left. S_{M-2} = S, [v^j]^{M-1}_{M-2} \right. \right]  +  \delta^{N}_{M-2}(S) \\
& \leq c' \e + \sum_{i_{M-1}}^{k_{M-1}} \alpha^{i_{M-1},N}_{M-1,M} \frac{1}{N}\sum_{j=1}^N \E^{\Q}\left[ \psi_{S,i_{M-1}}(S_{M-1}) \mid S_{M-2} = S  \ [v^j]_{M-2}^{M-1}   \right] \\
& + \delta^{N}_{M-2}(S) \
\end{align*}
For each $\psi_{S,i_{M-1}}$, we apply Lemma \ref{lemma:finalapprox}, and obtain a separable $\e$-approximating function $\psi_{i_{M-1}}(S,R)$ of the form
\begin{equation*}
\psi_{i_{M-1}}(S,R) = \sum_{i_{M-2}=1}^{k_{M-2}} \psi_{S,i_{M-2},i_{M-1}}(S)\psi_{R,i_{M-2},i_{M-1}}(R)
\end{equation*}
and apply it to the expectation. This results in
\begin{align*}
|a^N_{M-2}(S) - a_{M-2}(S) | &\leq (c' + \widetilde{c}^{N}_{M-2}) \e + \sum_{i_{M-1} = 1}^{k_{M-1} } \alpha^{i_{M-1},N}_{M-1,M}\sum_{i_{M-2}=1}^{k_{M-2}} \alpha^{i_{M-2},i_{M-1},N}_{M-2,M-1}  \psi_{i_{M-2},i_{M-1}}(S) + \delta^N_{M-2}(S) \
\end{align*}
where $\lim_{N\to \infty} \widetilde{c}^N_{M-2} = 0$ and $\alpha^{i_{M-2},N}_{M-2,M-1} = \frac{1}{N} \sum_{j=1}^N \E^{\Q}\left[ \;  \psi_{R,i_{M-2},i_{M-1}}(R_{M-2}) \mid [v^j]_{M-2}^{M-1} \; \right] $.
\label{eqn:alphamm2} By the SLLN, $\delta^N_{M-2}(S)$  and  $\alpha^{i_{M-2},N}_{M-2,M-1}$  for each $(i_{M-1},i_{M-2})$ converge to 0 a.s. Finally, setting $c^N_{M-2} = c' + \widetilde{c}^N_{M-2}$ implies the lemma.
\end{proof}

\begin{lemma}
Let $n \in \{1,...,M-2 \}$, $ \delta^{N}_n(S)$ as in (\ref{eqn:deltadef})  and $\e > 0$. We have
\begin{equation*}
\delta^N_{n}(S) \leq c \cdot \e + \sum_{i_n=1}^{k_n} \beta^{i_n,N}_{n,n+1} \psi_{i_n}(S)
\end{equation*}
$\Q'$-a.s. where $\beta^{i_n,N}_{n,n+1}$ are random variables that depend on $\{ [v^j]_n^{n+1} \}_{j=1}^{\infty}$, $\lim_{N\to\infty} \beta^{i_n,N}_{n,n+1} = 0$ a.s., $c$ depends on our truncation and separability conditions, and $\psi_{i_n}$ satisfy the conditions of Lemma \ref{lemma:finalapprox}.
\label{lemma:deltaest}
\end{lemma}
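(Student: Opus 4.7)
The plan is to mimic the separation-based argument already used in Lemma \ref{lemma:mm1}, now applied directly to $\delta^N_n(S) = |\widetilde{a}^N_n(S) - a_n(S)|$. Both coefficients are built from
\[
\phi(v_n)\,\E^{\Q}\!\left[f_{n+1}(S_{n+1},v_{n+1}) \mid S_n=S, [v]_n^{n+1}\right],
\]
the only differences being that $\widetilde{a}^N_n$ uses the empirical average over the $N$ paths and the matrix $A^N_n$ in place of $A_n$. Since by Assumption~\ref{assump:separabilityconditions}, item~\ref{item:return}, we have $S_{n+1}=S\odot R_n$, the strategy is to approximate $f_{n+1}(S\odot R_n, v_{n+1})$ by a function separable in $S$ versus the pair $(R_n,v_{n+1})$, interchange the sum and expectation, and invoke the SLLN on each separable piece.

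First I would verify that $f_{n+1}(S,v)=\max(h_{n+1}(S), a_{n+1}(S)\cdot\phi(v))$, after multiplication by the bump function $\eta_E(R_n)$ coming from (\ref{eqn:posquad1}), can be taken to be continuous and compactly supported in $(S,R,v)$ on $\calS\times\calS\times\supp\phi$. Continuity of $h_{n+1}$ and $\phi$ is built into Assumption~\ref{assump:separabilityconditions}; continuity and boundedness of $a_{n+1}(S)$ follow by backward induction from $a_{n+1}(S)=A_{n+1}^{-1}\E^{\Q}[\phi(v_{n+1})f_{n+2}(S_{n+2},v_{n+2})\mid S_{n+1}=S]$ using the multiplicative structure of Assumption~\ref{assump:separabilityconditions} item~\ref{item:return} together with dominated convergence. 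Compact support in $R$ is enforced by $\eta_E$, while compact support in $v$ comes from Assumption~\ref{assump:trunc} item~\ref{item:cond1}; any residual non-decay in $S$ is handled by restricting attention to the compact $S$-region determined by $\supp h_{n+1}$ together with the ranges of $R$ appearing in $\eta_E$.

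Next I would apply the Stone-Weierstrass construction of Lemma~\ref{lemma:geom}, but with the tensor-product algebra $C_c(\calS)\otimes C_c(\calS\times\R^{d_v})$ separating $S$ from $(R,v)$, to obtain for given $\e>0$ a finite sum
\[
\psi_n(S,R,v)=\sum_{i_n=1}^{k_n}\psi_{S,i_n}(S)\,\psi_{Rv,i_n}(R,v)
\]
with $\|f_{n+1}(S\odot R,v)\eta_E(R)-\psi_n(S,R,v)\|_\infty<\e$. Substituting this approximation into both $\widetilde{a}^N_n(S)$ and $a_n(S)$ and bounding the residual pieces exactly as in the proof of Lemma~\ref{lemma:finalapprox} (using boundedness of $[A^N_n]^{-1}$, $A_n^{-1}$, and $\phi$, and noting that the error from replacing the full expectation by the $\eta_E$-truncated one is $O(\e)$ by (\ref{eqn:posquad1})), one obtains
\[
\delta^N_n(S,\omega')\leq c\,\e+\sum_{i_n=1}^{k_n}\psi_{S,i_n}(S)\,\beta^{i_n,N}_{n,n+1}(\omega'),
\]
where $\beta^{i_n,N}_{n,n+1}(\omega')$ is the absolute value of the difference between $[A^N_n]^{-1}\tfrac{1}{N}\sum_j\phi(v^j_n)\E^{\Q}[\psi_{Rv,i_n}(R_n,v_{n+1})\mid[v^j]_n^{n+1}]$ and $A_n^{-1}\E^{\Q}[\phi(v_n)\psi_{Rv,i_n}(R_n,v_{n+1})]$. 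By the SLLN each $\beta^{i_n,N}_{n,n+1}\to 0$ on a $\Q'$-full-measure set; intersecting these finitely many sets gives the required $\Omega^{'\delta}_n$.

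The main obstacle I expect is the continuity/compact-support verification for $f_{n+1}$, and in particular for the idealized coefficient $a_{n+1}(S)$ — the one place where Stone-Weierstrass genuinely needs this regularity. If continuity of $a_{n+1}(S)$ fails in a corner case, the fix is to replace $f_{n+1}$ by a continuous compactly supported $\e$-approximation (via Lusin's theorem combined with Assumption~\ref{assump:trunc}), absorbing that additional discrepancy into the leading $c\,\e$ term; the rest of the argument — triangle inequality, $\eta_E$-truncation, separable approximation, and SLLN over finitely many $i_n$ — is a direct adaptation of the closing steps of Lemma~\ref{lemma:mm1}.
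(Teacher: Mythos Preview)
Your proposal is correct and lands on exactly the same endgame as the paper (separable approximation, interchange of sums, SLLN on each piece, finite intersection of full-measure sets), but you shortcut a substantial preliminary stage. The paper does \emph{not} argue directly that $a_{n+1}(S)$ is continuous and in $C_0(\calS)$; instead it devotes the first half of the proof (``Preliminary estimates on $a_k,f_k$'') to constructing, by backward induction from $j=M-1$ down to $j=n+1$, explicit decompositions $a_j(S)=F_j(S)+G_j(S)$ with $|F_j|\le c_j\e$ and $G_j(S)=\sum_i c_{i,j}\psi_{S,i}(S)$ a finite sum of $C_c(\calS)$ functions. Only then does it replace $f_{n+1}$ by the surrogate $J_{n+1}(S,v)=\max(h(S),G_{n+1}(S)\cdot\phi(v))$, which is manifestly assembled from $C_c$ pieces, and apply Stone--Weierstrass to that. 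Your route---continuity and $C_0$-decay of $a_{n+1}$ via dominated convergence and the multiplicative structure $S_{n+1}=S\odot R_n$, then a single application of Stone--Weierstrass to $f_{n+1}(S\odot R,v)\eta_E(R)$---is more economical and avoids the nested inductive construction; the paper's route is more constructive and sidesteps having to verify $C_0$-behaviour of $a_{n+1}$ as a separate lemma.

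One small point to tighten: your claim that ``compact support in $v$ comes from Assumption~\ref{assump:trunc} item~\ref{item:cond1}'' is not quite right, since outside $\supp\phi$ one has $f_{n+1}(S,v)=\max(h_{n+1}(S),0)=h_{n+1}(S)$, which need not vanish. The easy repair is to split $f_{n+1}=h_{n+1}+(f_{n+1}-h_{n+1})$: the first summand has no $v$-dependence and is handled by Lemma~\ref{lemma:geom} directly, while the second vanishes outside $\supp\phi$ and so genuinely lies in $C_0(\calS\times\calS\times\R^{d_v})$ after the $\eta_E$ truncation. (The paper's assertion that $\max(\widetilde h,\sum d_i\widetilde\psi_{S,i})\in C_c(\calS\times(\calS\times\R^{d_v}))$ has the same issue, so this is not a defect peculiar to your argument.)
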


\begin{proof}
We break the proof up into multiple stages.

\textbf{Preliminary estimates on} $a_k, f_k$.

Let $f_k$ be as defined in (\ref{eqn:functionf}). We begin by showing that for each $j \in \{n,...,M-1\}$ that
\begin{align*}
a_j(S) &= F_j(S) + G_j(S) \\
f_j(S,v) &= H_j(S,v) + J_j(S,v)  \
\end{align*}
where $|F_k(S)| \leq c_{k} \e$, $|H_k(S,v)| \leq c'_{k} \e$ with $c_{k},c'_k$ depending on our truncation conditions and separability conditions. The functions $G_j$ admit the representation $G_j(S) = \sum_{i_j =1}^{k_j} c_{i_j, j} \psi_{S,i_j}(S) $ where $c_{i_j,j} \in \R^{d_B} $ and $\psi_{S,i_j} \in C_c(\calS)$. Finally $J_j(S,v) = \max( h(S),G_{j}(S) \cdot \phi(v))$.
To this end, we let $j = M-1$ and have that
\begin{align}
a_{M-1}(S) =& A^{-1}_{M-1} \E^{\Q}\left[ \phi(v_{M-1}) h(S_M)  \mid S_{M-1} = S  \right] \nonumber \\
       =& A^{-1}_{M-1} \E^{\Q}[ \phi(v_{M-1}) h(S \odot R_{M-1})  (1-\eta_{E}(R_{M-1}))  \   ] \nonumber \\
       & + A^{-1}_{M-1} \E^{\Q}\left[ \phi(v_{M-1}) h(S \odot R_{M-1})  \eta_{E}(R_{M-1})  \right]  \
\label{eqn:am1}
\end{align}
where $E$ is as in (\ref{eqn:posquad2}) and focus on the second term. We let
\begin{align*}
\widetilde{\Xi}_{M-1} & : \R^{2d_S + d_v } \to \R^{d_B} \\
\widetilde{\Xi}_{M-1}(S,R,v_1) & =  \phi(v_1)  h(S \odot R) \eta_E(R) \\
                                   & =  \phi(v_1) \widetilde{h}(S ,R) \
\end{align*}
and apply Lemma \ref{lemma:geom} to find an $\e$-separating function of the form  $\psi_{M-1}(S,R) = \sum_{i_{M-1} = 1}^{k_{M-1}} \psi_{S,i_{M-1}}(S)\psi_{R,i_{M-1}}(R)$ where $\psi_{S,i_{M-1}}, \psi_{R,i_{M-1}} \in C^0_c(\calS)$ and $||\widetilde{h}-\psi_{M-1}||_{\infty} < \e$.
This allows us to write
\begin{equation*}
\widetilde{\Xi}_{M-1}(S,R,v_1) =  F_{2,M-1}(S,R,v_1) + \phi(v_1)\psi_{M-1}(S,R)
\end{equation*}
where
$ F_{2,M-1}(S,R,v_1) = \phi(v_1)(\widetilde{h}(S,R) - \psi_{M-1}(S,R) )$ and $
\E^{\Q}[ F_{2,M-1}(S,R_{M-1},v_{M-1}) ]  < c \ \e$ with $c$ depending on our truncation and separability conditions.



Returning to our expression in $(\ref{eqn:am1})$, we find
\begin{align*}
a_{M-1}(S) &= F_{M-1}(S) + \sum_{i_{M-1} =1}^{k_{M-1}} c_{i_{M-1},M-1} \psi_{S,i_{M-1} }(S) \\
           &=: F_{M-1}(S) + G_{M-1}(S) \
\end{align*}
where $|F_{M-1}(S)| \leq c_{M-1} \  \e$ for all $S \in \calS$ , $c_{M-1}$ depends on our truncation and separability conditions, and
\begin{equation*}
c_{i_{M-1},M-1} = \E^{\Q}\left[  \phi(v_{M-1}) \psi_{R,i_{M-1}}(R_{M-1})   \right].
\end{equation*}
We now turn to $f_{M-1}(S,v)$
\begin{align*}
f_{M-1}(S,v) &= \max( h(S) , a_{M-1}(S)\cdot \phi(v) ) \\
&= \max(h(S)  , F_{M-1}(S)\cdot \phi(v) + G_{M-1}(S) \cdot \phi(v) )\\
&= \Big[\max(h(S), F_{M-1}(S)\cdot \phi(v) + G_{M-1}(S) \cdot \phi(v) ) - \max(h(S), G_{M-1}(S)\cdot \phi(v) ) \Big] \\
&+ \max(h(S), G_{M-1}(S) \cdot \phi(v) ). \
\end{align*}
Equating $H_{M-1}(S,v)$ to the term in brackets and $J_{M-1}(S,v) := \max( h(S),G_{M-1}(S)\cdot \phi(v))$,
gives us the required form and concludes the claim for $j = M-1$.

Now let $j \in \{1,...,M-2\}$, we have
\begin{align*}
a_j(S) & = A^{-1}_{j} \E^{\Q}\left[ \phi(v_j)f_{j+1}(S_{j+1},v_{j+1}) \right] \\
       & = A^{-1}_j \E^{\Q}\left[ \phi(v_j)H_{j+1}(S_{j+1},v_{j+1}) \right] \\
       & + A^{-1}_j \E^{\Q}\left[ \phi(v_j) \max(h(S\odot R_{j} ), G_{j+1}(S\odot R_j) \cdot \phi(v_{j+1})   ) (1 - \eta_{E_{j+1}}(R_n) )  )  \right]  \\
       & + A^{-1}_j \E^{\Q}\left[ \phi(v_j) \max(h(S\odot R_j), G_{j+1}(S\odot R_j) \cdot \phi(v_{j+1})   )  \eta_{E_{j+1} }(R_n)     \right],  \
\end{align*}
where $E_{j+1}$ is as in (\ref{eqn:posquad1}) so that $\Q(\Omega^{\e_{j+1},c}) < \frac{\e}{||J_{j+1}||_{\infty} }$, and focus on the final term. By assumption, $G_{j+1}(S) = \sum_{i_{j+1}=1}^{k_{j+1}} c_{i_{j+1} } \psi_{S,i_{j+1}  }(S)$ where $c_{i_{j+1} }  \in \R^{d_B}$. Letting $d_{i_{j+1}  }(v) = c_{i_{j+1} }\cdot\phi(v)$, we are led to consider the function
\begin{align*}
\widetilde{\Xi}_{j} &: \R^{2d_S + 2d_v} \to \R^{d_B} \\
\widetilde{\Xi}_{j}(S,R,v_1,v_2) &= \phi(v_1) \max\left( h(S\odot R)\eta_{E_{j+1}}(R), \sum_{i_{j+1}=1}^{k_{j+1}} d_{i_{j+1}}(v_2) \psi_{S,i_{j+1}}(S\odot R)\eta_{E_{j+1}}(R)  \right) \\
&= \phi(v_1) \max\left( \widetilde{h}(S,R)  , \sum_{i_{j+1}=1}^{k_{j+1}} d_{i_{j+1}}(v_2) \widetilde{\psi}_{S,i_{j+1}}(S,R)  \right) \
\end{align*}
where $\widetilde{\psi}_{S,i_{j+1}}(S,R) = \psi_{S,i_{j+1}}(S,R)\eta_{E_{j+1}}(R)$.
Before carrying out a construction very similar to the $j = M-1$ case, we let $A_{j+1}  =   ||\widetilde{h}||_{\infty} + \sum_{i_{j+1} = 1}^{k_{j+1}} ||d_{i_{j+1}}||_{\infty} ||\widetilde{\psi}_{S,i_{j+1}} ||_{\infty}.$ Next, we note that
\begin{equation*}
\max\left( \widetilde{h}   , \sum_{i_{j+1}=1}^{k_{j+1}} d_{i_{j+1}} \widetilde{\psi}_{S,i_{j+1}}  \right) \in C_c\left(\calS \times \left( \calS \times \R^{d_v} \right) \right)
\end{equation*}
and find a function of the form $ \psi_j(S,R,v) = \sum_{i_{j}=1}^{k_{j}  } \psi_{S,i_j}(S)\psi_{R,v,i_j}(R,v)$ such that $\psi_{S,i_j} \in C_c(\calS)$, $\psi_{R,v,i_j} \in C_c(\calS \times \R^{d_v})$ and
\begin{equation*}
\left|\left|  \max \left( \widetilde{h} , \sum_{i_{j+1} =1}^{k_{j+1} } d_{i_{j+1}} \widetilde{\psi}_{S,i_{j+1} }  \right) - \psi_j \right|\right|_{\infty} < \e.
\end{equation*}
Finally, we write $a_j(S) = F_{j,1}(S) + F_{j,2}(S) + F_{j,3}(S) + \sum_{i_j=1}^{k_j} c_{i_j,j} \psi_{S,i_j}(S)$ where
\begin{align*}
F_{j,1}(S) &= A^{-1}_j \E^{\Q}\left[ \phi(v_j)H_{j+1}(S_{j+1},v_{j+1}) \ | \ S_j = S \right],\\
F_{j,2}(S) &= A^{-1}_j \E^{\Q}\left[ \phi(v_j) \max(h(S\odot R_{j} ), G_{j+1}(S\odot R_j) \cdot \phi(v_{j+1})   ) (1 - \eta_{E_{j+1}}(R_n) )  )  \right],  \\
F_{j,3}(S) &= \E^{\Q}\left[ \ \widetilde{\Xi}_j(S,R_j,v_j,v_{j+1}) - \phi(v_j) \psi_j(S,R_j,v_{j+1})   \ \right]\
\end{align*}
so that for  $i=1,2,3$, we have $||F_{j,i}||_{\infty} \leq c'_{j,i} \e  $ where $c'_{j,i}$ depend on our truncation and separability conditions and $c_{i_j,j}  = \E\left[  \phi(v_j)\psi_{R,v,i_j}(R_j,v_{j+1} )  \right].$ Also we have that $\psi_{S,i_j }(S) \in C_c(\calS) $ . Writing $F_{j} = F_{j,1} + F_{j,2} + F_{j,3}$ and $G_j(S)$ to be the final term completes the claim for $a_j(S)$. Showing the result for $f_{j}(S,v)$ is analogous to the base case and so we omit the proof.

\textbf{Estimates of $\delta^N_n(S)$}

We write
\begin{align*}
\delta^N_n(S) &= | \ \left[A^N_n\right]^{-1} \frac{1}{N}\sum_{j=1}^N \E^{\Q}\left[ \phi(v_n) f_{n+1}(S_{n+1},v_{n+1})   \mid S_n = S, [v^j]_n^{n+1} \right] \\
& - A_n^{-1}\E^{\Q}\left[ \phi(v_n) f_{n+1}(S_{n+1},v_{n+1})   \mid S_n = S \right] \ | \\
& \leq | \ \left[A^N_n\right]^{-1} \frac{1}{N}\sum_{j=1}^N \E^{\Q}\left[ \phi(v_n) J_{n+1}(S_{n+1},v_{n+1})  \mid S_n = S, [v^j]_n^{n+1} \right] \\
&- A_n^{-1}\E\left[ \phi(v_n) J_{n+1}(S_{n+1},v_{n+1})  \mid S_n = S \right] \
+ c \ \e \\
& \leq | \ \left[A^N_n\right]^{-1} \frac{1}{N}\sum_{j=1}^N \phi(v^j_n)\E^{\Q}\left[  \max( h(S_{n+1}) ,G_{n}(S_{n+1}) \cdot \phi(v_{n+1})) \eta_E(R_n) \mid S_n = S, [v^j]_n^{n+1} \right] \\
&- A_n^{-1}\E^{\Q} \left[ \phi(v_n)  \max( h(S_{n+1}),G_{n}(S_{n+1}) \cdot \phi(v_{n+1})) \eta_E(R_n)   \mid S_n = S \right] \ |  \
+ c' \ \e \
\end{align*}
where $c$ and $c'$ depend on our truncation conditions. We now focus on the expression within our expectations and define the function
\begin{align*}
\widetilde{\Xi}'_{n}             &: \R^{2d_S  + 2d_v}  \to \R^{d_B} \\
\widetilde{\Xi}'_n(S,R,v_1,v_2)  & = \phi(v_1) \max( h(S \odot R),G_{n}(S \odot R) \cdot \phi(v_2)) \ \eta_E(R) \\
                                 & = \phi(v_1) \max( h(S \odot R)  \ \eta_E(R),G_{n}(S \odot R) \eta_E(R) \cdot \phi(v_2) ) \\
					             &  = \phi(v_1) \max( \widetilde{h}(S,R)  ,\widetilde{G}_{n}(S,R) \cdot \phi(v_2)\
\end{align*}
Applying techniques that are exactly analogous to previous steps, we obtain a separating estimate for $\widetilde{\Xi}'$ of the form $\widetilde{\Xi}'_n(S,R,v_1,v_2) = F(S,R,v_1,v_2) + \phi(v_1)\sum_{i_n=1}^{k_n} \psi_{S,i_n}(S)\psi_{R,v,i_n}(R,v_2)$ where $F$ is appropriately bounded and $\psi_{S,i_n} \in C^0_c(\calS)$. This leads to $\delta^N_n(S) \leq  \ c \ \e  + \sum_{i_n=1}^{k_n} |\psi_{S,i_n}(S)| \cdot \beta^{i_n,N}_{n,{n+1}}$ where $c$ again depends on our truncation conditions and
\begin{equation*}
\beta^{l,N}_{n,n+1} = | \  \left[A^N_n\right]^{-1} \ \frac{1}{N} \sum_{j=1}^N \phi(v^j_n) \E^{\Q}\left[ \psi_{R,v,i_n}(R_n,v_{n+1}) \mid [v^j]_n^{n+1} \ \right] -  A^{-1}_n \ \E^{\Q}\left[ \phi(v^j_n) \psi_{R,v,i_n}(R_n,v_{n+1}) \ \right] \ | .
\end{equation*}
By the SLLN, for each $i_n$, we have that $\lim_{N\to\infty} \beta^{i_n,N}_{n,n+1} = 0$ a.s completing the proof.
\end{proof}

\begin{proposition}
Let $n \in \{1,...,M-3\}$ and $\e>0$. We have
\label{lemma:sepest}

\begin{equation}
|a^{N}_n(S) - a_n(S)|   \leq c_n^{N} \ \e + \delta^N_n(S) + \alpha^N_n(S) + \sum_{l=n+1}^{M-2} \beta^N_{l,n}(S)
\label{eqn:main}
\end{equation}
a.s. where
\begin{equation*}
\alpha_n^N(S) = \sum_{i_{M-1} = 1}^{k_{M-1} } \alpha^{i_{M-1},N}_{M-1,M} \sum_{i_{M-2}=1}^{k_{M-2}} \alpha^{i_{M-2},i_{M-1},N}_{M-2,M-1} \ \dots \  \sum_{i_n=1}^{k_n} \alpha^{i_n,\dots,i_{M-1}, N}_{i_n,i_{n+1}} \cdot \psi^{n,\alpha}_{i_n,\dots,i_{M-1}}(S),
\end{equation*}
\begin{equation*}
\beta_{l,n}^N(S) =    \sum_{i_{l} = 1}^{m_l } \beta^{l,i_{l},N}_{l-1,l} \sum_{i_{l-1}=1}^{m_{l-1}} \beta^{l,i_l,i_{l-1},N}_{l-2,l-1} \ \dots \  \sum_{i_{n}=1}^{m_{n}} \beta^{l,i_l,\dots,i_{n}, N}_{n,n+1} \cdot \psi^{n,\beta}_{i_{l},\ldots,i_n}(S),
\end{equation*}
\begin{align*}
\lim_{N\to \infty} \alpha^{i_{M-1},N}_{M-1,M} = 0, & \ \limsup_{N} \alpha^{ i_j,\ldots,i_l, N}_{k-1,k}  < \infty, \ \forall \ k \in \{n,...,M-1\}  \
\end{align*}
a.s. and
\begin{align*}
 \forall \ l \in \{ n+1,\ldots,M-2\}, & \; \forall j \in \{ n,\ldots, l  \}  \\
 \lim_{N\to \infty} \beta^{l,i_l,N}_{l-1,l} = 0, & \ \limsup_{N} \beta^{ l,i_j,\ldots,i_l, N}_{k-1,k} < \infty \  \
\end{align*}
almost-surely. The bounds on $\limsup_{N} c^N_n$ depend on our truncation and separability conditions and $\psi^{k,\alpha}_{i_l,\ldots ,i_n },\psi^{k,\beta}_{i_l,\ldots ,i_n}$ satisfy the conditions of Lemma \ref{lemma:geom}.

\label{prop:main}
\end{proposition}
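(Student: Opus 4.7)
The plan is to establish (\ref{eqn:main}) by backward induction on $n$, taking Lemma \ref{lemma:mm2} as the base case (i.e., $n = M-2$, with the convention that the empty sum $\sum_{l=M-1}^{M-2}\beta^N_{l,n} = 0$) and descending from $n = M-3$ down to $n = 1$. At each inductive step I would apply Lemma \ref{lem:identity} to get
\begin{equation*}
|a^N_n(S) - a_n(S)| \leq c \cdot \beta^N_n(S) + \delta^N_n(S),
\end{equation*}
then expand $\beta^N_n(S)$ by substituting the inductive bound on $|a^N_{n+1}(S_{n+1}) - a_{n+1}(S_{n+1})|$ into the conditional expectation in its definition, and simultaneously bound $\delta^N_n(S)$ via Lemma \ref{lemma:deltaest}.

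The mechanism that makes the recursion close is the separation trick of Lemmas \ref{lemma:geom}--\ref{lemma:finalapprox}. Every continuous, compactly supported function $\psi_{S,i_{n+1},\ldots,i_{M-1}}(S_{n+1})$ appearing in the inductive bound can, after writing $S_{n+1} = S\odot R_n$ and inserting the bump $\eta_E$, be $\e$-approximated in sup-norm by a finite sum $\sum_{i_n}\psi^{n,\alpha}_{i_n,\ldots,i_{M-1}}(S)\,\psi_{R,i_n,\ldots,i_{M-1}}(R_n)$. Taking the conditional expectation with respect to $[v^j]_n^{n+1}$, averaging over the $N$ simulated paths, and invoking the SLLN then produces the new random coefficients $\alpha^{i_n,\ldots,i_{M-1},N}_{n,n+1}(\omega')$ multiplying $\psi^{n,\alpha}_{i_n,\ldots,i_{M-1}}(S)$, with an $O(\e)$ residual absorbed into $c_n^N(\omega')\e$. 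Thus $\alpha^N_n$ acquires exactly one new summation layer per induction step; the outermost factor $\alpha^{i_{M-1},N}_{M-1,M}$ still tends to zero because it was zero-limiting at the base case, while each freshly introduced factor $\alpha^{i_n,\ldots,N}_{n,n+1}$ inherits only a bounded $\limsup$, being a single SLLN average of bounded quantities.

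The $\delta^N_n$ terms are handled by iterating the same separation mechanism. Lemma \ref{lemma:deltaest} provides $\delta^N_n(S) \leq c\e + \sum_{i_n}\beta^{i_n,N}_{n,n+1}\psi_{i_n}(S)$ with $\beta^{i_n,N}_{n,n+1}\to 0$ a.s. When the estimate for $\delta^N_l$ created at time $l$ is pushed backward through $\beta^N_{l-1},\beta^N_{l-2},\ldots,\beta^N_n$ via repeated application of Lemma \ref{lemma:finalapprox} and the SLLN, one generates a new nested layer $\beta^{l,i_l,\ldots,i_k,N}_{k,k+1}$ at each backward step. The outermost factor at the time of creation, $\beta^{l,i_l,N}_{l-1,l}$, still vanishes in the limit, while the inner factors retain only bounded $\limsup$. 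Summing over the index $l$ at which each $\delta^N_l$ first entered the recursion yields the aggregate $\sum_{l=n+1}^{M-2}\beta^N_{l,n}(S,\omega')$ of the proposition.

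The main obstacle is combinatorial bookkeeping: tracking how the number of distinct $\alpha$ and $\beta$ random variables grows with the depth of nesting (multiplicatively, but remaining finite for fixed $n$), identifying the correct outermost factor at each stage, and realizing $\Omega'_n$ as the finite intersection of $\Omega'_{M-1}$, $\Omega'_{M-2}$, the sets $\Omega'^{\delta}_l$ for $l \in \{n,\ldots,M-2\}$ from Lemma \ref{lemma:deltaest}, and the full-measure SLLN sets associated with the finitely many new $\alpha, \beta$ coefficients introduced at this stage. Uniform boundedness of $\limsup_N c^N_n(\omega')$ then follows from the truncation conditions --- in particular the bound on $|[A^N_n]^{-1}|$ from Assumption \ref{assump:trunc}.\ref{item:cond2} --- together with the uniform sup-norm bounds on $\phi$, $h$, and the bump functions $\eta_E$ supplied by the separability conditions.
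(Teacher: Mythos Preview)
Your proposal is correct and follows essentially the same approach as the paper's proof: backward induction anchored at Lemma~\ref{lemma:mm2}, with Lemma~\ref{lem:identity} at each step, the separation technique of Lemmas~\ref{lemma:geom}--\ref{lemma:finalapprox} to push the $\psi(S_{n+1})$ terms through the conditional expectations, Lemma~\ref{lemma:deltaest} to handle the propagated $\delta^N_{l}$ terms (those with $l>n$), and the SLLN plus intersection of full-measure sets to assemble $\Omega'_n$. One minor clarification: in the final bound $\delta^N_n(S,\omega')$ is kept as a standalone term rather than replaced by its Lemma~\ref{lemma:deltaest} estimate; that lemma is invoked only for the $\delta^N_{l}$ with $l\ge n+1$ that appear inside conditional expectations and must be separated, which your later paragraph makes clear you understand.
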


By taking limits on both sides, and noting the properties of the random variables, mappings, and constants described, the main theorem then follows.

\begin{proof}[Proof of Proposition]
We begin by letting $n = M-3$ and by Lemma \ref{lem:identity} and Lemma \ref{lemma:mm2} we have almost-surely
\begin{align}
&|a^{N}_{M-3}(S) - a_{M-3}(S) | \nonumber \\
&\leq \delta^{N}_{M-3}(S) + \widetilde{c} \frac{1}{N} \sum_{j=1}^{N}  \E^{\Q}\left[  \left. |a^{N}_{M-2}(S_{M-2}) - a_{M-2}(S_{M-2}) |   \right.  \mid  \left. S_{M-3} = S, [v^j]^{M-2}_{M-3} \right. \right] \nonumber \\
&\leq  \delta^N_{M-3}(S) + c^{N}_{M-2}  \nonumber \\
& + \widetilde{c}\sum_{i_{M-1} = 1}^{k_{M-1} } \alpha^{i_{M-1},N}_{M-1,M}\sum_{i_{M-2}}^{k_{M-2}} \alpha^{i_{M-2},i_{M-1},N}_{M-2,M-1} \frac{1}{N}\sum_{j=1}^N \E^{\Q}\left[ \psi_{S,i_{M-2},i_{M-1}}(S_{M-2}) \mid S_{M-3} = S, [v^j]_{M-3}^{M-2}  \right] \label{eqn:C2} \\
& + \widetilde{c} \frac{1}{N} \sum_{j=1}^{N} \E^{\Q}[ \delta^N_{M-2}(S_{M-2})\ | \ S_{M-3} = S, [v^j]_{M-3}^{M-2} \ ] \label{eqn:D2}   \
\end{align}
Line (\ref{eqn:C2}) may be handled just as in the proof of Lemma \ref{lemma:mm2}. As for line (\ref{eqn:D2}), we use Lemma \ref{lemma:deltaest} and write
\begin{align*}
& \frac{1}{N} \sum_{j=1}^{N} \E^{\Q}[ \delta^N_{M-2}(S_{M-2})\ | \ S_{M-3} = S, [v^j]_{M-3}^{M-2} \ ] \\
& \leq c \ \e + \sum_{i_{M-2}=1}^{m_{M-2}} \beta_{M-2,M-1}^{M-2,i_{M-2},N}\frac{1}{N}\sum_{j=1}^{N}\E^{\Q}[ \psi_{i_{M-2}}(S_{M-2}) \ | \ S_{M-3} = S, [v^j]_{M-3}^{M-2}   ] \
\end{align*}
where $\lim_{N \to \infty} \beta_{M-3,M-2}^{M-2,i_{M-2},N} = 0$ a.s. and we apply the usual separation technique, the details of which we omit. We then obtain
\begin{align*}
&|a^{N}_{M-3}(S) - a_{M-3}(S) |  \\
&\leq c^{N}_{M-3} +  \delta^N_{M-3}(S) \\
& + \sum_{i_{M-1} = 1}^{k_{M-1} } \alpha^{i_{M-1},N}_{M-1,M}\sum_{i_{M-2}}^{k_{M-2}} \alpha^{i_{M-2},i_{M-1},N}_{M-2,M-1} \sum_{i_{M-3}}^{k_{M-3}} \alpha^{i_{M-3},i_{M-2},i_{M-1},N}_{M-3,M-2}\cdot \psi^{M-3,\alpha}_{S,i_{M-3},i_{M-2},i_{M-1} } (S)    \\
& + \sum_{i_{M-2}=1}^{k_{M-2}} \beta^{M-2,i_{M-2},N}_{M-2,M-1} \sum_{i_{M-3}=1 }^{m_{M-3}} \beta^{M-2,i_{M-2},i_{M-3},N}_{M-3,M-2} \psi^{M-2,\beta}_{i_{M-3},i_{M-2}} (S) \
\end{align*}
almost-surely which corresponds to $l = M-2$ and the above random variables satisfy the necessary conditions. The remainder of the induction works exactly analogously to the base case and proofs of the previous Lemmas which establishes the Proposition.
\end{proof}

\section{Overview of Complexity and Multi-Level Monte Carlo/Multi-Grids}
\label{section:complexity}

By viewing our conditional PDEs from the SPDE perspective, we are naturally led to consider Multi-Level Monte Carlo (MLMC)/multi-grid methods as in \cite{Giles} to reduce the complexity. For the direct estimator, our coefficients depend on the expectation of the solutions of our SPDEs, which may be computed using MLMC. Also, the low estimator is expressed as an expectation which may be computed using MLMC. The MLMC adjustment tends to reduce the run times by at least an order of magnitude. MLMC approaches have also appeared in the mixed MC-PDE works of \cite{Ang:2013zl} and \cite{dmd3}.

A pseudo-code description of the methodology based on this idea, and outlined below, is provided in Algorithm \ref{alg:MLMC} in \ref{app:algstate}.

\subsection{Multi-Level Monte Carlo/Multi-Grids}

\subsubsection{MLMC for Computing the Estimated Coefficients}

Assume we have $L$ independent simulations of $v_t$ on $[0,T]$, denoted $\{(v^{(l)}_t)\}_{l=0}^L$, with $N^{v}_l$ paths where $N^{v}_{l} > N^{v}_{l+1}$. Also, let $P_{n}^{(j,l)}(S)$ denote the numerical solution of the conditional PDE on the $j$th path with grid resolutions $N^{\calS}_{l}$, in each dimension, over $[t_n,t_{n+1}]$ where $N^{\calS}_l < N^{\calS}_{l+1}$.

Using the usual multi-level MC appoach, we may write
\begin{equation}
a^{N}_n(S) \approx \left[A^{N_0}_n\right]^{-1} \Bigg[ \frac{1}{N^{v}_0} \sum_{j=1}^{N_0} \phi( v^{j,0}_n ) \cdot  P_n^{(j,0)}(S) + \sum_{l=0}^{L-1} \frac{1}{N_l}\sum_{j=1}^{N_l} \phi( v^{j,l}_n ) \cdot (P^{j,l+1}_n(S) - P^{j,l}_n(S)   ) \Bigg]\,,
\label{eqn:MLMCcoeff}
\end{equation}
where each of $P^{j,l}_n(S)$ is interpolated to have a resolution that matches the grid at level $L$.

A na\"ive implementation may end up carrying out many thousands of interpolations when computing (\ref{eqn:MLMCcoeff}), and also when generating terminal conditions for the PDEs. Our pseudo-code description (Algorithm \ref{alg:MLMC} in \ref{app:algstate}) shows how to keep the total number of interpolations to a minimum.

\subsubsection{MLMC for Low Biased Estimates of the Time Zero Price}

We again carry out $L$ independent simulations of $v_t$ on $[0,T]$, denoted as $\{(v^{(l)}_t)\}_{l=0}^L$, with $N^{v}_l$ paths where $N^{v}_{l} > N^{v}_{l+1}$. Letting $P_0^{(j,l)}(S)$ denote the numerical solution of the conditional PDE on the $j$th path with grid resolution $N^{\calS}_{l}$ over $[0,T]$ where $N^{\calS}_l < N^{\calS}_{l+1}$ we have
\begin{equation*}
V_{l,0}(S) \approx \frac{1}{N^{v}_0} \sum_{j=1}^{N^{v}_0} P_0^{(j,0)}(S) + \sum_{l=0}^{L-1} \frac{1}{N^{v}_l}\sum_{j=1}^{N^{v}_l}  (P_0^{j,l+1}(S) - P_0^{j,l}(S)   )\,,
 \end{equation*}
 where again we interpolate the lower resolution grids to match the highest resolution grid. We do not provide a pseudo-code description for this part as it is relatively simple.

\subsection{An Overview of Complexity}

While a complete complexity analysis lies outside the scope of this paper, we provide an overview of the sources of costs and error, along with a discussion of the complexity of our MLMC-FST scheme.

The direct estimator has three main stages: simulation of paths, solution of conditional PDEs, and the regression step. Assuming an Euler discretization for the simulation of paths, the FST method to solve the conditional PDEs, and Gauss-Jordan elimination for inverting $A^N_{n}$, we obtain the following expression for the algorithm's cost:
\begin{align*}
C &= C_{\text{Path. Sim.}} + C_{\text{PDE} } + C_{\text{Reg.} }  \\
  &\sim N_v \cdot N_t +  N_S^{d_S} \cdot (\log N_S )\cdot N_v + (N_v\cdot d_B + d_B^3 + d_B^2 \cdot N_S^{d_S} ).  \
\end{align*}
The terms in $C_{\text{Reg.}}$ correspond to the costs of constructing $A^N_n$, computing its inverse and multiplying $[A^N_n]^{-1}$ to a length $d_B$ vector at each of the $N_S^{d_S}$ regression sites.

We make the following rough estimate of the algorithm's error at a single point of our grid, $S \in \calS$,
\begin{align*}
\e(S) &\sim \e_{\text{Path. Sim.}} + \e_{\text{PDE} } + \e_{\text{Reg.} }  \\
   &\sim \left(\frac{1}{\sqrt{N_v} } + \frac{1}{N_t} \right) + \left( \frac{1}{N_S^2} + \frac{1}{\sqrt{N_v}} \right) + E(d_B,N_v,N_t,N_S) \
\end{align*}
where determining $E$ is a topic for future papers. For other forms of LSMC, the structure of $\e_{\text{Reg.}}$ has been investigated in \cite{GlassermanYu}, \cite{Stentoft}, \cite{Egloff}, \cite{Belomestny:2011aa}. Another important question for future papers is how the quantities $N_v, N_t, N_S, d_B$ contribute to the error, and ought to be optimally balanced, especially in the context of MLMC that we discuss below.

In Sections \ref{section:NumericalExamples}, alongside implementing the hybrid LSMC/PDE algorithm for Heston-type models, we also investigate how the bias, variance, and cost behave on the various levels of our MLMC-FST scheme in the context of single-period expected values. By the general multi-level theorem (GMLT) (see \cite{Giles2} and \cite{Giles}), we are able to suggest a bound for the complexity of the MLMC-FST component in isolation from the rest of the algorithm. The choice of the highest resolution level for the PDE stage may not be optimal for the entire algorithm as the finest grid is used in the regression stage. The optimal choice of the finest grid should also take the costs and errors of the regression into account; a topic we do not fully address in this paper. On the other hand, the GMLT sheds light on the potential improvements brought about by MLMC to the main bottle-neck of the algorithm. The work of \cite{Giles} also provides an optimal allocation scheme for the number of paths on each level, however, the traditional scheme is for a scalar-valued single-period problem as opposed to a vector-valued multi-period problem. While our MLMC scheme deals primarily with the grids used to solve the PDEs, one could also incorporate multiple discretization levels for the simulated paths, $v_t$. Since the main bottle neck of the algorithm tends to lie with the PDE stage, we leave the paths out of our MLMC scheme for simplicity.

\section{Numerical Examples}
\label{section:NumericalExamples}
In this section, we compare the performance of our algorithm to the standard LSMC algorithm for:
\begin{enumerate}
\item Heston model
\begin{align*}
dS_t = S_t(r \ dt + \sqrt{v_t} \ dW^S_t)\,, \qquad \text{and} \qquad
dv_t = \kappa (\theta - v_t ) \ dt + \eta \sqrt{v_t} \ dW^{v}_t \
\end{align*}
where $d[W^S,W^{v}]_t = \rho \ dt $.
\item Multi-dimensional Heston (mdHeston) model
\begin{align*}
dS^{(1)}_t = S^{(1)}_t(r \ dt + \sqrt{v^{(1)}_t} dW^{(1)}_t)\,,
&&
dv^{(1)}_t = \kappa_1(\theta_1 - v^{(1)}_t) dt + \eta_1 \sqrt{v^{(1)}_t} dW^{(3)}_t\,, \\
dS^{(2)}_t = S^{(2)}_t(r \ dt + \sqrt{v^{(2)}_t} dW^{(2)}_t)\,,
&&
dv^{(2)}_t = \kappa_2(\theta_2 - v^{(2)}_t) dt + \eta_2 \sqrt{v^{(2)}_t} dW^{(4)}_t \
\end{align*}
where $(W^{(i)})_{i=1}^4$ is a 4-dimensional Brownian motion with full correlation structure $\rho = [\rho_{i,j}]$.
\end{enumerate}
We price a Bermudan option with payoff function $h(S)$. In each case, we show results for time-zero prices and optimal exercise boundaries (OEBs). For the standard Heston model, we also implement an explicit finite difference method which we regard as a reference solution as opposed to a benchmark. Finally, we carry out the MLMC-FST tests discussed in Section \ref{section:complexity}.

\subsection{Procedures and Settings for the Main Algorithm and MLMC-FST tests}
We provide a brief procedure for our tests along with Tables \ref{table:basischoice}--\ref{table:LSMCPDEpar}
showing our option and model parameters.

\subsubsection{Main Algorithms}

\label{subsection:procedure}
\begin{enumerate}
\item Carry out $N_{trials}$ of the direct and low estimator for various grid resolutions and numbers of paths. Compute the mean prices and their standard deviations.
\begin{itemize}
\item For the Heston model, report the reference price from the finite difference scheme.
\end{itemize}
\item For each of the $N_{trials}$, using the coefficients from the direct estimator, compute the OEB. By averaging over each grid point, obtain the mean OEB.
\begin{itemize}
\item For the Heston model, report the reference OEB from the finite difference scheme.
\item For the multi-dimensional Heston model, report only certain slices of the average OEB.
\end{itemize}
\end{enumerate}

\subsubsection{MLMC-FST Level Tests}
\begin{enumerate}
\item For each level $l \in \{l_1,\ldots,l_n \}$, compute $\left|\E\left[ P_0^l(S_0) - P_0(S_0) \right]\right|$ over $N_{trials}$. The quantity $P_0(S_0)$ corresponds to the solution of the conditional PDE solved at a relatively high resolution, $N_{S,t}$.
\item For each level $l \in \{l_1,\ldots,l_n \}$, compute $\V \left[ Y_l(S_0) \right] = \E\left[ \left(Y_l(S_0) - \E\left[Y_l(S_0) \right]  \right)^2  \right]$ over $N_{trials}$.
\item For each level $l \in \{l_1,\ldots,l_n \}$, measure the expected CPU time for computing $C_l$, $\E\left[C_l\right]$ over $N_{trials}$.
\end{enumerate}

\begin{table}[h!]
\begin{center}
 \begin{tabular}{|c | c |}
 \hline
 Method & Basis Functions \\ [0.5ex]
 \hline
 LSMC  &    $  (S^{(i)})^m \cdot (S^{(j)})^n \cdot (v^{(k)})^p\cdot (v^{(l)})^q $  for \\
       &   $1 \leq i,j \leq d_S$, $1 \leq k,l \leq d_v$ and $h(S^{(1)},\ldots,S^{(d_S)})^r$, $m,n, p,q,r \in \N_0$  \\
 \hline
 LSMC/PDE   &   $(v^{(i)})^n \cdot (v^{(j)})^m$ for $1\leq i, j \leq d_v$, $n,m \in \N_0$ \\
 \hline
\end{tabular}
\end{center}
\caption{Choices of bases for both methods. We denote by $d_B$ the total number of basis functions used and $deg$ the highest power of the functions in our basis. }
\label{table:basischoice}
\end{table}

\begin{table}[h!]
\begin{center}
 \begin{tabular}{|c | c | c | c | c|}
 \hline
 Model Type & Payoff &   $T$  & $K$ & \text{Exercise Frequency}  \\ [0.5ex]
 \hline
 Heston   & $h(S) = (K-S)_+$  & 1  & $10$ & $T/12$  \\
 \hline
 mdHeston   & $h(S^{(1)},S^{(2)}) = \left(K-\max(S^{(1)},S^{(2)})\right)_+$  & 1  & $10$ & $T/12$  \\
 \hline
\end{tabular}
\end{center}
\caption{Bermudan Option Settings}
\label{table:optnparam}
\end{table}

\begin{table}[h!]
\begin{center}
 \begin{tabular}{|c | c | c | c | c | c | c|}
 \hline
$r$ & $S_0$ & $v_0$  &  $\kappa$ & $\theta$ & $\eta$ & $\rho$  \\ [0.5ex]
 \hline
 0.02 & $10$ & $0.15$  &  5 & 0.16 & 0.9 & $0.1$ \\
 \hline
\end{tabular}
\end{center}
\caption{Parameters used for the Heston model}
\label{table:parmod1}
\end{table}

\begin{table}[h!]
\begin{center}
 \begin{tabular}{| c | c | c | c | c | c | c | c | c | c | c |}
 \hline
 $r$ & $S^{(1)}_0$  & $v^{(1)}_0$ & $\kappa_1$ & $\theta_1$ & $\eta_1$  & $S^{(2)}_0$  & $v^{(2)}_0$ & $\kappa_2$ & $\theta_2$ & $\eta_2$ \\ [0.5ex]
 \hline
 0.025 & 10   & 0.45  & 1.52 & 0.45 & 0.4  & 10   & 0.3  & 1.3 & 0.30 & 0.43   \\
 \hline
\end{tabular}
\[
\begin{matrix}
[\rho_{i,j}]
\end{matrix}
=
\begin{bmatrix}
    1       &   \rho_{S^{1},S^{2}} & \rho_{S^1,v^1}  & \rho_{S^1,v^2} \\
    \rho_{S^2,S^1}  & 1        & \rho_{S^2,v^1}  & \rho_{S^2,v^2} \\
    \rho_{v^1,S^1}  &  \rho_{v^1,S^2} &     1  & \rho_{v^1,v^2} \\
    \rho_{v^2,S^1}  & \rho_{v^2,S^2}   & \rho_{v^2,v^1} &  1 \
\end{bmatrix}
=
\begin{bmatrix}
    1       &   0.2 &   -0.3   & -0.15 \\
    0.2     &     1 &   -0.11  & -0.35 \\
    -0.3    & -0.11 &    1     &  0.2 \\
    -0.15   & -0.35 &    0.2   &  1 \
\end{bmatrix}
\]

\end{center}

\caption{Parameters used for the Multi-Dimensional Heston model }
\label{table:parmod2}
\end{table}

\begin{table}[h!]
\begin{center}
 \begin{tabular}{|c c c|}
 \hline
 $N_{S,l}$  & $N_{S,t}$ &  $N_{trials} $    \\ [0.5ex]
 \hline
 $2^5, \dots , 2^9 $   & $2^{10}$    & $1 \ 000$ \\
 \hline
\end{tabular}
\end{center}
\caption{Parameters used in our MLMC-FST tests  }
\end{table}

\begin{table}[h!]
\begin{center}
 \begin{tabular}{|c c c c c c c |}
 \hline
 $N_{s}$ & $N_{v}$ & $N_{t}$ & $v_{min}$ & $v_{max}$ & $S_{min}$ & $S_{max}$     \\ [0.5ex]
 \hline
 $2^9$ & $2^7$ & $1 \ 000 \ 000$ & 0 & 1& 0 & 53   \\
 \hline
\end{tabular}
\end{center}
\caption{Parameters used for the explicit finite difference method. Boundary conditions are as specified in \cite{Ikonen}. }
\label{table:parfd}
\end{table}

\begin{table}[h!]
\begin{center}
 \begin{tabular}{|c c c|}
 \hline
Model Type  & $N_{S,v}$   & Basis Degree  \\ [0.5ex]
 \hline
 Heston     &  $500 \ 000$  & 3 \\
\hline
mdHeston   &  $500 \ 000$   & 4 \\
\hline
\end{tabular}
\caption{Number of paths and degree of bases used for each model with LSMC. An Euler discretization time step of $N_{step} = 1 \ 000$ was used }
\end{center}
\label{table:parsim}
\end{table}

\begin{table}[h!]
\begin{center}
 \begin{tabular}{|c c c c|}
 \hline
Model Type & $l$  & $N^{(l)}_{v}$     &   $N^{(l)}_{S}$        \\ [0.5ex]
 \hline
Heston/mdHeston & 0     &  $10 \ 000$  &   $2^5$           \\
\hline
Heston/mdHeston & 1     &  $1 \ 000$   &   $2^6$           \\
\hline
Heston/mdHeston & 2     &  $100   $   &   $2^7, 2^8, 2^9$  \\
\hline
\end{tabular}
 \begin{tabular}{|c c|}
 \hline
Model Type & Basis Degree        \\ [0.5ex]
 \hline
Heston     &  3    \\
\hline
mdHeston   &  3     \\
\hline
\end{tabular}

\begin{center}
 \begin{tabular}{| c c |}
 \hline
 $(\log S^{(i)}/S^{(i)}_0)_{min}$  & $(\log S^{(i)}/S^{(i)}_0)_{max}$ \\ [0.5ex]
 \hline
 $-3$  & $3$  \\
 \hline
\end{tabular}
\end{center}
\caption{MLMC, FST, and basis set up for LSMC-PDE algorithm. Multiple values of $N^{(2)}_S$ correspond to the  different trials that will be presented. These settings are applied to both the Heston and multi-dimensional Heston models. \label{table:LSMCPDEpar}}
\end{center}
\end{table}

\begin{table}[h!]
\begin{center}
 \begin{tabular}{| c | c | c | c | c | c |}
 \hline
 $N^{(2)}_{s}$   & Est. Type &   $0.95 \cdot  S_0$ & $ S_0$ & $1.05\cdot S_0 $      & Run Time (s)    \\ [0.5ex]
 \hline
 $2^{7}$  &  Direct       &  1.6747 (0.0011)   &  1.4541 (0.0012)   & 1.2598 (0.0013)  &  7     \\
 		  &  Low          & 1.6746 (0.0013)    &  1.4540 (0.0015) & 1.2597 (0.0016)  &  6    \\
\hline
 $2^{8}$  &  Direct       &  1.6739 (0.0013)   & 1.4534 (0.0014)  & 1.2591 (0.0015)      &  7 \\
 		  &  Low          &  1.6738 (0.0012)   & 1.4532 (0.0013)  &  1.2588 (0.0014)      &   6  \\
\hline
 $2^{9}$  &  Direct       &  1.6735 (0.0011)   & 1.4530 (0.0013)  & 1.2586 (0.0014)   &   7   \\
 		  &  Low          &  1.6735 (0.0014)   & 1.4529 (0.0016)  & 1.2585 (0.0017)   &   6  \\
 \hline
\end{tabular}
\end{center}
\caption{Resulting ATM and non-ATM price estimates for the Heston model LSMC-PDE. We show results for $(S_0,v_0)$ for various $S_0 \in \calS$ and a fixed $v_0$. Reference prices for $0.95 \cdot S_0, \  S_0, \ 1.05\cdot S_0$ are 1.6712, 1.4507, 1.2565, respectively.  The value in brackets correspond to standard deviations.}
\label{table:resultslsmcpde}
\end{table}

\begin{table}[h!]
\begin{center}
 \begin{tabular}{|c |c |c |c|}
 \hline
 Estimate Type  & $(S_0,v_0)$             & Run Time (s)  \\ [0.5ex]
 \hline
 Direct         &  1.4494 (0.0020)  &     47      \\
 Low            &  1.4487 (0.0023)  &     38      \\
 \hline
\end{tabular}
\end{center}
\caption{Resulting ATM price estimate for the Heston model using LSMC algorithm. The reference value obtained from finite difference is 1.4507.  The value in brackets correspond to standard deviations. }
\label{table:resultslsmc}
\end{table}

\begin{table}[h!]
\begin{center}
 \begin{tabular}{||c c c||}
 \hline
 Slope Type & Estimate     & $R^2$   \\ [0.5ex]
 \hline\hline
 $\alpha$      &  2.12    & 0.9993        \\
 $\beta$       &  4.16    & 0.9989         \\
 $\gamma$      &  0.30    & 0.9409       \\
 \hline
\end{tabular}
\end{center}
\caption{Resulting values for MLMC-FST tests for the Heston model. The parameters $\alpha, \beta, \gamma$ correspond to slopes of logarithmic bias, variance, and cost.  }
\label{table:fst1}
\end{table}

\begin{table}[h!]
\begin{center}
 \begin{tabular}{|c | c | c | c|}
 \hline
 $N^{(2)}_{s}$ &  Est Type &   $ ( S^{(1)}_0, S^{(2)}_0 ) $   & Run Times (s)  \\ [0.5ex]
 \hline
$2^{7}$        &  Direct       &  1.1853 (0.0055)        &     67              \\
		       &  Low          &  1.1852 (0.0049)        &     43              \\
 \hline
$2^{8}$        &  Direct       &  1.1834 (0.0055)        &     102              \\
	        	   &  Low          &  1.1833  (0.0049)       &      64               \\
\hline
$2^{9}$        &  Direct       &  1.1836 (0.0057)        &       177            \\
		       &  Low          &  1.1830 (0.0060)        &     105              \\
 \hline
\end{tabular}
\end{center}

\begin{center}
 \begin{tabular}{|c | c | c | c|c | c|}
 \hline
 $N^{(2)}_{s}$ &  Est Type &   $ ( 0.95\cdot S^{(1)}_0, S^{(2)}_0 ) $ &   $ ( 1.05\cdot S^{(1)}_0, S^{(2)}_0 ) $  &   $ ( S^{(1)}_0, 0.95 \cdot S^{(2)}_0 ) $ & $ (S^{(1)}_0, 1.05 \cdot S^{(2)}_0 ) $   \\ [0.5ex]
 \hline
$2^{7}$        &  Direct       &  1.2526 (0.0056)    &  1.1209 (0.0054)     & 1.2846 (0.0056)  & 1.0931 (0.0054)         \\
		       &  Low          &  1.2525 (0.0050)    &  1.1208 (0.0049)     & 1.2845 (0.0050)  & 1.0940 (0.0048)    \\
\hline
$2^{8}$        &  Direct       &  1.2506 (0.0055)    &   1.1191 (0.0054)    & 1.2826 (0.0056)  & 1.0914 (0.0054)     \\
	        	   &  Low          &    1.2506 (0.0050)  &   1.1190 (0.0049)    & 1.2825 (0.0050)  & 1.0913 (0.0048)   \\
\hline
$2^{9}$        &  Direct       &  1.2508 (0.0058)    & 1.1194 (0.0057)      & 1.2828 (0.0058)  &  1.0916 (0.0056)   \\
		       &  Low          &  1.2502 (0.0061)    & 1.1187 (0.0059)      & 1.2821 (0.0061)  &  1.0910  (0.0059)   \\
 \hline
\end{tabular}
\end{center}
\caption{Resulting ATM and non-ATM price estimates for LSMC-PDE. We show results for $(S_0,v_0)$ for various $S_0 \in \calS$ and a fixed $v_0$.  The value in brackets correspond to standard deviations. }
\label{table:resultslsmcpdeprml2d}
\end{table}

\begin{table}[h!]
\begin{center}
 \begin{tabular}{|c| c |c |c|}
 \hline
 Estimate Type & $(S_0,v_0)$   &  Run Times (s)              \\ [0.5ex]
 \hline
 Direct         &  1.2121   (0.0016)    & 83        \\
 Low            &  1.1765   (0.0018)    & 73      \\
 \hline
\end{tabular}
\end{center}
\caption{Resulting ATM price statistics for the LSMC algorithm. The value in brackets correspond to standard deviations.}
\label{table:resultslsmc2d}
\end{table}

\begin{table}[h!]
\begin{center}
 \begin{tabular}{|c c c|}
 \hline
 Slope Type & Estimate     & $R^2$   \\ [0.5ex]
 \hline
 $\alpha$      &  2.03    & 0.9995        \\
 $\beta$       &  2.63    & 0.9640         \\
 $\gamma$      &  1.65    & 0.9785       \\
 \hline
\end{tabular}
\end{center}
\caption{Resulting values for MLMC-FST tests for the multi-dimensional Heston model. The parameters $\alpha, \beta, \gamma$ correspond to slopes of logarithmic bias, variance, and cost.  }
\label{table:fst2}
\end{table}

\clearpage

\subsection{Discussion of Results}

We provide a discussion of the results from the two examples we investigated above. We first provide example-specific comments for each model, followed by comments which apply to both examples.

\subsubsection{Heston model}

Our choice of parameters for this problem are borrwed from \cite{Ikonen} with a lower value for $r$, the risk free rate, and higher maturity. We also differ in that we value a Bermudan option instead of an American.

Based on Figure \ref{fig:lsmcpdeprmlbdy}, we see the hybrid algorithm's errors appear along the interface of the holding and exercise regions and it is able to closely approximate the true boundary across all exercise dates. The consistency arises from the comparison of the exercise value and continuation value being broken into a collection of cross-sectional comparisons for each $S \in \calS$ as discussed in Section \ref{section:LSMC-PDEalg}. At time $n$, for each $S' \in \calS$, we have an estimate of the function $C^N_n(S',v)$ which is monotonically increasing in $v$. Our goal is then to locate the points $v^* \in \R$ such that $C^N(S',v^*) = h(S')$ which is simple to compute since $h(S')$ is a deterministic constant. We do, however, see some noise in determining $v^*$ across trials which is due to randomness in $C^N(S',\cdot)$. From Figures \ref{fig:tvrbdy}, we see that the LSMC boundary is not particularly accurate for all exercise dates, with greater accuracy at the later dates. This is due to the accumulation of errors as one moves recursively through time and the lack of paths that are in-the-money at earlier exercise dates.

As indicated in Tables \ref{table:resultslsmcpde} and \ref{table:resultslsmc} the hybrid algorithm is better able to estimate time-zero prices, in comparison to the reference value. This accuracy in our prices follows from the quality of our optimal exercise boundaries (OEBs). We note that the direct estimator and low estimator are fairly close to each other and suggest that the low estimator is somewhat redundant in this case. The full LSMC algorithm also provides a direct estimate of the price that is also close to the low estimator, however, this price is biased too low as evidenced by the reference value. Since the low estimate of the LSMC-PDE algorithm is higher than the low estimate of the LSMC algorithm, by definition it is better. We also note the hybrid algorithm provides estimates for other $S_0 \in \calS$ and fixed $v_0$, as opposed to a single point in LSMC. The non-ATM price estimates are of the same quality as the ATM estimate.

For run-times, we see that regardless of the choice of $N^{(2)}_{\calS}$, the run times remain effectively unchanged which implies the $N^{(2)}_v = 100$ conditional PDEs solved at $N^{(2)}_{\calS}$ are highly out-weighed by the work done at resolutions $N^{(0)}_{\calS}, N^{(1)}_{\calS}$. We also see the direct estimator takes an extra second compared to the low estimator, roughly implying that the multi-grid PDE stage takes approximately $86\%$ of the run time.

Lastly, there is some discrepancy in the prices obtained by finite differences and the LSMC-PDE algorithms, mostly in the third decimal place. This discrepancy is likely due to the bias introduced from path discretization, interpolation of time-zero prices, and spatial discretization.

It is worth noting that the PDE approach is best for this $2d$ example as it has lower complexity. This example is used to demonstrate the performance of the LSMC-PDE algorithm in a fully observable setting.

\subsubsection{Multi-dimensional Heston model}

As our OEBs are four-dimensional objects, they are not fully observable; hence we show $v$-slices of the OEB, as we have some intuition about how they should look. In Figures \ref{fig:slice1}, \ref{fig:slice2}, \ref{fig:slice3}, we show three types of slices $v_1 = \theta_1, v_2 = \theta_2$, the `central slice', along with two other off-center slices. For brevity we show only the last, middle, and first exercise dates.

Overall the hybrid and LSMC algorithm have comparable precision in determining the region over 100 trials. However, we note that the boundaries produced by the LSMC-PDE algorithm are considerably different, just as in the standard Heston model. In looking at the price estimates  in Tables \ref{table:resultslsmcpdeprml2d}, \ref{table:resultslsmc2d}, we see that the lower estimates produced by the hybrid algorithm are generally lower, and so we are more inclined to trust the OEBs produced by the hybrid algorithm.

In terms of price estimates, the tightness of the direct and low estimator for LSMC-PDE suggest that, again, the low estimator is somewhat redundant. In this case, for LSMC, the direct estimator is considerably higher than the low estimator; indicating a relatively high degree of bias introduced by the regression scheme. We also note that, again for the hybrid algorithm, we are able to get accurate non-ATM price estimates. This suggests LSMC boundary is consistently wrong in many places.

Looking at run times, we see that unlike in the standard Heston model, the PDEs solved at resolution $N^{(2)}_{\calS}$ weigh significantly on the run times. Also, for the direct estimator, we see that the PDE stage typically takes approximately $63 \%$ of the run time, indicating an increase in regression costs compared to the standard Heston model.

\subsubsection{Overall Comments}
%

Our choice of MLMC scheme was for simplicity and conservativeness, and hence not optimized, as alluded to before.

One issue that arises from our results is the question of whether the hybrid algorithm is more efficient. For the Heston model, we see that the hybrid algorithm provides substantial variance reduction in time zero prices and OEBs especially given that LSMC was allowed a larger computational budget. It is interesting to note, however, that while the standard deviations for prices are relatively close to each other, the OEBs for LSMC ``seem" much noisier than the hybrid algorithm's, although this is based on subjective examination. For the multi-dimensional Heston model, we note that the variance of the hybrid algorithm's prices, for a comparable computational budget, is higher, although their boundaries have very similar levels of noisiness. This again points to a disconnect in the variance of the two estimated objects. On the other hand, it is important to note that for the hybrid algorithm, we obtain prices for all $S \in \calS$ and not simply the ATM point, as in standard LSMC. Having the prices for these $S \in \calS$ also allows us to compute the $\Delta$ and $\Gamma$ of the option without any extra simulations. As a result, the hybrid algorithm provides considerably more information than standard LSMC. Also, the information provided by the hybrid algorithm tends to be more accurate in terms of quality of the boundary and bias, as noted before.

Next, we turn to our MLMC-FST tests. From Table \ref{table:fst1} and \ref{table:fst2} and Figures \ref{fig:1p1B} to \ref{fig:2p2V}, by the GMLT it is suggested, since $\alpha > \frac{1}{2} \min(\beta,\gamma)$, and $\beta > \gamma$, that optimal complexity is attainable. That is, there exists a collection of levels and a path allocation such that one can obtain a RMSE of $O(\e)$ with cost $O(\e^{-2})$ (up to a certain time-step resolution). This, again, is for our single-period computations in isolation from the entire hybrid algorithm. As one moves from the $2d$ example to the $4d$ example, we see that the accuracy across levels, measured by $\alpha$, remains unchanged. On the other hand, the variances and costs increase considerably as we increased the number of volatility processes and dimension of $S$. These results suggest that we may not conclude optimal complexity, via the GMLT, in the case when $d_S = 3$,  $d_v = 3$ analogue, although we leave this investigation to future study.

All run times were measured on a 3.60 GHz Intel Xeon CPU using  \textsc{Matlab} 2016. For the results in Tables \ref{table:resultslsmcpde} to \ref{table:resultslsmc2d} all run-times were measured by repeating the following process three times: clearing the memory, running the algorithm, and measuring the run time. The reported run time was then taken to be the maximum of the three.

\section{Conclusions}

By combining conditional PDEs with the LSMC approach of \cite{TV}, we have developed an algorithm that improves traditional LSMC methods in the context of SV models for medium dimensional problems. The hybrid algorithm also provides an extension of PDE methods beyond lower dimensional problems, at the expense of variance and a semi-global solution.

From a theoretical perspective we provide a proof of almost-sure convergence which uses geometric arguments and requires a few modifications to the basis, regression matrix, and payoff function. The proof is also highly dependent on separability of the model. Future research into theoretical properties of the algorithm ought to search for a proof with more natural assumptions and farther reach in terms of types of SV models. Another important point is that the proof is for a stylized form of the algorithm where $\calS$ is a continuum rather than a finite grid of points. It seems a more appropriate proof would make assumptions on the properties of the conditional expectations, viewing them as numerical solutions of the conditional PDE. As discussed in Section \ref{section:complexity}, a major question is quantifying the rate of convergence which essentially requires a Central Limit Theorem type result as one component. To obtain this result, it would appear that our separation technique will not carry over, and so one would likely again require a more sophisticated treatment of the conditional expectation functions. Our treatment in Section \ref{section:complexity} essentially assumes a CLT holds. With a quantification of the error associated to the regression scheme, one can then quantify the entire algorithm's complexity; a topic for future work. Also, as alluded to before, there is a need to find an optimal allocation scheme for the MLMC component of the algorithm and prove convergence under this scheme.

There is also the question of developing duality-based high biased estimators as in \cite{duality2} and \cite{duality1}. Such estimates would allow us to form proper confidence intervals for our time-zero prices.

In terms of computing sensitivities, as mentioned before, the details may be found in \cite{Farahany}; one can also find applications to the double Heston model (\cite{dHeston}), a mean-reverting commodity model with jumps, and concrete derivations of conditional PDEs.

\clearpage

\section{Bibliography}

\bibliographystyle{chicago}
\bibliography{lsmc-pde-biblio}
\nocite{*}


\appendix

\section{Formal Algorithm Descriptions}
\label{app:algstate}

\subsection{LSMC-PDE Algorithm}

\begin{algorithm*}[h!]
  \caption{LSMC-PDE: Exercise Boundaries and Direct Estimator}\label{alg:desc}
  \begin{algorithmic}[1]

      \State Simulate $N$ paths of $v_t$

      \For{$n = M:2$}

      \If{$n == M$}
        \State Set $V_n(s_i,v) = h_n(s_i)$
        \Comment{For $i = 1,...,N^{d_S}_s$}

      \EndIf  \Comment{Initialize boundary for PDE solver}

        \For{$j = 1:N$} \label{forloop}
            \State Compute $\E\left[ \left.V_n(S_{n},v^j_{n})\right. \mid  \left.S_{n-1} = s_i, [v^j]_{n-1}^{n}\right. \right]$ \Comment{For $i=1,...,N^{d_S}_s$}

        \EndFor

        \For{$i=1:N^{d_S}_s$} \label{regstep}
          \State Regress $\{ \ e^{-r\Delta t}\E \left[ \left.V_n(S_{n},v^j_{n})\right. \mid  \left.S_{n-1} = s_i, [v^j]_{n-1}^{n}\right.  \right] \ \}_{j=1}^{N}$ onto $\{\phi_m(v)\}_{m=1}^{d_B} $

          \Comment{Obtain $[ a_{n-1}(s_i)]_{i=1}^{N^{d_S}_s} $ }

          \State Set $C_{n-1}(s_i,v) = a_{n-1}(s_i)\cdot \phi(v)$

        \EndFor \Comment{Obtain a matrix of dimension $N^{d_S}_s \times d_B$}

        \State Set $V_{n-1}(s_i,v) = \max(h_{n-1}(s_i), C_{n-1}(s_i,v) )$ \Comment{For $i=1,...,N^{d_S}_s$}

      \EndFor

      \For{$j=1:N$}

      \State Compute $e^{-r\Delta t}\E\left[ \left.V_1(S_{1},v^j_1)\right. \mid \left. S_{0} = s_i, [v^j]_{0}^{1}\right. \right]$
      \Comment{For $i=1,...,N_s^{d_S}$}
      \EndFor

      \State Set $V_{h,0}(s_i,v_0) = \max\left( h_{0}(s_i),\frac{1}{N}\sum_{i=1}^{N}e^{-r\Delta t}\E \left[ \left.V_1(S_{1},v^i_1)\right. \mid   \left.S_{0} = s_i, [v^i]_{0}^{1}\right. \right] \right)$

      \Comment{For $i=1,...,N_s^{d_S}$}

      \Comment{High estimate of the time-zero prices}
      \State \textbf{return} $V_{h,0}(s_i,v_0)$ for $i=1,...,N^{d_S}_s$ 

  \end{algorithmic}
\end{algorithm*}

\begin{algorithm*}[h!]
  \caption{LSMC-PDE: Lower Estimator}\label{alg:desc2}
  \begin{algorithmic}[1]

      \State Simulate $N_{v,2}$ paths of $v_t$
      \For{$j=1,...,N_{v,2} $}
      		\For{$n=M:2$}
      			\If{$n == M$}
        			\State Set $V^j_n(s_i,v^j_{M}) = h_n(s_i)$
        			\Comment{For $i=1,...,N_s^{d_S}$}
        		\EndIf

        		\State Compute $U^j_{n-1}(s_i) = e^{-r\Delta t} \E[V_n(S_n,v^j_n) \ | \ S_{n-1} = s_i, [v^i]_{n-1}^n  ]$
        		\State Set $V^j_{n-1}(s_i) = U^j_{n-1}(s_i) I_{\Gamma_n}(s_i,v^j_{n-1}) + h(s_i)I_{\Gamma^c_n}(s_i,v^j_{n-1})  $
        		\Comment{For $i=1,...,N_s^{d_S}$}
        	\EndFor

        	\State Set $V^j_{l,0}(s_i) = e^{-r \Delta t} \E[V_1(S_1,v^j_1) \ | \ S_{0} = s_i, [v^i]_{n-1}^n  ]$
        	\Comment{For $i=1,...,N_s^{d_S}$}

      \EndFor
      \State Set $V_{l,0}(s_i) = \frac{1}{N_{v,2}} \sum_{j=1}^{N_{v,2}} V^j_{l,0}(s_i)$
       \Comment{For $i=1,...,N_s^{d_S}$}
    \State \textbf{return} $V_{l,0}(s_i,v_0)$ for $i=1,...,N^{d_S}_s$ 
    \end{algorithmic}
\end{algorithm*}

\subsection{MLMC/Multi-Grids}
We provide guidelines for computing $a_n(S)$ over the time interval $[t_n,t_{n+1}]$ in a manner that keeps the number of interpolations to a minimum. Given our coefficient matrix $a_{n+1}(S)$ at level $L$, we carry out the following.

\begin{algorithm*}[h!]
  \caption{MLMC/Multi-Grids For $a_n(S)$}\label{alg:MLMC}
  \begin{algorithmic}[1]
      \For{$l = 0:L-1$}
      		\State Interpolate $a_{n+1}(S)$ from resolution level $L$ to level $l$.		
      \EndFor
      \For{$l = L-1:0$}
      		\For{$j=1:N^{v}_l$}
      			\State Generate boundary conditions for the PDEs to be solved at resolutions $l$ and $l+1$.

      			\Comment Requires computing the matrix $a_n(S)\cdot \phi(v_n^{j})$ using $a_n(S)$ at grid level $l, \ l+1$.

      			\State Compute $P^{j,l}_n(S)$ and $P^{j,l+1}_n(S)$ via a numerical PDE technique.

      		\EndFor

      		\State Compute $\frac{1}{N^{v}_l}\sum_{j=1}^{N^{v}_l} \phi(v^{j,l}_n) P^{j,l}_n(S) $ and $\frac{1}{N^{v}_l}\sum_{j=1}^{N^{v}_l} \phi(v^{j,l}_n) P^{j,l+1}_n(S) $
      		\State Interpolate both to grid level $L$
      		\State Compute $\frac{1}{N^{v}_l}\sum_{j=1}^{N^{v}_l} \phi(v^{j,l}_n) \Big( P^{j,l+1}_n(S) - P^{j,l}_n(S) \Big)$
      \EndFor

      Repeat the above procedure for level $0$ and compute $a_n(S)$ which is defined on grid level $L$ as in (\ref{eqn:MLMCcoeff}).

  \end{algorithmic}
\end{algorithm*}

\pagebreak

\pagebreak

\section{Optimal Exercise Boundaries}
\label{section:OEB}
\vspace{5cm}
\begin{figure}[H]
\centering
	\includegraphics[width = 0.85\textwidth]{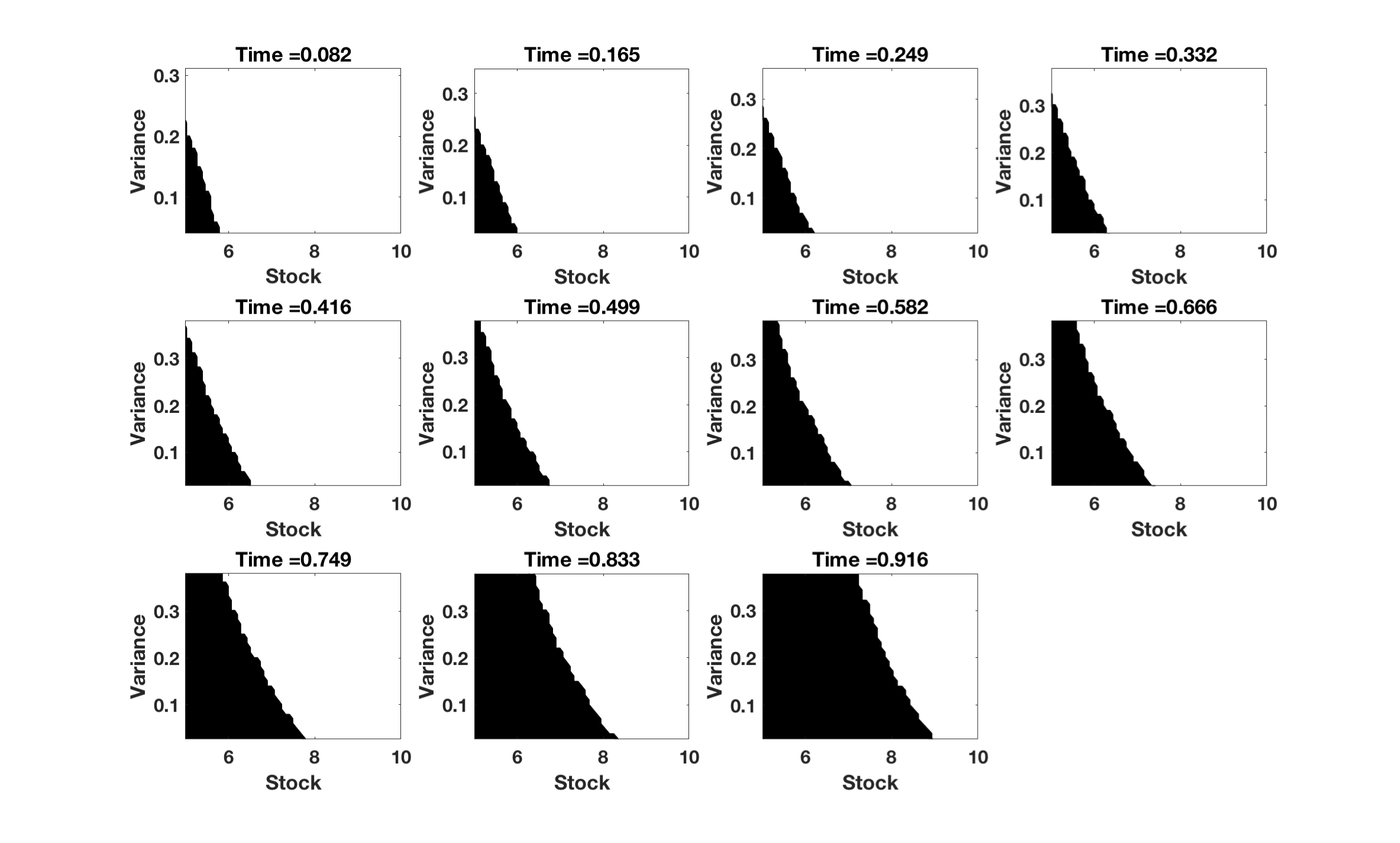}
  \caption{Reference Boundary Obtained by Finite Difference. The black regions indicate the exercise areas, and the white regions indicate the holding areas. }
  \label{fig:fdbdy}
\end{figure}

\pagebreak
\begin{figure}[H]
\centering
	\includegraphics[width = 0.85\textwidth]{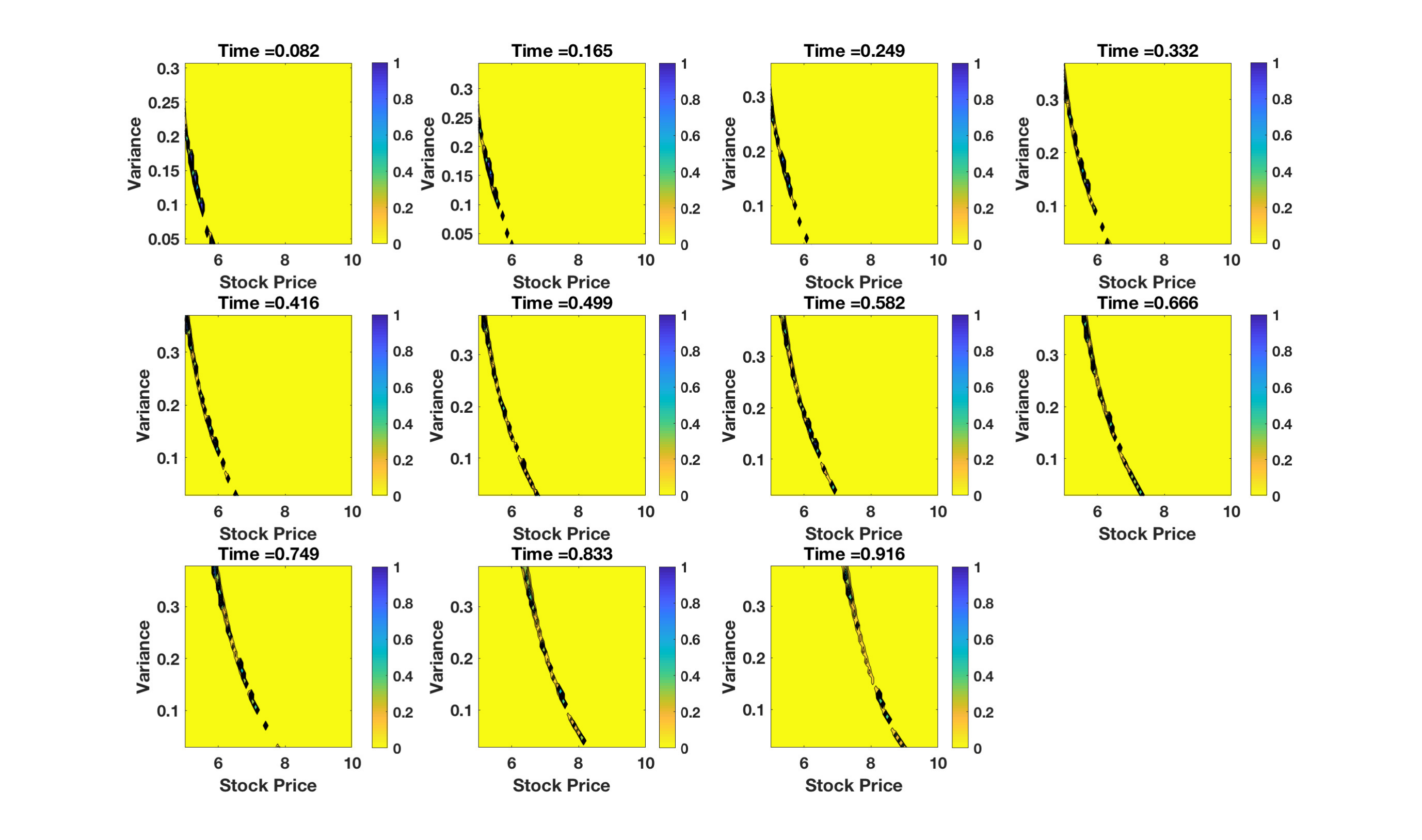}
  \caption{Difference boundary for the LSMC-PDE algorithm corresponding to $N^{(2)}_S = 2^8$. Yellow regions indicate correctness with probability 1, blue regions indicate incorrectness with probability 1. }
  \label{fig:lsmcpdeprmlbdy}
\end{figure}

\begin{figure}[H]
\centering
	\includegraphics[width = 0.85\textwidth]{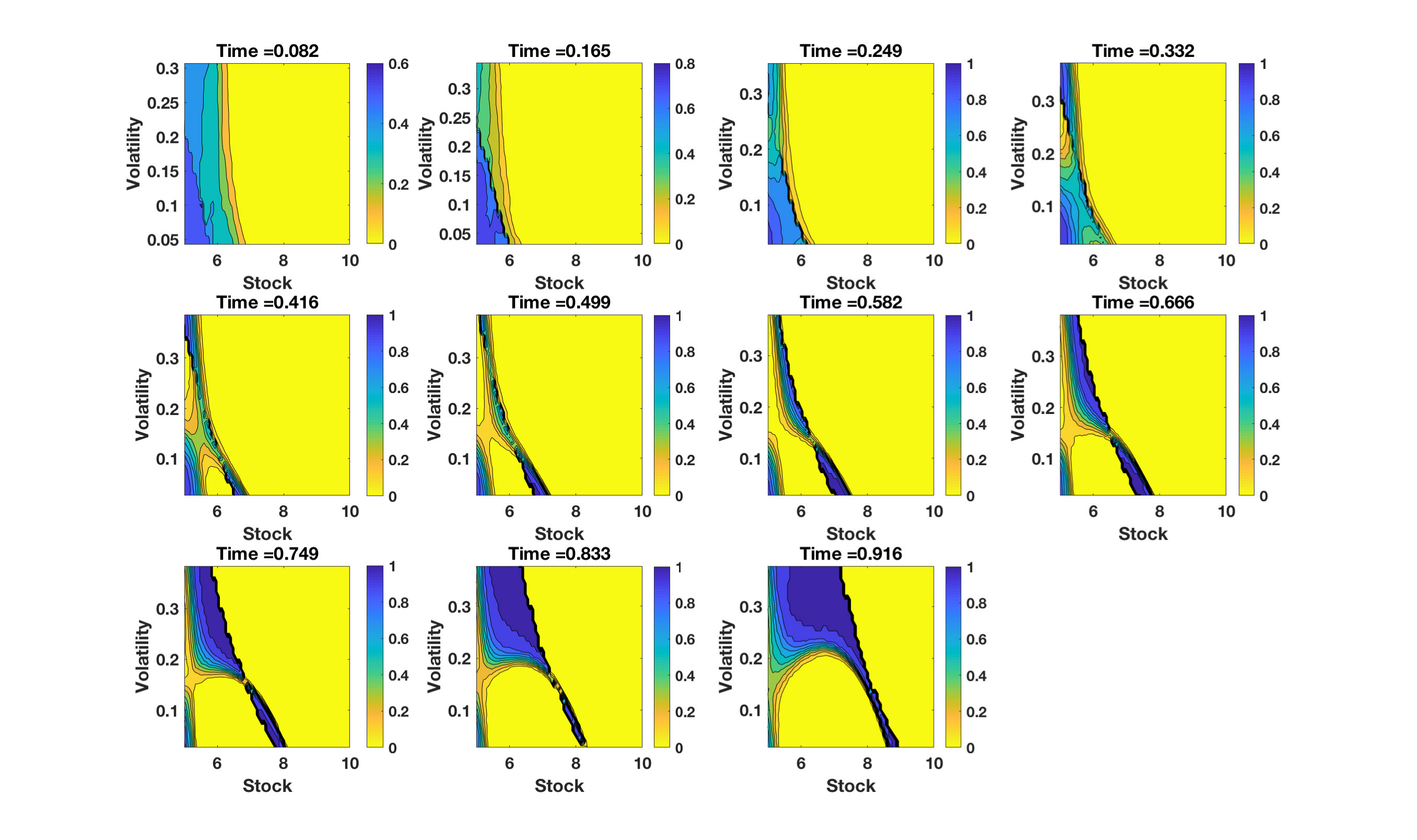}
  \caption{Difference boundary for the LSMC algorithm. Yellow regions indicate correctness with probability 1, blue regions indicate incorrectness with probability 1. }
  \label{fig:tvrbdy}
\end{figure}

\newpage

\begin{figure}[H]
\centering
\begin{minipage}{0.49\textwidth}
	\includegraphics[width=\textwidth]{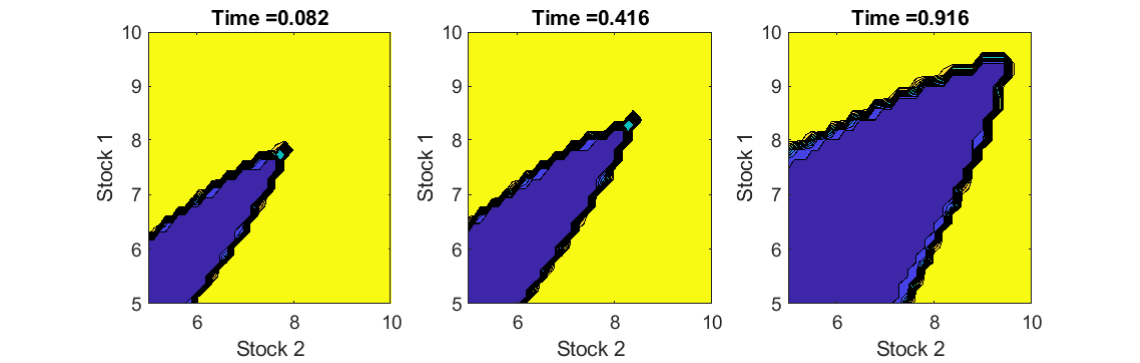}
	\includegraphics[width=\textwidth]{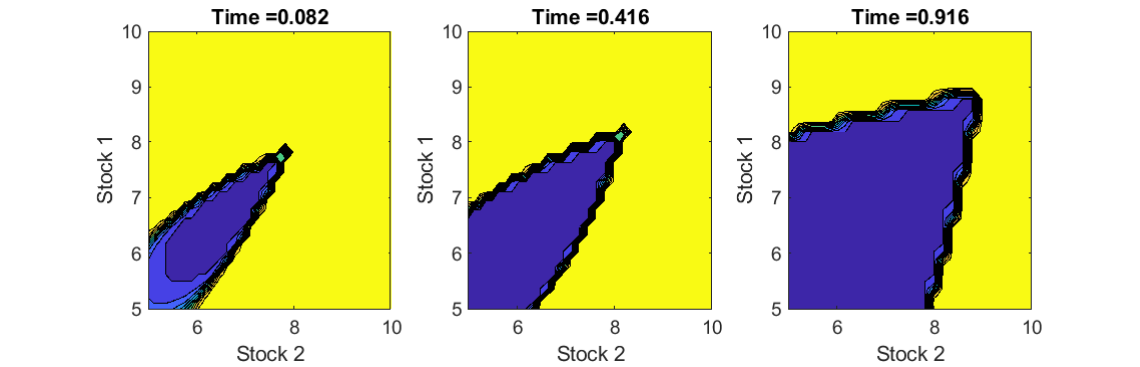}
\end{minipage}
\begin{minipage}{0.49\textwidth}
  \caption{A slice of the boundaries when $v_1 = \theta_1, v_2 =  \theta_2$. Yellow regions indicate holding with probability 1, blue regions indicate exercising with probability 1. The top and bottom figures correspond to LSMC-PDE and LSMC, respectively. The resolution for LSMC-PDE is $N_{\cal{S} }^{(2)} = 2^8$.}
  \label{fig:slice1}
\end{minipage}
\end{figure}

\begin{figure}[H]
\centering

\begin{minipage}{0.49\textwidth}
	\includegraphics[width=\textwidth]{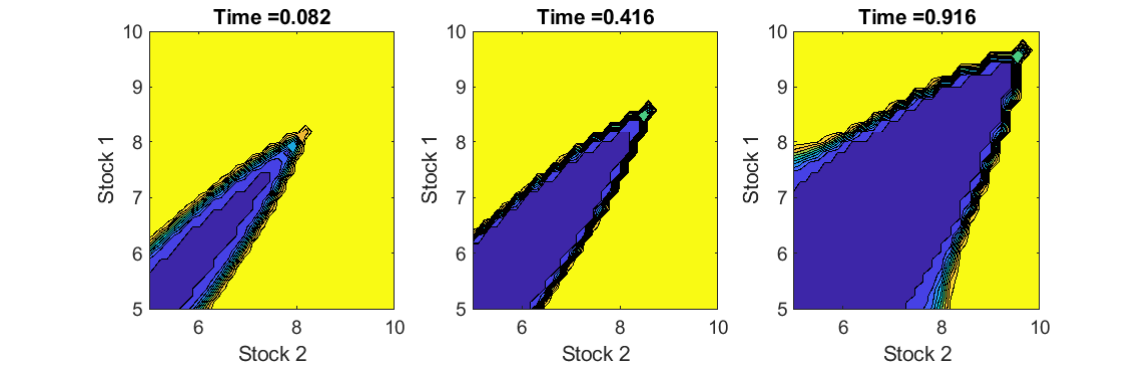}
	\includegraphics[width=\textwidth]{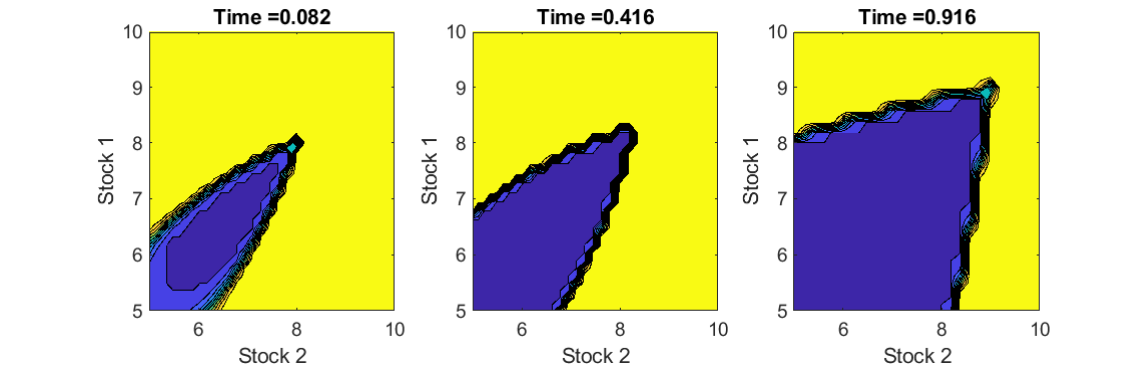}
\end{minipage}
\begin{minipage}{0.49\textwidth}
  \caption{A slice of the boundaries when $v_1 = 0.5 \cdot \theta_1, v_2 =  \theta_2$. Yellow regions indicate holding with probability 1, blue regions indicate exercising with probability 1. The top and bottom figures correspond to LSMC-PDE and LSMC, respectively. The resolution for LSMC-PDE is $N_{\cal{S} }^{(2)} = 2^8$. }
  \label{fig:slice2}
\end{minipage}
\end{figure}

\begin{figure}[H]
\centering
\begin{minipage}{0.49\textwidth}
	\includegraphics[width=\textwidth]{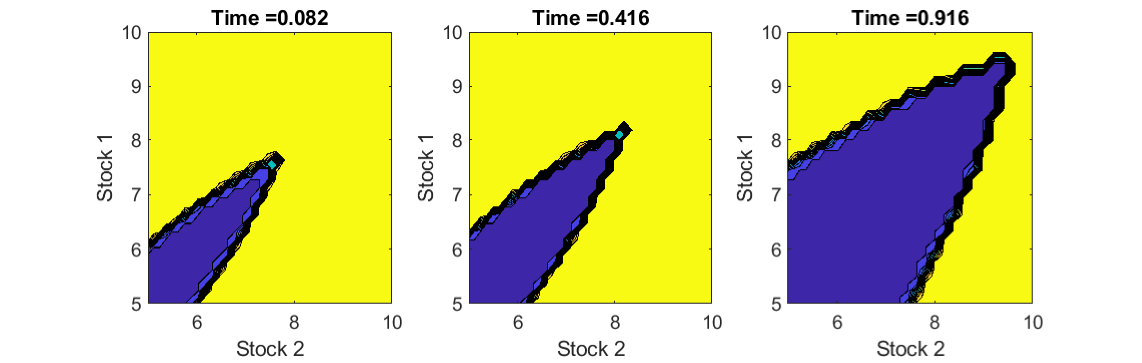}
	\includegraphics[width=\textwidth]{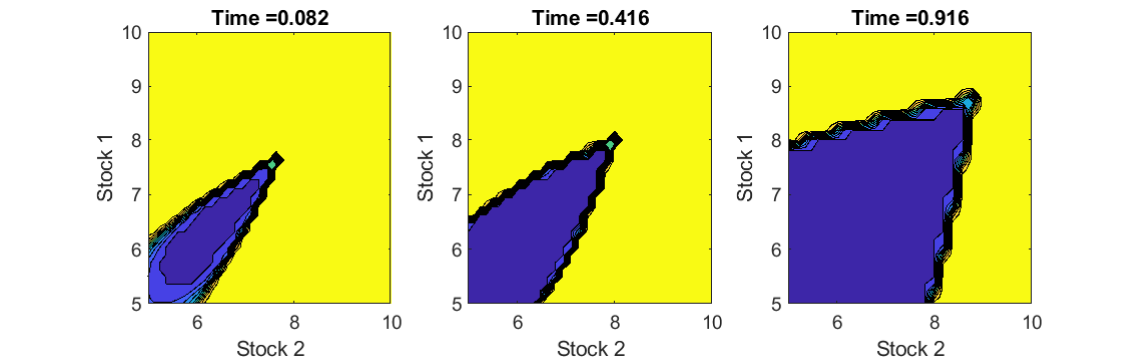}
\end{minipage}
\begin{minipage}{0.49\textwidth}
  \caption{A slice of the boundaries when $v_1 =  \theta_1, v_2 = 1.5 \cdot \theta_2$. Yellow regions indicate holding with probability 1, blue regions indicate exercising with probability 1. The top and bottom figures correspond to LSMC-PDE and LSMC, respectively. The resolution for LSMC-PDE is $N_{\cal{S} }^{(2)} = 2^8$.}
  \label{fig:slice3}
\end{minipage}
\end{figure}

\newpage

\section{MLMC-FST Test Plots}

\begin{figure}[!htb]
\minipage{0.32\textwidth}
  \includegraphics[width=\linewidth]{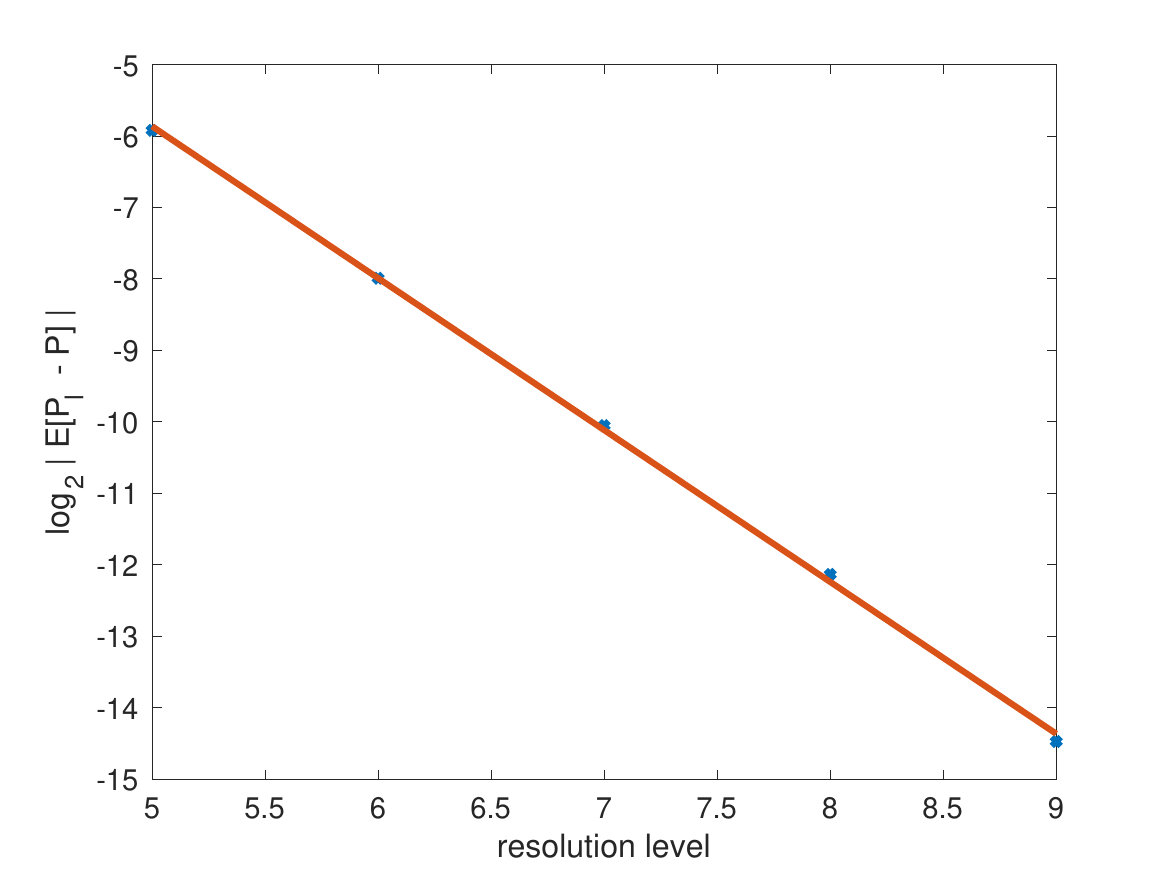}
  \caption{Bias on each resolution level. The slope of the plot is denoted as $\alpha$. }\label{fig:1p1B}
\endminipage\hfill
\minipage{0.32\textwidth}
  \includegraphics[width=\linewidth]{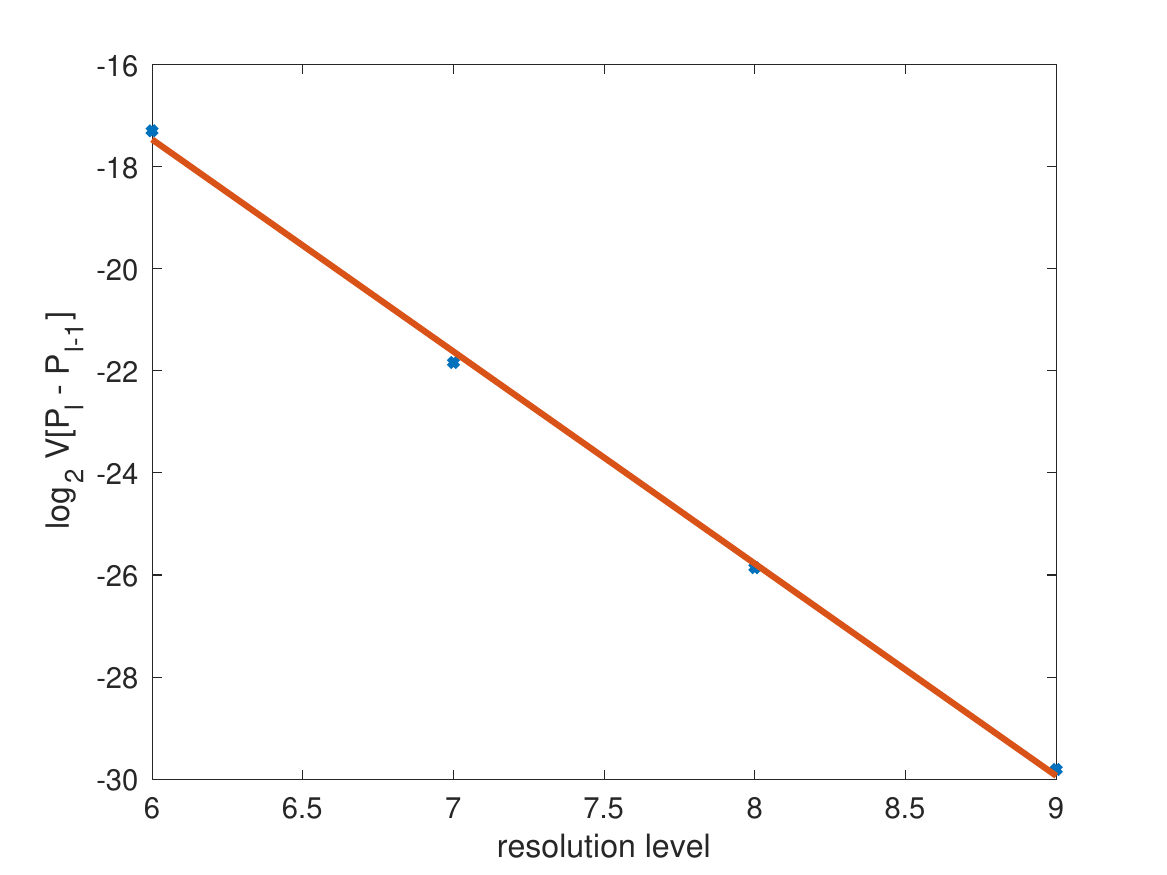}
  \caption{Variance on each resolution level. The slope of the plot is denoted as $\beta$.}\label{fig:1p1V}
\endminipage\hfill
\minipage{0.32\textwidth}%
  \includegraphics[width=\linewidth]{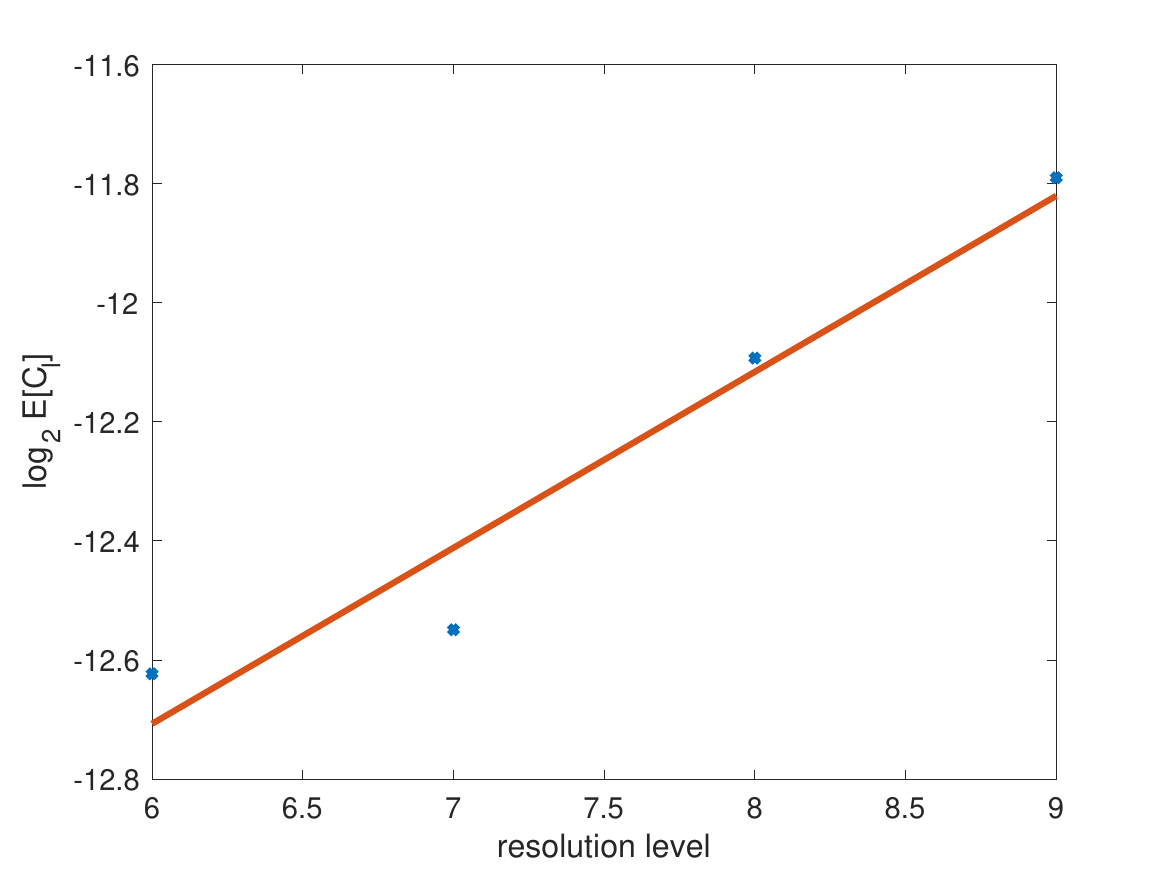}
  \caption{Expected cost on each resolution level. The slope of the plot is denoted as $\gamma$.}\label{fig:1p1C}
\endminipage
\caption{Bias, variance, and expected cost on each level for the Heston model. Blue dots represent estimates, red lines represent the line of best fit.}
\end{figure}

\begin{figure}[!htb]
\minipage{0.32\textwidth}
  \includegraphics[width=\linewidth]{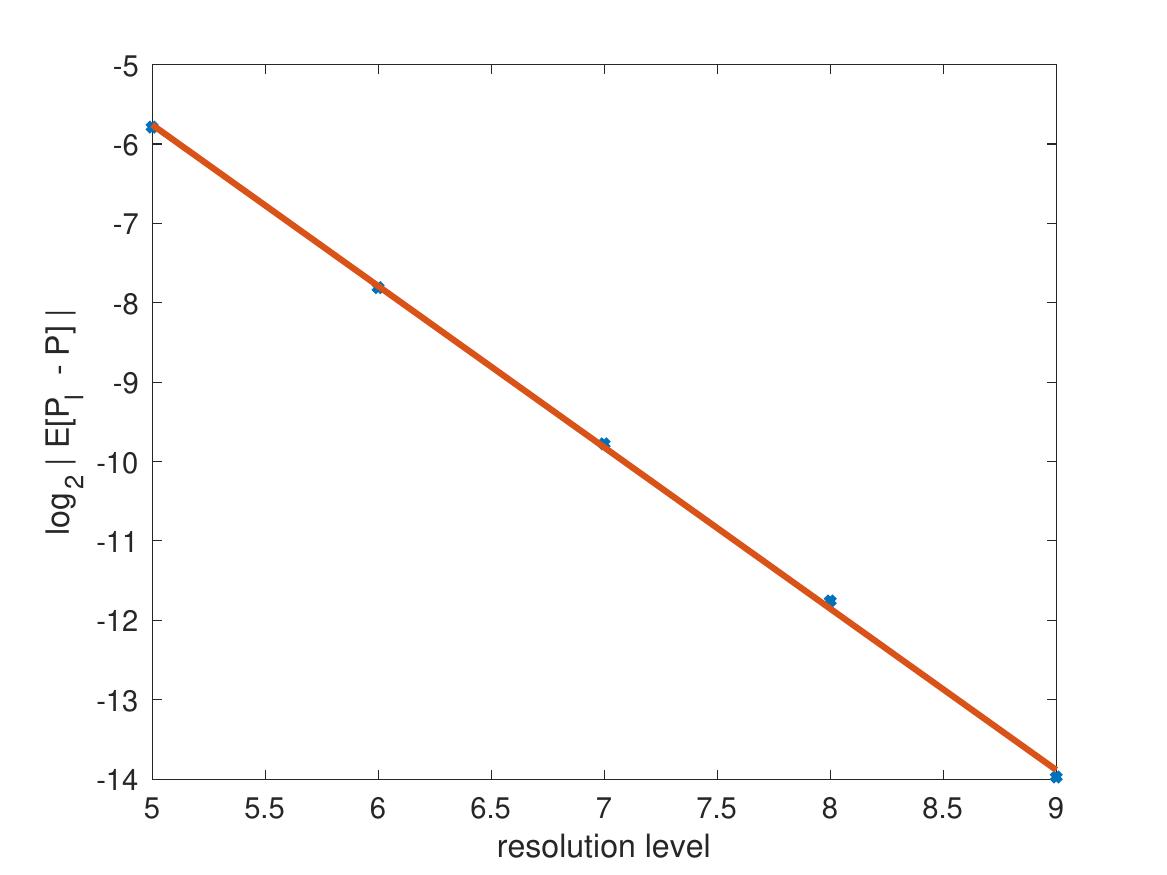}
  \caption{Bias on each resolution level. The slope of the plot is denoted as $\alpha$.}\label{fig:2p2B}
\endminipage\hfill
\minipage{0.32\textwidth}
  \includegraphics[width=\linewidth]{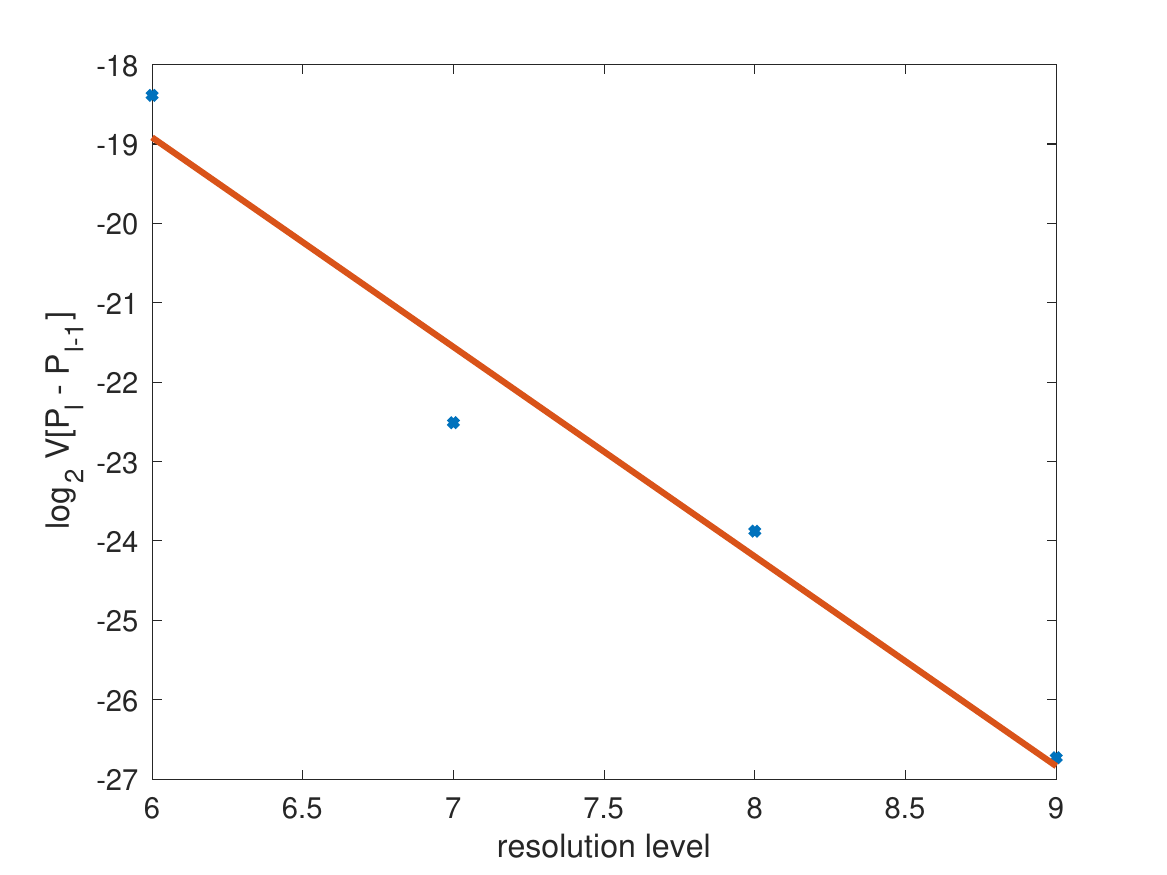}
  \caption{Variance on each resolution level. The slope of the plot is denoted as $\beta$}\label{fig:2p2V}
\endminipage\hfill
\minipage{0.32\textwidth}%
  \includegraphics[width=\linewidth]{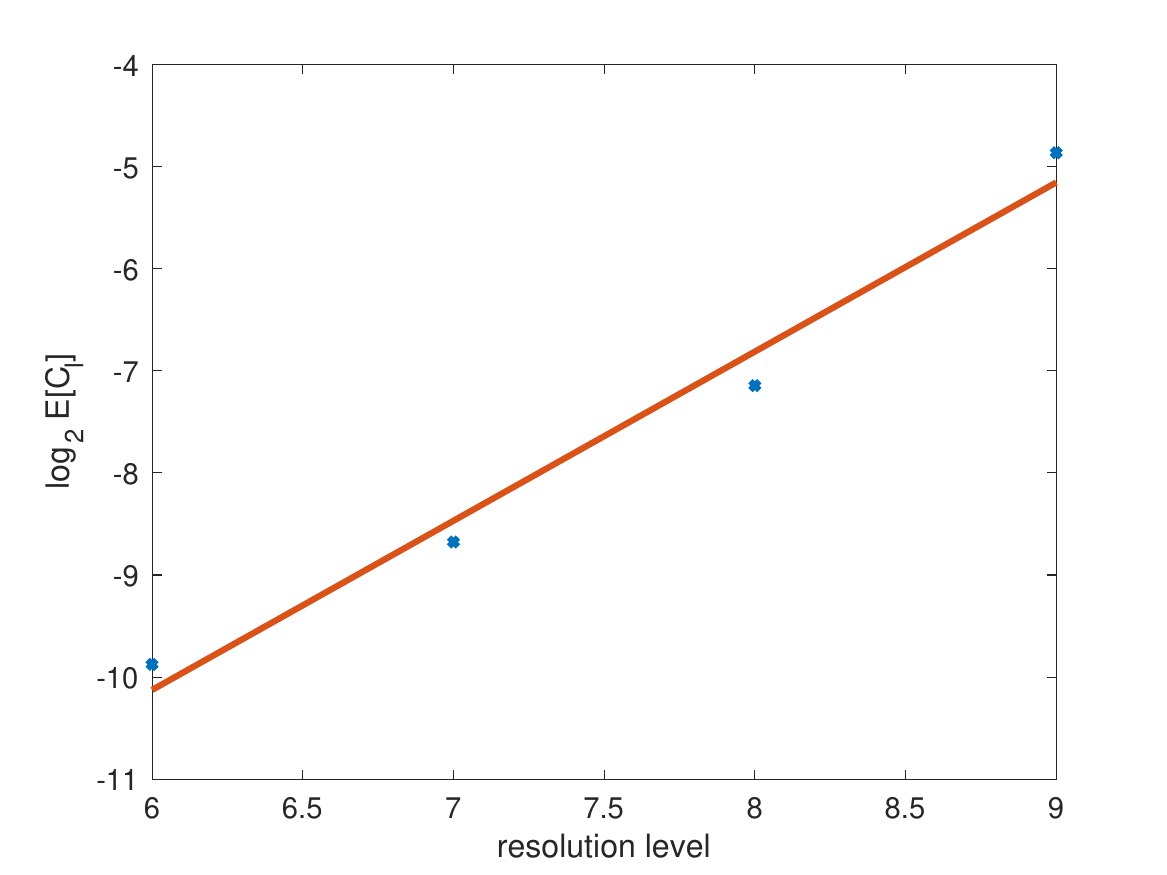}
  \caption{Expected cost on each resolution level. The slope of the plot is denoted as $\gamma$.}\endminipage
\caption{Bias, variance, and expected cost on each level for the Multi-Dimensional Heston model. Blue dots represent estimates, red lines represent the line of best fit.}
\label{fig:2p2C}
\end{figure}

\end{document}